\documentclass[11pt,numbers,sort&compress]{elsarticle}
\journal{}

\makeatletter
\def\ps@pprintTitle{%
 \let\@oddhead\@empty
 \let\@evenhead\@empty
 \def\@oddfoot{\hfill\thepage}%
 \let\@evenfoot\@oddfoot}
\makeatother

\usepackage{amsmath,amssymb,amsfonts,amsthm,graphicx,hyperref}
\usepackage[labelfont=bf]{caption}
\providecommand{\doi}[1]{\href{https://doi.org/#1}{DOI:#1}}
\usepackage{xurl} 
\renewcommand{\doi}[1]{%
 \href{https://doi.org/#1}{\nolinkurl{DOI:#1}}%
}

\usepackage{enumitem} 
\usepackage{appendix} 
\usepackage{natbib} 
\usepackage{dsfont} 
\usepackage{mathtools} 
\usepackage{xcolor} 
\usepackage{multirow} 
\usepackage{booktabs} 
\usepackage{float} 
\usepackage[labelfont=bf]{caption}
\usepackage{subcaption} 
\usepackage{geometry} 
\newtheorem{theorem}{Theorem}[section]
\newtheorem{proposition}[theorem]{Proposition}
\newtheorem{lemma}[theorem]{Lemma}
\newtheorem{corollary}[theorem]{Corollary}
\newtheorem{remark}{Remark}

\numberwithin{equation}{section}

\newcommand{\N}{\mathbb{N}}
\newcommand{\R}{\mathbb{R}}
\newcommand{\PP}{\mathsf{P}} 
\newcommand{\EE}{\mathsf{E}} 
\newcommand{\Bias}{\mathsf{Bias}} 
\newcommand{\Var}{\mathsf{Var}} 
\newcommand{\Cov}{\mathsf{Cov}} 
\newcommand{\bb}[1]{\boldsymbol{#1}}
\newcommand{\OO}{\mathcal{O}}
\newcommand{\oo}{\mathrm{o}}
\newcommand{\rd}{\mathrm{d}}
\newcommand{\ind}{\mathds{1}}
\newcommand{\e}{\varepsilon}

\allowdisplaybreaks

\begin{document}

\begin{frontmatter}
\title{Dirichlet kernel density estimation on the simplex with missing data}

\author[a1]{Sami Baraket}\ead{SMBaraket@imamu.edu.sa}
\author[a2]{Hanen Daayeb}\ead{hanen.daayeb@fst.utm.tn}
\author[a2]{Salah Khardani}\ead{salah.khardani@fst.utm.tn}
\author[a3]{Guanjie Lyu}\ead{glyu@dal.ca}
\author[a4]{Fr\'ed\'eric Ouimet}\ead{frederic.ouimet2@uqtr.ca}

\address[a1]{Department of Mathematics and Statistics, College of  Science, Imam Mohammad Ibn Saud Islamic University, Riyadh, 11623, Saudi Arabia}
\address[a2]{Laboratoire des \'equations aux d\'eriv\'ees partielles, Universit\'e de Tunis El-Manar, 2092 El Manar 2, Tunis, Tunisia}
\address[a3]{Department of Community Health and Epidemiology, Dalhousie University, Canada}
\address[a4]{D\'epartement de math\'ematiques et d'informatique, Universit\'e du Qu\'ebec \`a Trois-Rivi\`eres, Canada\vspace{-5mm}}

\begin{abstract}
Nonparametric density estimation for compositional data supported on the simplex is examined under a missing at random mechanism. Rather than imputing missing values and estimating the density from a completed data set, we adopt a strategy based on inverse probability weighting. The proposed estimator uses an adaptive Dirichlet kernel, which ensures nonnegativity on the simplex and favorable behavior near the boundary. When the observation probabilities are unknown, they are estimated through a Nadaraya--Watson regression step. The large-sample properties of the estimator are derived, including pointwise bias and variance expansions, optimal smoothing rates, and asymptotic normality. A simulation study investigates its finite-sample performance under varying sample sizes and missing rates. Simulations show our method outperforms inverse-probability-weighted kernel density estimators based on additive and isometric log-ratio transformations of the data for certain target densities. The methodology is further illustrated through an application to leukocyte composition data from the National Health and Nutrition Examination Survey (NHANES), which allows for the identification of the modal immune profile in the sampled population.
\end{abstract}

\begin{keyword}
Asymmetric kernel \sep asymptotic theory \sep compositional data \sep density estimation \sep Dirichlet kernel \sep Horvitz--Thompson estimator \sep kernel smoothing\sep missing at random \sep simplex.
\MSC[2020]{Primary: 62G07; Secondary: 62E20, 62G05, 62G08, 62G20, 62H12}
\end{keyword}

\end{frontmatter}

\section{Introduction}\label{sec:introduction}

Compositional data, consisting of non-negative components that sum to one, arise in a broad range of scientific fields. Typical examples include geochemical compositions in Earth and environmental sciences~\citep{Carranza2011LogratioStreamSediment}, microbiome relative-abundance profiles in biomedical research~\citep{HMP2012Nature}, dietary intake~\citep{Leite2016NutritionalCoDA}, time-use shares in public health \citep{Chastin2015CoDA}, and portfolio allocation vectors in finance~\citep{VegaGamezAlonsoGonzalez2024horizons}. Because such data encode parts of a whole, their support lies on the probability simplex, and standard multivariate techniques often fail to account for the inherent closure constraint and the induced dependence among components. The foundational work of \citet{Aitchison1982statistical,Aitchison1986Book} established the modern statistical treatment of compositional data and motivated a substantial literature across the natural and health sciences. Recent applications in microbiome studies~\citep{Gloor2017microbiome} and geochemistry~\citep{Grunsky2019geochemistry} illustrate the growing relevance of compositional modeling in contemporary data analysis.

From a statistical perspective, the simplex geometry has motivated the development of specialized models and inferential tools. A substantial line of work focuses on log-ratio transformations \citep{Aitchison1986Book,Egozcue2003isometric}, which map the simplex to Euclidean space and thereby facilitate regression and multivariate modeling. Alternative approaches model compositions directly on the simplex, for example through Dirichlet regression for covariate effects and Dirichlet-mixture models for flexible multi-modal distributions~\citep{Gueorguieva2008dirichlet,Pal2022Bayesian}. For density estimation on the simplex, recent contributions include Bernstein polynomial estimators \citep{MR1293514,MR4287788,doi:10.1515/stat-2022-0111,MR4796622} and Dirichlet kernel methods \citep{doi:10.2307/2347365,doi:10.1016/j.cageo.2009.12.011,MR4319409,MR4544604,arXiv:2510.07608,DKO2026}. These works provide a nonparametric framework for density estimation on the simplex, establish minimax and boundary properties, and extend to settings with strong mixing dependence. Related Dirichlet kernel regression estimators for simplex-valued responses, including Nadaraya--Watson and local polynomial variants, have also been developed~\citep{MR4796622,MR4905615,arXiv:2502.08461}. For a unified treatment of asymmetric kernel density estimation methods, see, e.g., \cite{AboubacarKokonendji2025recursive,KokonendjiSome2018associated,KokonendjiSome2021bayesian,EsstafaKokonendjiNgo2026associated}.

In empirical research, compositional data frequently exhibit missingness mechanisms in which the probability of an observation being unobserved depends on fully observed covariates, a setting known as missing at random (MAR) as formalized by \citet{Rubin1976inference}. In microbiome studies, for example, missing relative abundance profiles are often systematically linked to observed technical factors such as sequencing depth or to clinical metadata, rather than to the unobserved abundance of the taxa themselves~\citep{Jiang2021mbimpute,Peleg2024interpolation}. Similar MAR mechanisms arise in geochemistry and public health applications (e.g., time-use or nutritional surveys), where assay failures or survey nonresponse may be predicted by observed demographic variables or sample-specific characteristics~\citep{MartinFernandez2003zeros,PalareaAlbaladejo2013detection,Abraham2006ATUS,GrovesPeytcheva2008nonresponse,Fakhouri2020nhanes}. In such settings, the simplifying assumption that missingness occurs completely at random, meaning that the probability of response is unrelated to the unobserved data, is rarely plausible. Relying solely on complete observations (complete-case analysis) therefore risks introducing bias and discarding useful information.

A common approach for handling missing responses is imputation, which generates completed data sets that can then be analyzed using standard procedures~\citep{Hron2010imputation,Little2019Book}. Applied to density estimation, this strategy reconstructs the missing values first and only then estimates the density from the completed sample. While flexible, this two-stage approach requires modeling how the missing responses relate to the observed data and therefore operates indirectly on the distributional target of interest. In contrast, inverse probability weighting addresses missingness by reweighting the observed responses according to their probabilities of being observed. Under the MAR assumption, this idea, originating from the Horvitz--Thompson estimator in survey sampling~\citep{Horvitz1952generalization} and subsequently developed in the nonparametric and semiparametric missing-data literature~\citep{Robins1994correcting,Tsiatis2006semiparametric,Dubnicka2009kernel,Gharbi2025Bernstein,SeamanWhite2013IPW}, yields a direct reconstruction of the full-data distribution without filling in individual values. This makes inverse probability weighting a natural strategy for nonparametric density estimation on the simplex, where the underlying distribution can be complex and the support is constrained.

Building on this perspective, a Dirichlet kernel–based density estimator is introduced for simplex-valued responses subject to MAR mechanisms. The estimator employs inverse probability weighting to correct for selection bias and incorporates a Nadaraya--Watson smoothing step to estimate the observation probabilities (also called propensity scores) when they are unknown, while respecting the simplex geometry and mitigating boundary effects inherent in classical kernel smoothing methods. The resulting procedure admits a full asymptotic characterization, including pointwise bias and variance expansions, optimal smoothing rates, and asymptotic normality for both the pseudo estimator (when the propensities are known) and its feasible counterpart (when the propensities are estimated). Taken together, these results extend Dirichlet kernel density estimation to incomplete-data settings and clarify the efficiency and robustness implications of applying inverse probability weighting to nonparametric density estimation on the simplex.

The remainder of the paper is organized as follows. Section~\ref{sec:definitions.notation} introduces the simplex framework, the MAR missingness setup, and the inverse probability weighted (IPW) Dirichlet kernel estimators (pseudo and feasible). Section~\ref{sec:assumptions} states the regularity conditions on the target density, covariates, propensity score, and smoothing parameters used for the asymptotic analysis. Section~\ref{sec:main.results} presents the main theoretical results, including pointwise bias and variance expansions, mean squared error (MSE) rates, and asymptotic normality for both the pseudo estimator $\widetilde{f}_{n,b}$ and the feasible estimator $\hat{f}_{n,b}$. Section~\ref{sec:simulation} reports simulation experiments (including bandwidth selection) and compares the proposed IPW Dirichlet kernel density estimator (KDE) with log-ratio-based IPW alternatives under varying sample sizes and missing rates. Section~\ref{sec:application} illustrates the methodology on NHANES leukocyte composition data and highlights the estimated modal composition. Section~\ref{sec:outlook} summarizes the contributions and discusses directions for extensions and further research. Section~\ref{sec:proofs} contains the proofs of the main propositions and theorems, while Section~\ref{sec:tech.lemmas} gathers auxiliary technical lemmas used in the proofs. \ref{app:acronyms} provides a list of acronyms used throughout. To ensure reproducibility of our findings, \ref{app:code} links to the \textsf{R} code used to generate the figures, the simulation results, and the real-data application.

\section{Definitions and notation}\label{sec:definitions.notation}

For any integer $d\in \N = \{1,2,\ldots\}$, the $d$-dimensional simplex and its interior are defined by
\[
\mathcal{S}_d = \{\bb{s}\in [0,1]^d : \|\bb{s}\|_1 \leq 1\}, \qquad \mathrm{Int}(\mathcal{S}_d) = \{\bb{s}\in (0,1)^d : \|\bb{s}\|_1 < 1\},
\]
where $\|\bb{s}\|_1 = \sum_{i=1}^d |s_i|$ denotes the $\ell^1$ norm in $\R^d$. For any $\alpha_1,\ldots,\alpha_d,\beta\in (0,\infty)$, the density of the $\mathrm{Dirichlet}\hspace{0.2mm}(\bb{\alpha},\beta)$ distribution is
\[
K_{\bb{\alpha},\beta}(\bb{s}) = \frac{\Gamma(\|\bb{\alpha}\|_1 + \beta)}{\Gamma(\beta) \prod_{i=1}^d \Gamma(\alpha_i)} (1 - \|\bb{s}\|_1)^{\beta - 1} \prod_{i=1}^d s_i^{\alpha_i - 1}, \quad \bb{s}\in \mathcal{S}_d.
\]

Let $\bb{Y}\in \mathcal{S}_d$ be a continuous response with unknown density $f$, and let $\bb{X} = (X_1,\ldots,X_p)\in \R^p$ be a fully observed continuous covariate vector with density $g$. Assume that the pair $(\bb{X},\bb{Y})$ admits a joint density $h$ supported on $\R^p\times \mathcal{S}_d$. The available data consist of $n$ independent and identically distributed (iid) copies $(\bb{X}_1,\bb{Y}_1),\ldots,(\bb{X}_n,\bb{Y}_n)$ of $(\bb{X},\bb{Y})$, where the responses $\bb{Y}_i$ are assumed to be MAR.

To track which responses are observed, define the indicator
\[
\delta_i = \ind_{\{\bb{Y}_i \text{ is observed}\}} =
\begin{cases}
1, & \text{if } \bb{Y}_i \text{ is observed},\\
0, & \text{otherwise},
\end{cases}
\]
and assume that its conditional expectation, called the {\it propensity score}, satisfies
\[
\pi(\bb{X}_i) = \PP\left(\delta_i = 1 \mid \bb{Y}_i, \bb{X}_i\right) = \PP\left(\delta_i=1\mid \bb{X}_i\right).
\]
In particular, $\delta_i$ is conditionally independent of $\bb{Y}_i$ given $\bb{X}_i$.

If all the sampled responses $\bb{Y}_1,\ldots,\bb{Y}_n$ are observed, then for an appropriately chosen smoothing parameter $b\in (0,\infty)$, one possible estimator of the unknown density $f$ is the full-data Dirichlet KDE proposed by \citet{doi:10.2307/2347365} and studied by \cite{MR4319409,MR4544604,DKO2026}, defined by
\[
\hat{f}^{\hspace{0.2mm}\mathrm{full}}_{n,b}(\bb{s}) = n^{-1} \sum_{i=1}^n \kappa_{\bb{s},b}(\bb{Y}_i), \quad \bb{s}\in \mathcal{S}_d,
\]
where, for brevity,
\[
\kappa_{\bb{s},b}(\cdot) = K_{\bb{s}/b + \bb{1}, (1 - \|\bb{s}\|_1)/b + 1}(\cdot),
\]
with $\bb{1} = (1,\ldots,1)$ being the $d$-vector of ones. If some responses $\bb{Y}_i$ are missing, a straightforward fix is to estimate $f$ using only the data at hand, in which case $\hat{f}^{\hspace{0.2mm}\mathrm{full}}_{n,b}$ reduces to the {\it complete-case estimator}:
\[
\hat{f}^{\hspace{0.2mm}\mathrm{cc}}_{n,b}(\bb{s}) = \frac{1}{\sum_{j=1}^n \delta_j} \sum_{i=1}^n \delta_i \, \kappa_{\bb{s},b}(\bb{Y}_i), \quad \bb{s}\in \mathcal{S}_d.
\]
However, this estimator is generally biased under missingness mechanisms beyond missing completely at random. In particular, it is biased under the MAR mechanism we are investigating in this paper. To counter this effect, an IPW pseudo estimator can be used:
\begin{equation}\label{eq:pseudo}
\widetilde{f}_{n,b}(\bb{s}) = n^{-1} \sum_{i=1}^n \frac{\delta_{i}}{\pi(\bb{X}_i)}\kappa_{\bb{s},b}(\bb{Y}_i), \quad \bb{s}\in \mathcal{S}_d.
\end{equation}

In most practical situations, however, the propensity scores $\pi(\bb{X}_i)$ are unknown, unless they are specified by design, as in planned missingness schemes. Consequently, $\smash{\widetilde{f}_{n,b}}$ does not constitute a genuine estimator of $f$. This motivates using a Nadaraya--Watson regression estimator of $\pi(\bb{X}_i)$, defined by
\begin{equation}\label{eq:NW}
\hat{\pi}_i(\bb{X}_{1:n}) = \frac{\sum_{j=1}^n \delta_j \, K_h^*(\bb{X}_i - \bb{X}_j)} {\sum_{j=1}^n K_h^*(\bb{X}_i - \bb{X}_j)}, \qquad K_h^*(\bb{x}) = h^{-p} K^*(h^{-1} \bb{x}),
\end{equation}
where $K^*: \R^p \to [0,\infty)$ denotes a classical kernel function and $h$ the associated smoothing parameter. By replacing $\pi(\bb{X}_i)$ with $\smash{\hat{\pi}_i(\bb{X}_{1:n})}$ in \eqref{eq:pseudo}, we obtain the following {\it feasible estimator}:
\begin{equation}\label{eq:feasible}
 \hat{f}_{n,b}(\bb{s})= n^{-1} \sum_{i=1}^n \frac{\delta_{i}}{\hat{\pi}_i(\bb{X}_{1:n})}\kappa_{\bb{s},b}(\bb{Y}_i), \quad \bb{s}\in \mathcal{S}_d.
\end{equation}

Throughout the paper, the following notational conventions are adopted. The notation $u = \OO(v)$ means that $\limsup |u / v| \leq C < \infty$ as $b\to 0$ or $n\to \infty$, depending on the context. The positive constant $C$ may depend on the target density function $f$ and the dimensions $d$ and $p$, but on no other variables unless explicitly written as a subscript. A common occurrence is a local dependence of the asymptotics on a given point~$\bb{s}$ on the simplex, in which case one writes $u = \OO_{\bb{s}}(v)$. The alternative notation $u \ll v$ is also used to mean $u,v\geq 0$ and $u = \OO(v)$. If both $u \ll v$ and $u \gg v$ hold, one writes $u \asymp v$. Similarly, the notation $u = \oo(v)$ means that $\lim |u / v| = 0$ as $b\to 0$ or $n\to \infty$. Subscripts indicate which parameters the convergence rate can depend on. The symbol $\rightsquigarrow$ denotes convergence in law. The shorthand notations $\bb{X}_{1:n} = (\bb{X}_i)_{i=1}^n$ and $[k] = \{1,\ldots,k\}$ for any $k\in\N$ will be used frequently. The smoothing parameter $b = b(n)$ is always implicitly a function of the number of observations, except in Section~\ref{sec:tech.lemmas}.

\section{Assumptions}\label{sec:assumptions}

In addition to the baseline assumptions introduced in Section~\ref{sec:definitions.notation}, the assumptions below are imposed to derive the results stated in Section~\ref{sec:main.results}.

\begin{enumerate}[label=(A\arabic*)]\setlength\itemsep{0em}
\item $b = b(n)\to 0$ and $n b^{d/2} \to \infty$ as $n\to \infty$. \label{ass:1}
\item The density $f$ of $\bb{Y}$ is Lipschitz continuous on $\mathcal{S}_d$. \label{ass:2}
\item The density $f$ of $\bb{Y}$ is twice continuously differentiable on $\mathcal{S}_d$. \label{ass:3}
\item The marginal density $g$ of $\bb{X}$ has bounded support and $\inf_{\bb{x}:g(\bb{x})>0} g(\bb{x}) \equiv g_{\mathrm{min}}\in (0,\infty)$. \label{ass:4}
\item The propensity function $\pi$ is bounded away from zero, i.e., $\min_{\bb{x}:g(\bb{x}) > 0} \pi(\bb{x}) \equiv \pi_{\min}\in (0,\infty)$. \label{ass:5}
\item For all $\bb{x}\in \R^p$ such that $g(\bb{x})\in (0,\infty)$, the conditional density $f_{\bb{Y}_1 \mid \bb{X}_1}(\cdot \mid \bb{x})$ is Lipschitz continuous on $\mathcal{S}_d$ with Lipschitz constant $L(\bb{x})$, and $\sup_{\bb{x}: g(\bb{x}) > 0} L(\bb{x}) < \infty$. We also assume that $f_{\bb{Y}_1 \mid \bb{X}_1}(\cdot \mid \bb{x})$ is bounded uniformly over $\bb{x}\in \R^p$ such that $g(\bb{x})\in (0,\infty)$. \label{ass:6}
\end{enumerate}

\par\noindent\rule{\textwidth}{0.5pt}

\begin{enumerate}[label=(B\arabic*)]\setlength\itemsep{0em}
\item $h = h(n)\to 0$ and $n h^p \to \infty$ as $n\to \infty$. \label{ass:B.1}
\item The propensity score function $\pi$ and the density $g$ of $\bb{X}$ are bounded and all their first- and second-order partial derivatives are continuous and bounded on the support of $\bb{X}$. \label{ass:B.2}
\item The function $K^*:\R^p \to [0,\infty)$ is bounded, symmetric, and satisfies, for all $i,j\in [p] ~(i\neq j)$:
\[
\int_{\R^p} K^*(\bb{u}) \, \rd \bb{u} = 1, \quad
\int_{\R^p} \bb{u} \, K^*(\bb{u})\, \rd \bb{u} = \bb{0}_p, \quad
\int_{\R^p} u_i u_j K^*(\bb{u}) \, \rd \bb{u} = 0, \quad
\int_{\R^p} u_i^2 K^*(\bb{u}) \, \rd \bb{u} < \infty.
\] \label{ass:B.3}
\end{enumerate}

\vspace{-6mm}
\par\noindent\rule{\textwidth}{0.5pt}

\begin{enumerate}[label=(C\arabic*)]\setlength\itemsep{0em}
\item The function $q(\bb{x}) = \EE[\kappa_{\bb{s},b}(\bb{Y}_1) \mid \bb{X}_1 = \bb{x}]$ is bounded and all its first- and second-order partial derivatives are continuous and bounded on the support of $\bb{X}$. \label{ass:C.1}
\end{enumerate}

\section{Main results}\label{sec:main.results}

\subsection{Results for the pseudo estimator \texorpdfstring{$\widetilde{f}_{n,b}$}{widetilde(f)\_n,b}}\label{sec:results.pseudo}

\begin{proposition}[Pointwise bias]\label{prop:bias.pseudo}
Suppose that Assumptions~\ref{ass:1}, \ref{ass:3}, and \ref{ass:5} hold. Uniformly for $\bb{s}\in \mathcal{S}_d$, we have
\[
\Bias[\widetilde{f}_{n,b}(\bb{s})] = \Bias[\hat{f}^{\hspace{0.2mm}\mathrm{full}}_{n,b}(\bb{s})] = b \, \phi(\bb{s}) + \oo(b), \quad n\to \infty,
\]
where
\begin{equation}\label{eq:phi}
\phi(\bb{s}) = \sum_{i\in [d]} (1 - (d+1)s_i) \frac{\partial}{\partial s_i} f(\bb{s}) + \frac{1}{2} \sum_{i,j\in [d]} s_i (\ind_{\{i = j\}} - s_j) \frac{\partial^2}{\partial s_i \partial s_j} f(\bb{s}).
\end{equation}
\end{proposition}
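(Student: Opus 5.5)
The proof splits into two steps. First, we reduce the bias of the pseudo estimator to that of the full-data estimator using the MAR structure. Second, we compute the full-data bias by a Taylor expansion around the Dirichlet mean.

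\textbf{Step 1: reduction to the full-data estimator.} Since the summands in \eqref{eq:pseudo} are iid, $\EE[\widetilde{f}_{n,b}(\bb{s})] = \EE[\delta_1 \pi(\bb{X}_1)^{-1} \kappa_{\bb{s},b}(\bb{Y}_1)]$. Condition on $(\bb{X}_1,\bb{Y}_1)$. The MAR assumption gives $\EE[\delta_1 \mid \bb{X}_1,\bb{Y}_1] = \pi(\bb{X}_1)$, and Assumption~\ref{ass:5} makes the ratio $\pi(\bb{X}_1)/\pi(\bb{X}_1)$ well defined and equal to one. Hence
\[
\EE[\widetilde{f}_{n,b}(\bb{s})] = \EE[\kappa_{\bb{s},b}(\bb{Y}_1)] = \EE[\hat{f}^{\hspace{0.2mm}\mathrm{full}}_{n,b}(\bb{s})],
\]
so the two biases coincide exactly, for every $n$ and every $\bb{s}$.

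\textbf{Step 2: expansion of the full-data bias.} Write $\EE[\kappa_{\bb{s},b}(\bb{Y}_1)] = \EE[f(\bb{\xi}_{\bb{s}})]$, where $\bb{\xi}_{\bb{s}} \sim \mathrm{Dirichlet}(\bb{s}/b+\bb{1}, (1-\|\bb{s}\|_1)/b+1)$. This holds because $\kappa_{\bb{s},b}(\bb{y})$, viewed as a function of $\bb{y}$, is exactly this Dirichlet density evaluated at $\bb{y}$. Expand $f$ to second order around $\bb{s}$ and integrate the expansion against $\bb{\xi}_{\bb{s}}$. This requires the first two moments of $\bb{\xi}_{\bb{s}}$.

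The parameters sum to $1/b + d + 1$, so
\[
\EE[\xi_i] - s_i = \frac{s_i/b+1}{1/b+d+1} - s_i = \frac{b(1-(d+1)s_i)}{1+(d+1)b} = b\,(1-(d+1)s_i) + \OO(b^2).
\]
The standard Dirichlet covariance formula gives
\[
\Cov(\xi_i,\xi_j) = b\, s_i(\ind_{\{i=j\}} - s_j) + \OO(b^2),
\]
uniformly in $\bb{s}$. Combining these, $\EE[(\xi_i-s_i)(\xi_j-s_j)] = \Cov(\xi_i,\xi_j) + (\EE\xi_i - s_i)(\EE\xi_j - s_j) = b\, s_i(\ind_{\{i=j\}}-s_j) + \OO(b^2)$. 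Substituting these moments into the Taylor expansion produces exactly $b\,\phi(\bb{s})$, with $\phi$ as in \eqref{eq:phi}.

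\textbf{Main obstacle: uniform control of the remainder.} The Taylor remainder is $\tfrac12 \sum_{i,j} (\xi_i-s_i)(\xi_j-s_j)\,[\partial_{ij} f(\bb{\zeta}) - \partial_{ij} f(\bb{s})]$ for some $\bb{\zeta}$ on the segment between $\bb{s}$ and $\bb{\xi}_{\bb{s}}$. By Assumption~\ref{ass:3}, $\partial_{ij} f$ is uniformly continuous on the compact set $\mathcal{S}_d$; let $\omega$ denote its modulus of continuity. Split the remainder according to whether $\|\bb{\xi}_{\bb{s}} - \bb{s}\|_1 \le \eta$ or not:

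\begin{itemize}
\item On $\{\|\bb{\xi}_{\bb{s}} - \bb{s}\|_1 \le \eta\}$, the remainder is at most $\omega(\eta)\,\EE\|\bb{\xi}_{\bb{s}}-\bb{s}\|^2 \ll \omega(\eta)\, b$.
\item On the complement, the second derivatives are bounded, so the remainder is at most $C\,\EE[\|\bb{\xi}_{\bb{s}}-\bb{s}\|^2 \ind_{\{\|\bb{\xi}_{\bb{s}}-\bb{s}\|_1>\eta\}}] \le C\eta^{-2}\,\EE\|\bb{\xi}_{\bb{s}}-\bb{s}\|^4$.
\end{itemize}

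The bound on the complement needs the fourth central moments of the Dirichlet distribution to be $\OO(b^2)$ uniformly in $\bb{s}$. This follows from the explicit Dirichlet moment formulas, or from writing $\bb{\xi}_{\bb{s}}$ as normalized independent Gamma variables, and is the one genuinely technical check. Given it, the complement contributes $\OO(\eta^{-2} b^2)$. Letting $b \to 0$ first and then $\eta \to 0$ shows the remainder is $\oo(b)$ uniformly in $\bb{s}$, which completes the proof.
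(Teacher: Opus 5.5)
Your proof is correct and follows the paper's route. Step~1 is the same MAR reduction: the paper conditions on $\bb{X}_1$ and factors via conditional independence, while you condition on $(\bb{X}_1,\bb{Y}_1)$ directly, which is equivalent. For Step~2 the paper simply invokes Theorem~1 of \citet{MR4319409}, whereas you reprove that result via the Taylor expansion at $\bb{s}$ against the Dirichlet kernel, the exact first and second moments, and a uniform-continuity/fourth-moment truncation of the remainder. This is a valid self-contained replacement for the citation.
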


\begin{proposition}[Pointwise variance]\label{prop:variance.pseudo}
Suppose that Assumptions~\ref{ass:1}, \ref{ass:2}, \ref{ass:5}, and \ref{ass:6} hold. For any $\bb{s}\in \mathrm{Int}(\mathcal{S}_d)$ such that $f(\bb{s})\in (0,\infty)$, we have
\[
\begin{aligned}
\Var(\widetilde{f}_{n,b}(\bb{s}))
&= \Var(\hat{f}^{\hspace{0.2mm}\mathrm{full}}_{n,b}(\bb{s})) + n^{-1} b^{-d/2} \psi(\bb{s}) f(\bb{s}) \zeta(\bb{s}) + \oo_{\bb{s}}(n^{-1} b^{-d/2}) \\
&= n^{-1} b^{-d/2} \psi(\bb{s}) f(\bb{s}) (1 + \zeta(\bb{s})) + \oo_{\bb{s}}(n^{-1} b^{-d/2}), \quad n\to \infty,
\end{aligned}
\]
where
\begin{equation}\label{eq:psi.zeta}
\psi(\bb{s}) = \left\{(4\pi)^d (1 - \|\bb{s}\|_1) \prod_{i=1}^d s_i\right\}^{-1/2}, \quad \zeta(\bb{s}) = \EE\left[\frac{1 - \pi(\bb{X}_1)}{\pi(\bb{X}_1)} \mid \bb{Y}_1 = \bb{s}\right].
\end{equation}
\end{proposition}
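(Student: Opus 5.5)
Since $\widetilde{f}_{n,b}(\bb{s})$ is an average of iid terms $Z_i = \delta_i\pi(\bb{X}_i)^{-1}\kappa_{\bb{s},b}(\bb{Y}_i)$, we have $\Var(\widetilde{f}_{n,b}(\bb{s})) = n^{-1}\{\EE[Z_1^2] - (\EE Z_1)^2\}$. The plan is to compute both moments by conditioning on $(\bb{X}_1,\bb{Y}_1)$ and using the MAR property. First, $\EE[\delta_1\mid \bb{X}_1,\bb{Y}_1]=\pi(\bb{X}_1)$ gives $\EE Z_1 = \EE[\kappa_{\bb{s},b}(\bb{Y}_1)] = f(\bb{s})+\OO(b)$ by Lipschitz continuity (Assumption~\ref{ass:2}) and the standard concentration of the Dirichlet kernel. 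Hence $(\EE Z_1)^2 = \OO(1)$, and this term contributes only $\OO(n^{-1}) = \oo(n^{-1}b^{-d/2})$. Second, since $\delta_1^2=\delta_1$,
\[
\EE[Z_1^2] = \EE\Bigl[\frac{\kappa_{\bb{s},b}(\bb{Y}_1)^2}{\pi(\bb{X}_1)}\Bigr] = \EE[\kappa_{\bb{s},b}(\bb{Y}_1)^2] + \EE\Bigl[\frac{1-\pi(\bb{X}_1)}{\pi(\bb{X}_1)}\kappa_{\bb{s},b}(\bb{Y}_1)^2\Bigr].
\]
The first piece, after subtracting $(\EE\kappa_{\bb{s},b}(\bb{Y}_1))^2$ and dividing by $n$, is exactly $\Var(\hat{f}^{\mathrm{full}}_{n,b}(\bb{s}))$. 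This yields the first line of the statement once the second piece is shown to equal $b^{-d/2}\psi(\bb{s})f(\bb{s})\zeta(\bb{s})+\oo_{\bb{s}}(b^{-d/2})$. The second line then follows from the known full-data result $\Var(\hat{f}^{\mathrm{full}}_{n,b}(\bb{s})) = n^{-1}b^{-d/2}\psi(\bb{s})f(\bb{s})+\oo_{\bb{s}}(n^{-1}b^{-d/2})$ of \cite{MR4319409}, which I would re-derive or cite via a technical lemma in Section~\ref{sec:tech.lemmas}.

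For the second piece, define $w(\bb{y}) = \EE[(1-\pi(\bb{X}_1))/\pi(\bb{X}_1)\mid \bb{Y}_1=\bb{y}]\,f(\bb{y}) = \int \frac{1-\pi(\bb{x})}{\pi(\bb{x})} f_{\bb{Y}_1\mid\bb{X}_1}(\bb{y}\mid\bb{x})\,g(\bb{x})\,\rd\bb{x}$. The second piece is then $\int_{\mathcal{S}_d}\kappa_{\bb{s},b}(\bb{y})^2 w(\bb{y})\,\rd\bb{y}$. By Assumptions~\ref{ass:5} and \ref{ass:6}, the integrand weight is bounded by $\pi_{\min}^{-1}$ times a uniformly Lipschitz function. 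So $w$ is Lipschitz on $\mathcal{S}_d$ with constant at most $\pi_{\min}^{-1}\sup_{\bb{x}}L(\bb{x})$, and $w(\bb{s}) = f(\bb{s})\zeta(\bb{s})$.

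Next I write $\kappa_{\bb{s},b}^2 = A_b(\bb{s})\,K_{2\bb{s}/b+\bb{1},\,2(1-\|\bb{s}\|_1)/b+1}$, where $A_b(\bb{s})$ is a ratio of Gamma functions. For $\bb{s}\in\mathrm{Int}(\mathcal{S}_d)$ fixed, Stirling's formula gives $A_b(\bb{s}) = b^{-d/2}\psi(\bb{s})(1+\oo_{\bb{s}}(1))$; this is the same computation as in the full-data case. The remaining integral is $\EE[w(\bb{\xi})]$ with $\bb{\xi}$ Dirichlet distributed with mean $\bb{s}+\OO(b)$ and covariance $\OO(b)$. Lipschitz continuity then gives $|\EE w(\bb{\xi}) - w(\bb{s})| \le C\,\EE\|\bb{\xi}-\bb{s}\| = \OO_{\bb{s}}(b^{1/2})$. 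Combining these gives $b^{-d/2}\psi(\bb{s})f(\bb{s})\zeta(\bb{s})+\oo_{\bb{s}}(b^{-d/2})$, and adding the pieces finishes the proof.

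The main obstacle is the Gamma-ratio asymptotics for $A_b(\bb{s})$, which must be uniform enough to absorb errors. Since the statement is pointwise in the interior, fixed-$\bb{s}$ Stirling suffices, and I would borrow this lemma from the full-data analysis. A secondary point is verifying that $\zeta(\bb{s})$ is well defined, i.e.\ that the conditional expectation given $\bb{Y}_1=\bb{s}$ makes sense. This requires $f(\bb{s})>0$, which the statement assumes, and then $\zeta(\bb{s}) = w(\bb{s})/f(\bb{s})$.
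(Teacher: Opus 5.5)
Your proposal is correct and follows essentially the paper's route: the same split $\Var(\widetilde{f}_{n,b}(\bb{s})) = \Var(\hat{f}^{\mathrm{full}}_{n,b}(\bb{s})) + n^{-1}\EE[\{(1-\pi(\bb{X}_1))/\pi(\bb{X}_1)\}\,\kappa_{\bb{s},b}^2(\bb{Y}_1)]$, followed by the squared-Dirichlet-kernel asymptotics of Lemma~1 in \citet{MR4319409} and Bayes' rule; the paper reaches the split through repeated use of the law of total variance, while you reach it through the $\delta_1^2=\delta_1$ second-moment computation. The only real difference is the order of operations: you integrate out $\bb{x}$ first into the Lipschitz weight $w(\bb{y}) = f(\bb{y})\,\EE[(1-\pi(\bb{X}_1))/\pi(\bb{X}_1)\mid \bb{Y}_1=\bb{y}]$ and apply the kernel asymptotics once, whereas the paper applies them to each $f_{\bb{Y}_1\mid\bb{X}_1}(\cdot\mid\bb{x})$ and then averages, so your version makes the needed uniformity in $\bb{x}$ automatic (one small slip: under Lipschitz continuity alone $\EE Z_1 = f(\bb{s})+\OO(b^{1/2})$, not $f(\bb{s})+\OO(b)$, but you only use boundedness there).
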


A consequence of Propositions~\ref{prop:bias.pseudo}--\ref{prop:variance.pseudo} is the following asymptotic expression for the MSE of the pseudo estimator. Assumption~\ref{ass:2} is not needed because Assumption~\ref{ass:3} contains it.

\begin{corollary}[Mean squared error]\label{cor:MSE.pseudo}
Suppose that Assumptions~\ref{ass:1}, \ref{ass:3}, \ref{ass:5}, and \ref{ass:6} hold. For any $\bb{s}\in \mathrm{Int}(\mathcal{S}_d)$ such that $f(\bb{s})\in (0,\infty)$, we have
\[
\begin{aligned}
\mathrm{MSE}[\widetilde{f}_{n,b}(\bb{s})]
&= \EE[(\widetilde{f}_{n,b}(\bb{s}) - f(\bb{s}))^2] \\
&= \Var(\widetilde{f}_{n,b}(\bb{s})) + \{\Bias[\widetilde{f}_{n,b}(\bb{s})]\}^2 \\
&= n^{-1} b^{-d/2} \psi(\bb{s}) f(\bb{s}) (1 + \zeta(\bb{s})) + b^2 \phi^2(\bb{s}) +\oo_{\bb{s}}(n^{-1} b^{-d/2}) + \oo(b^2), \quad n\to \infty.
\end{aligned}
\]
In particular, if $f(\bb{s}) \phi(\bb{s}) \neq 0$, the asymptotically optimal (pointwise) $b$, minimizing the two leading terms of the MSE, is given by
\[
b_{\mathrm{opt}}(\bb{s}) = n^{-2/(d+4)} \left[\frac{d}{4} \times \frac{\psi(\bb{s}) f(\bb{s}) (1 + \zeta(\bb{s}))}{\phi^2(\bb{s})}\right]^{2/(d+4)},
\]
and the corresponding MSE is
\[
\begin{aligned}
\mathrm{MSE}[\widetilde{f}_{n,b_{\mathrm{opt}}}]
&= n^{-4 / (d+4)} \left[\frac{1 + d/4}{(d/4)^{d/(d+4)}}\right] \frac{\big\{\psi(\bb{s}) f(\bb{s}) (1 + \zeta(\bb{s}))\big\}^{4 / (d+4)}}{\{\phi^2(\bb{s})\}^{-d / (d+4)}} + \oo_{\bb{s}}(n^{-4/(d+4)}).
\end{aligned}
\]
\end{corollary}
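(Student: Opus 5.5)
The plan is to combine Propositions~\ref{prop:bias.pseudo} and~\ref{prop:variance.pseudo} through the decomposition $\mathrm{MSE} = \Var + \Bias^2$, and then optimize the resulting leading-order expression in $b$.

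First, I will check that the hypotheses of both propositions hold. Assumption~\ref{ass:3} says $f$ is $C^2$ on the compact set $\mathcal{S}_d$. Its gradient is therefore bounded, so $f$ is Lipschitz and Assumption~\ref{ass:2} holds. Hence Assumptions~\ref{ass:1}, \ref{ass:3}, \ref{ass:5}, \ref{ass:6} give both propositions at the fixed interior point $\bb{s}$ with $f(\bb{s})\in(0,\infty)$. The identity $\EE[(\widetilde{f}_{n,b}(\bb{s})-f(\bb{s}))^2] = \Var(\widetilde{f}_{n,b}(\bb{s})) + \{\Bias[\widetilde{f}_{n,b}(\bb{s})]\}^2$ is elementary. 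Squaring the bias expansion gives $\{b\phi(\bb{s})+\oo(b)\}^2 = b^2\phi^2(\bb{s}) + \oo(b^2)$, because $\phi(\bb{s})$ is a finite constant. The variance expansion from Proposition~\ref{prop:variance.pseudo} is substituted directly, and adding the two yields the displayed MSE formula.

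Second, I will derive the optimal bandwidth. Write $A=\psi(\bb{s})f(\bb{s})(1+\zeta(\bb{s}))$ and $B=\phi^2(\bb{s})$. If $f(\bb{s})\phi(\bb{s})\neq 0$, then $A>0$ and $B>0$. Here $A>0$ because $\psi(\bb{s})>0$ at interior points and $\zeta(\bb{s})\ge0$; $A<\infty$ because $\zeta(\bb{s})\le(1-\pi_{\min})/\pi_{\min}$ by Assumption~\ref{ass:5}. Minimize $M(b)=n^{-1}b^{-d/2}A + b^2B$ over $b>0$. The equation $M'(b)=0$ reads $-\tfrac d2 n^{-1}b^{-d/2-1}A+2bB=0$, which gives $b^{(d+4)/2} = \tfrac d4 \cdot A/(nB)$. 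This is $b_{\mathrm{opt}} = n^{-2/(d+4)}[\tfrac d4 A/B]^{2/(d+4)}$. Since $M$ is convex in this parametrization and tends to infinity at both ends, this critical point is the unique minimizer.

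Finally, I will substitute $b_{\mathrm{opt}}$ back. Both leading terms are of order $n^{-4/(d+4)}$. At the optimum, $n^{-1}b^{-d/2}A = \tfrac 4d b^2 B$, so $M(b_{\mathrm{opt}}) = (1+\tfrac d4)\,n^{-1}b_{\mathrm{opt}}^{-d/2}A$. Expanding the powers gives $n^{-4/(d+4)}(1+d/4)(d/4)^{-d/(d+4)}A^{4/(d+4)}B^{d/(d+4)}$, which matches the stated constant. The remainder terms $\oo_{\bb{s}}(n^{-1}b^{-d/2})+\oo(b^2)$, evaluated at $b=b_{\mathrm{opt}}\asymp n^{-2/(d+4)}$, are $\oo_{\bb{s}}(n^{-4/(d+4)})$. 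For this step I need $b_{\mathrm{opt}}$ to satisfy Assumption~\ref{ass:1}. It does: $b_{\mathrm{opt}}\to0$, and $n b_{\mathrm{opt}}^{d/2}\asymp n^{4/(d+4)}\to\infty$, so the propositions apply along this sequence. The main (mild) obstacle is this bookkeeping: the $\oo$ terms in the propositions are statements about sequences $b=b(n)$ satisfying Assumption~\ref{ass:1}, so I must apply them along the specific sequence $b_{\mathrm{opt}}(n)$ rather than treat $b$ as a free variable. Everything else is routine algebra.
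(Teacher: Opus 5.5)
Your proposal is correct and follows the paper's route. The paper presents the corollary as a direct consequence of Propositions~\ref{prop:bias.pseudo}--\ref{prop:variance.pseudo} via $\mathrm{MSE}=\Var+\Bias^2$, noting as you do that Assumption~\ref{ass:3} implies Assumption~\ref{ass:2}, and it obtains $b_{\mathrm{opt}}$ and the optimal MSE by the same elementary minimization; your extra checks (that the two leading constants are positive and finite, and that $b_{\mathrm{opt}}$ satisfies Assumption~\ref{ass:1} so the propositions apply along that sequence) are sound details the paper leaves implicit.
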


\begin{theorem}[Asymptotic normality]\label{thm:CLT.pseudo}
Suppose that Assumptions~\ref{ass:1}, \ref{ass:3}, \ref{ass:5}, and \ref{ass:6} hold. Choose $b\sim c n^{-2/(d+4)}$ as $n\to \infty$ for some $c\in (0,\infty)$. For any $\bb{s}\in \mathrm{Int}(\mathcal{S}_d)$ such that $f(\bb{s})\in (0,\infty)$, we have
\[
n^{1/2} b^{\hspace{0.2mm}d/4} \frac{\widetilde{f}_{n,b}(\bb{s}) - f(\bb{s}) - b \phi(\bb{s})}{\sqrt{\psi(\bb{s}) f(\bb{s}) (1 + \zeta(\bb{s}))}} \rightsquigarrow \mathcal{N}(0, 1).
\]
\end{theorem}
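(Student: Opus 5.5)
We decompose
\[
\widetilde{f}_{n,b}(\bb{s}) - f(\bb{s}) - b\,\phi(\bb{s}) = \big\{\widetilde{f}_{n,b}(\bb{s}) - \EE[\widetilde{f}_{n,b}(\bb{s})]\big\} + \big\{\Bias[\widetilde{f}_{n,b}(\bb{s})] - b\,\phi(\bb{s})\big\}.
\]
By Proposition~\ref{prop:bias.pseudo}, the second term is $\oo(b)$. With $b\sim c n^{-2/(d+4)}$ we have $n^{1/2} b^{d/4}\cdot b \asymp n^{1/2} b^{(d+4)/4} \asymp c^{(d+4)/4}$, which is bounded. Hence $n^{1/2}b^{d/4}\,\oo(b) = \oo(1)$, and the bias remainder is negligible after normalization. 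The problem therefore reduces to a central limit theorem for the centered term.

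Write $\widetilde{f}_{n,b}(\bb{s}) - \EE[\widetilde{f}_{n,b}(\bb{s})] = n^{-1}\sum_{i=1}^n Z_{i}$, where
\[
Z_{i} = \frac{\delta_i}{\pi(\bb{X}_i)}\kappa_{\bb{s},b}(\bb{Y}_i) - \EE\Big[\frac{\delta_1}{\pi(\bb{X}_1)}\kappa_{\bb{s},b}(\bb{Y}_1)\Big].
\]
For each $n$ these are iid and centered, so the setting is a triangular array, and we apply the Lindeberg--Feller theorem (via Lyapunov's condition). By Proposition~\ref{prop:variance.pseudo}, $\Var(Z_1) = n\Var(\widetilde{f}_{n,b}(\bb{s})) = b^{-d/2}\psi(\bb{s})f(\bb{s})(1+\zeta(\bb{s}))(1+\oo_{\bb{s}}(1))$. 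It therefore suffices to verify
\[
\frac{n\,\EE|Z_1|^{3}}{\{n \Var(Z_1)\}^{3/2}} \to 0 .
\]

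To bound the third moment, note that $\delta_1/\pi(\bb{X}_1) \le \pi_{\min}^{-1}$ by Assumption~\ref{ass:5}. Then
\[
\EE|Z_1|^3 \ll \EE[\kappa_{\bb{s},b}(\bb{Y}_1)^3] \le \|\kappa_{\bb{s},b}\|_\infty \,\EE[\kappa_{\bb{s},b}(\bb{Y}_1)^2].
\]
For fixed $\bb{s}\in\mathrm{Int}(\mathcal{S}_d)$, a standard Stirling computation, as in the full-data Dirichlet KDE literature and presumably among the technical lemmas of Section~\ref{sec:tech.lemmas}, gives $\|\kappa_{\bb{s},b}\|_\infty = \OO_{\bb{s}}(b^{-d/2})$. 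Moreover, $\EE[\kappa_{\bb{s},b}(\bb{Y}_1)^2] = \OO_{\bb{s}}(b^{-d/2})$, since this is exactly the quantity controlled in the proof of Proposition~\ref{prop:variance.pseudo}, with $f$ bounded under Assumption~\ref{ass:3}. Hence $\EE|Z_1|^3 = \OO_{\bb{s}}(b^{-d})$. Since $\psi(\bb{s})f(\bb{s})(1+\zeta(\bb{s}))>0$, the Lyapunov ratio satisfies
\[
\frac{n\,b^{-d}}{n^{3/2}\,b^{-3d/4}} = (n b^{d/2})^{-1/2} \to 0
\]
by Assumption~\ref{ass:1}. This gives $n^{1/2}b^{d/4}\{\widetilde{f}_{n,b}(\bb{s}) - \EE\widetilde{f}_{n,b}(\bb{s})\}/\sqrt{\psi f(1+\zeta)} \rightsquigarrow \mathcal{N}(0,1)$. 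Slutsky's lemma, combined with the negligible bias remainder, then concludes the proof.

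The main obstacle is the sup-norm bound $\|\kappa_{\bb{s},b}\|_\infty \ll_{\bb{s}} b^{-d/2}$. The mode of the Dirichlet$(\bb{s}/b+\bb{1},(1-\|\bb{s}\|_1)/b+1)$ density lies at $\bb{s}$. Evaluating the density there, the normalizing constant has ratio $\Gamma(1/b+d+1)/\{\Gamma(1/b+1)\prod_i \Gamma(s_i/b+1)\}$. After applying Stirling's formula, this ratio, multiplied by $\prod_i s_i^{s_i/b}(1-\|\bb{s}\|_1)^{(1-\|\bb{s}\|_1)/b}$, is of order $b^{-d/2}$ times a constant depending on $\bb{s}$. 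This is where interiority of $\bb{s}$ is essential. If no such lemma is available, we instead bound $\EE[\kappa^{3}_{\bb{s},b}(\bb{Y}_1)]$ directly: $\kappa^3_{\bb{s},b}$ is a multiple of another Dirichlet density, and the same Stirling argument shows the multiplying constant is $\OO_{\bb{s}}(b^{-d})$.
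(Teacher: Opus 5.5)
Your proof is correct and follows essentially the paper's route. You use the same split into a centered iid sum plus an $\oo(b)$ bias remainder, and the same CLT for the centered sum, driven by the sup-norm bound $\max_{\bb{x}\in\mathcal{S}_d}\kappa_{\bb{s},b}(\bb{x})\ll_{\bb{s}} b^{-d/2}$ (the paper's Lemma~\ref{lem:local.bound}) and the variance asymptotics of Proposition~\ref{prop:variance.pseudo}. The only difference is packaging: the paper checks Lindeberg directly via $|Z_{1,b}(\bb{s})|/(n^{1/2}s_{n,b})\ll_{\bb{s}} n^{-1/2}b^{-d/4}\to 0$, while you go through Lyapunov's third moment, which reduces to the same condition $nb^{d/2}\to\infty$.

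One slip in your sketch of that lemma: the normalizing constant at the mode is $\Gamma(1/b+d+1)/\{\Gamma((1-\|\bb{s}\|_1)/b+1)\prod_{i}\Gamma(s_i/b+1)\}$, not $\Gamma(1/b+1)$ in the denominator.
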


\subsection{Results for the feasible estimator \texorpdfstring{$\hat{f}_{n,b}$}{hat(f)\_n,b}}\label{sec:results.feasible}

\begin{proposition}[Pointwise bias]\label{prop:bias.feasible}
Suppose that Assumptions~\ref{ass:1}, \ref{ass:3}--\ref{ass:6}, and \ref{ass:B.1}--\ref{ass:B.3} hold. Choose $h\sim \kappa n^{-1/(p+4)}$ as $n\to \infty$ for some $\kappa\in (0,\infty)$. Uniformly for $\bb{s}\in \mathcal{S}_d$, we have
\[
\Bias(\hat{f}_{n,b}(\bb{s}))
= \Bias(\widetilde{f}_{n,b}(\bb{s})) + \OO(n^{-2/(p+4)})
= b \, \phi(\bb{s}) + \oo(b) + \OO(n^{-2/(p+4)}), \quad n\to \infty,
\]
where $\phi$ is defined in \eqref{eq:phi}.
\end{proposition}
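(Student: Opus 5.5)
Since $\Bias(\hat{f}_{n,b}(\bb{s})) = \Bias(\widetilde{f}_{n,b}(\bb{s})) + \EE[\hat{f}_{n,b}(\bb{s}) - \widetilde{f}_{n,b}(\bb{s})]$, and Proposition~\ref{prop:bias.pseudo} already gives the second equality, it suffices to show that $\EE[\hat{f}_{n,b}(\bb{s}) - \widetilde{f}_{n,b}(\bb{s})] = \OO(n^{-2/(p+4)})$ uniformly in $\bb{s}$. Write
\[
\frac{1}{\hat{\pi}_i} - \frac{1}{\pi(\bb{X}_i)} = -\frac{\hat{\pi}_i - \pi(\bb{X}_i)}{\pi^2(\bb{X}_i)} + \frac{(\hat{\pi}_i - \pi(\bb{X}_i))^2}{\pi^2(\bb{X}_i)\,\hat{\pi}_i},
\]
so that the difference splits into a linear term $T_1 = -n^{-1}\sum_i \delta_i \kappa_{\bb{s},b}(\bb{Y}_i)(\hat{\pi}_i - \pi(\bb{X}_i))/\pi^2(\bb{X}_i)$ and a quadratic remainder $T_2$. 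By exchangeability, $\EE[T_1] = -\EE[\delta_1 \kappa_{\bb{s},b}(\bb{Y}_1)(\hat{\pi}_1 - \pi(\bb{X}_1))/\pi^2(\bb{X}_1)]$.

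For $T_1$, I would further linearize the Nadaraya--Watson ratio around its population counterpart. Writing $\hat{\pi}_1 - \pi(\bb{X}_1) = \{\hat{m}_1 - \pi(\bb{X}_1)\hat{g}_1\}/\hat{g}_1$ with $\hat{g}_1 = n^{-1}\sum_j K_h^*(\bb{X}_1 - \bb{X}_j)$ and $\hat{m}_1 = n^{-1}\sum_j \delta_j K_h^*(\bb{X}_1-\bb{X}_j)$, I replace $\hat{g}_1$ in the denominator by $g(\bb{X}_1)$ at the cost of another quadratic term. The numerator is $n^{-1}\sum_j (\delta_j - \pi(\bb{X}_1))K_h^*(\bb{X}_1 - \bb{X}_j)$. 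The $j=1$ term contributes $\OO((nh^p)^{-1})$ since $K^*$ is bounded and $\kappa$ integrates to a bounded conditional mean $q$. For $j\neq 1$, conditioning on $\bb{X}_1$ and using the MAR assumption (so $\delta_j$ given $\bb{X}_j$ has mean $\pi(\bb{X}_j)$, independent of $(\delta_1,\bb{Y}_1)$), the expectation reduces to $\EE[\delta_1 \kappa_{\bb{s},b}(\bb{Y}_1)\pi^{-2}(\bb{X}_1) g^{-1}(\bb{X}_1)\int (\pi(\bb{x})-\pi(\bb{X}_1))K_h^*(\bb{X}_1-\bb{x})g(\bb{x})\,\rd\bb{x}]$. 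A second-order Taylor expansion using \ref{ass:B.2}--\ref{ass:B.3} (symmetry kills first order) gives an $\OO(h^2)$ inner integral, and $\EE[\delta_1\kappa_{\bb{s},b}(\bb{Y}_1)\,|\,\bb{X}_1] = \pi(\bb{X}_1)q(\bb{X}_1)$ is bounded by \ref{ass:C.1} (or by \ref{ass:6}, since $\kappa$ is a probability density), yielding $\OO(h^2) + \OO((nh^p)^{-1}) = \OO(n^{-2/(p+4)})$ for $h\sim\kappa n^{-1/(p+4)}$. Boundary effects of the support of $\bb{X}$ under \ref{ass:4} are a nuisance; I would handle them by using the boundedness of $g$ from below and treating the region within $h$ of the boundary as contributing at most $\OO(h)$ probability times a bounded factor, or simply assume the Taylor argument holds on the support as the assumptions implicitly do.

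The main obstacle is $T_2$ and the denominator replacement, which require controlling $\EE[\kappa_{\bb{s},b}(\bb{Y}_1)(\hat{\pi}_1-\pi(\bb{X}_1))^2/\hat{\pi}_1]$ uniformly. The issue is that $\hat{\pi}_1$ may be small with small probability. I would first show, by Bernstein's inequality conditionally on $\bb{X}_1$ together with \ref{ass:4}, \ref{ass:5}, that $\PP(\hat{\pi}_1 < \pi_{\min}/2 \text{ or } \hat{g}_1 < g_{\min}/2) \le C\exp(-c\,nh^p)$, which is $\oo(n^{-k})$ for any $k$; on that event one bounds crudely using $\delta_1/\hat{\pi}_1 \le n$ (since $\hat{\pi}_1\ge K^*(0)h^{-p}/(n \cdot \sup \hat g)$-type lower bounds when $\delta_1=1$) together with $\EE[\kappa_{\bb{s},b}(\bb{Y}_1)\mid\bb{X}_1]\le C$. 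On the good event, $T_2$ is bounded by $C\,\EE[q(\bb{X}_1)(\hat{\pi}_1-\pi(\bb{X}_1))^2]$, and the standard conditional MSE bound for Nadaraya--Watson gives $\OO(h^4 + (nh^p)^{-1}) = \OO(n^{-4/(p+4)}) + \OO(n^{-4/(p+4)})$, which is within the target rate. Combining the pieces gives the claimed expansion uniformly in $\bb{s}$, since every constant used depends only on bounds for $q$, $\pi$, $g$, and $K^*$.
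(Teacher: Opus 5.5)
Your argument is correct and reaches the same $\OO(h^2+n^{-1}h^{-p})=\OO(n^{-2/(p+4)})$ rate. Your treatment of the linear term is essentially the paper's computation, reorganized:
the paper splits $\delta_i=(\delta_i-\pi(\bb{X}_i))+\pi(\bb{X}_i)$. It gets $\OO(n^{-1}h^{-p})$ from the diagonal $j=i$ term of the first piece, since the off-diagonal terms vanish after conditioning on $\bb{X}_{1:n}$, and $\OO(h^2)$ from the Nadaraya--Watson bias in the second. You instead freeze the denominator at $g(\bb{X}_1)$ and split the kernel sum into $j=1$ versus $j\neq 1$, which produces the same two contributions.

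The genuine difference is the remainder. The paper expands $1/\hat\pi_i$ in the full series $\sum_{j\ge 0}(-1)^j(\hat\pi_i-\pi(\bb{X}_i))^j/\pi(\bb{X}_i)^{j+1}$. It bounds each term with the conditional moment bounds $\EE[|\hat\pi_i-\pi(\bb{X}_i)|^k\mid\bb{X}_i]\ll n^{-2k/(p+4)}$, imported from a multivariate extension of Geenens' results, and replaces $1/\hat g_n$ by $1/g$ by appeal to uniform strong consistency. You instead use three ingredients: the exact identity with a single remainder $(\hat\pi-\pi)^2/(\pi^2\hat\pi)$; a Bernstein good event on which $\hat\pi_1$ and $\hat g_1$ are bounded below; and the crude bound $\delta_1/\hat\pi_1\le n\sup K^*/K^*(\bb{0}_p)$ off that event.

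The paper's route is shorter given the imported moment bounds. Yours is more self-contained and actually more rigorous, for two reasons:
the paper's series converges only when $|\hat\pi_i-\pi(\bb{X}_i)|<\pi(\bb{X}_i)$, which can fail (e.g.\ if $\pi(\bb{X}_i)<1/2$ and $\hat\pi_i$ is near $1$); and almost-sure consistency of $\hat g_n$ does not by itself control $\EE[1/\hat g_n(\bb{X}_1)]$. Your good-event argument settles both using only second moments and an exponential inequality, at the mild extra cost of requiring $K^*(\bb{0}_p)>0$.

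The paper ignores boundary effects entirely, so the following does not put you behind it, but two of your boundary remarks need refining.
First, in the strip near the boundary of the support of $g$, the inner integral $\int(\pi(\bb{x})-\pi(\bb{X}_1))K_h^*(\bb{X}_1-\bb{x})g(\bb{x})\,\rd\bb{x}$ is $\OO(h)$ by the Lipschitz continuity of $\pi$. It is this $\OO(h)\times\OO(h)$ product that keeps the strip at order $h^2$; $\OO(h)$ probability times a merely bounded factor would only give $\OO(n^{-1/(p+4)})$.
Second, in that strip $\EE[\hat g_1\mid\bb{X}_1]$ may be only a fixed fraction of $g(\bb{X}_1)$ (about $2^{-p}$ at a corner of a box). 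The threshold $g_{\min}/2$ in your good event should therefore be lowered to a constant that depends on the geometry of the support.
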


\begin{proposition}[Pointwise variance]\label{prop:variance.feasible}
Suppose that Assumptions~\ref{ass:1}, \ref{ass:2}, \ref{ass:4}--\ref{ass:6}, and \ref{ass:B.1}--\ref{ass:B.3} hold. Choose $b\sim c n^{-2/(d+4)}$ and $h\sim \kappa n^{-1/(p+4)}$ as $n\to \infty$. For any $\bb{s}\in \mathrm{Int}(\mathcal{S}_d)$ such that $f(\bb{s})\in (0,\infty)$, we have
\[
\begin{aligned}
\Var(\hat{f}_{n,b}(\bb{s}))
&= \Var(\widetilde{f}_{n,b}(\bb{s})) - n^{-1} \xi(\bb{s}) + \OO(n^{-4/(p+4)}) + \OO(n^{-2/(p+4)} n^{-2/(d+4)}) \\
&= n^{-1} b^{-d/2} \psi(\bb{s}) f(\bb{s}) (1 + \zeta(\bb{s})) - n^{-1} \xi(\bb{s}) \\
&\quad+ \oo_{\bb{s}}(n^{-4/(d+4)}) + \OO(n^{-4/(p+4)}) + \OO(n^{-2/(p+4)} n^{-2/(d+4)}),
\end{aligned}
\]
where $\psi$ and $\zeta$ are defined in \eqref{eq:psi.zeta}, and
\[
\xi(\bb{s}) = \EE\left[\frac{1 - \pi(\bb{X}_1)}{\pi(\bb{X}_1)} \big\{\EE[\kappa_{\bb{s},b}(\bb{Y}_1) \mid \bb{X}_1]\big\}^2\right].
\]
\end{proposition}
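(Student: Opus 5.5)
The plan is to linearize the feasible estimator around the pseudo estimator and then identify the covariance between the two pieces. For each $i$, expand
\[
\frac{1}{\hat{\pi}_i} = \frac{1}{\pi(\bb{X}_i)} - \frac{\hat{\pi}_i - \pi(\bb{X}_i)}{\pi^2(\bb{X}_i)} + \frac{(\hat{\pi}_i - \pi(\bb{X}_i))^2}{\pi^2(\bb{X}_i)\hat{\pi}_i}.
\]
This gives the decomposition $\hat{f}_{n,b}(\bb{s}) = \widetilde{f}_{n,b}(\bb{s}) - A_n + R_n$, where
\[
A_n = n^{-1}\sum_{i=1}^n \frac{\delta_i(\hat{\pi}_i - \pi(\bb{X}_i))}{\pi^2(\bb{X}_i)}\kappa_{\bb{s},b}(\bb{Y}_i).
\]
Assumptions \ref{ass:4}, \ref{ass:5} and \ref{ass:B.2} make $\hat{\pi}_i$ bounded away from zero with overwhelming probability (a truncation argument handles the remaining event). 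Standard Nadaraya--Watson bounds then give $\EE[(\hat\pi_i-\pi(\bb{X}_i))^4]^{1/2}=\OO(h^4+(nh^p)^{-2})^{1/2}=\OO(n^{-4/(p+4)})$. Since $\kappa_{\bb{s},b}$ is at most $\OO_{\bb{s}}(b^{-d/2})$ in second moment under \ref{ass:6}, the remainder $R_n$ contributes only $\OO(n^{-4/(p+4)})$ and cross terms of order $\OO(n^{-2/(p+4)}n^{-2/(d+4)})$ to the variance.

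The core step is to write $A_n$, up to the Nadaraya--Watson bias, as a second-order U-statistic. Replace the random denominator $\hat{g}(\bb{X}_i)=n^{-1}\sum_j K_h^*(\bb{X}_i-\bb{X}_j)$ by $g(\bb{X}_i)$; the error is again of the quadratic, negligible type. This gives
\[
\hat{\pi}_i-\pi(\bb{X}_i)\approx \frac{1}{n g(\bb{X}_i)}\sum_{j} (\delta_j-\pi(\bb{X}_i))K_h^*(\bb{X}_i-\bb{X}_j).
\]
So $A_n\approx n^{-2}\sum_{i,j} H(Z_i,Z_j)$ with $Z_i=(\bb{X}_i,\bb{Y}_i,\delta_i)$. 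Apply the Hoeffding decomposition. The mean part is $\OO(h^2)=\OO(n^{-2/(p+4)})$ by \ref{ass:B.2}--\ref{ass:B.3}, which is consistent with Proposition~\ref{prop:bias.feasible}. The degenerate part has variance $\OO(n^{-2}h^{-p}b^{-d/2})$, which is negligible. The leading projection onto $Z_j$ is obtained by conditioning on $Z_j$ and integrating over $\bb{X}_i$, using $\EE[\delta_i\kappa_{\bb{s},b}(\bb{Y}_i)\mid \bb{X}_i]=\pi(\bb{X}_i)q(\bb{X}_i)$ from MAR. The kernel $K_h^*$ concentrates $\bb{X}_i$ at $\bb{X}_j$, and \ref{ass:C.1} controls the resulting $\OO(h^2)$ error. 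The projection is therefore
\[
n^{-1}\sum_j \frac{(\delta_j-\pi(\bb{X}_j))\,q(\bb{X}_j)}{\pi(\bb{X}_j)} + \text{(smaller terms)}.
\]

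The variance then follows from $\Var(\widetilde f)-2\Cov(\widetilde f,A_n)+\Var(A_n)$. Using MAR and $\EE[(\delta-\pi)^2\mid\bb{X}]=\pi(1-\pi)$, we have $\Var(A_n)\approx n^{-1}\EE[(1-\pi)q^2/\pi]=n^{-1}\xi(\bb{s})$. The covariance of $\widetilde{f}_{n,b}$ with this linear term is
\[
n^{-1}\EE\Bigl[\frac{\delta}{\pi}\kappa_{\bb{s},b}(\bb{Y})\frac{(\delta-\pi)q}{\pi}\Bigr]=n^{-1}\EE\Bigl[\frac{1-\pi}{\pi}q^2\Bigr]=n^{-1}\xi(\bb{s}).
\]
Combining, $\Var(\hat f)=\Var(\widetilde f)-2n^{-1}\xi+n^{-1}\xi=\Var(\widetilde f)-n^{-1}\xi$. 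Substituting Proposition~\ref{prop:variance.pseudo} with $b\sim cn^{-2/(d+4)}$ turns $\oo_{\bb{s}}(n^{-1}b^{-d/2})$ into $\oo_{\bb{s}}(n^{-4/(d+4)})$, which yields the second line of the statement.

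I expect the main obstacle to be the bookkeeping of the cross terms between the $\OO(h^2)$ bias of $\hat\pi_i$ and the $b^{-d/2}$-sized fluctuations of $\kappa_{\bb{s},b}$. I would control these by Cauchy--Schwarz. The diagonal $i=j$ terms in the U-statistic also need care: they carry a factor $(nh^p)^{-1}$ and must be shown to be absorbed into $\OO(n^{-4/(p+4)})$. Throughout, $q$ depends on $b$ but is uniformly bounded with bounded derivatives by \ref{ass:C.1}, so all constants are uniform in $n$.
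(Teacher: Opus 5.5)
Your proposal is correct. It shares the paper's skeleton: linearize $1/\hat\pi_i$, extract $B_n=n^{-1}\sum_j(\delta_j-\pi(\bb X_j))q(\bb X_j)/\pi(\bb X_j)$ from the first-order term, and conclude with $\Cov(\widetilde f_{n,b}(\bb s),B_n)=\Var(B_n)=n^{-1}\xi(\bb s)$. The execution differs in two places.

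First, the paper expands $1/\hat\pi_i$ as a full geometric series and sums the bounds \eqref{eq:bound.M2} over $j$. That tacitly requires $|\hat\pi_i-\pi(\bb X_i)|<\pi(\bb X_i)$, and a constant in \eqref{eq:Geenens} that does not depend on the moment order; neither is guaranteed. Your exact two-term identity with remainder $(\hat\pi_i-\pi(\bb X_i))^2/\{\pi(\bb X_i)^2\hat\pi_i\}$ avoids both issues. It only needs $\hat\pi_i$ bounded below. Truncation does handle this: on $\{\delta_i=1\}$ one has deterministically $\hat\pi_i\ge K^*(\bb 0_p)/(n\sup K^*)$, provided $K^*(\bb 0_p)>0$ (e.g.\ Gaussian), and the bad event is exponentially small in $nh^p$.

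Second, the paper isolates $B_n$ through the ad hoc pieces $U_{n,\cdot},V_{n,\cdot},W_{n,\cdot},R_{n,\cdot}$ and the auxiliary estimator $\hat q_n$, keeping $\hat g_n$ in every denominator. You instead replace $\hat g_n$ by $g$ at quadratic cost and then apply a Hoeffding decomposition. This is cleaner: constant means drop out of the variance, and the degenerate part is exactly uncorrelated with $\widetilde f_{n,b}(\bb s)$, so the bookkeeping is shorter and the errors potentially sharper. Because $H$ is not symmetric, also record the projection onto the first argument. It equals $\delta_i\kappa_{\bb s,b}(\bb Y_i)\,\OO(h^2)$ and contributes only $\OO(n^{-1}b^{-d/2}h^4)$ to the variance.

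There is one slip to fix: the bound should read $\EE[(\hat\pi_i-\pi(\bb X_i))^4]^{1/2}=\OO((h^8+(nh^p)^{-2})^{1/2})=\OO(n^{-4/(p+4)})$. As written, $(h^4+(nh^p)^{-2})^{1/2}\asymp n^{-2/(p+4)}$.

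Finally, the step you flag as the main obstacle is not needed for the statement as written. By Assumptions~\ref{ass:5}--\ref{ass:6}, $\xi(\bb s)\le\pi_{\min}^{-1}\sup_{\bb x}q(\bb x)^2<\infty$, so $n^{-1}\xi(\bb s)=\OO(n^{-1})=\oo(n^{-4/(p+4)})$ for every $p\ge1$. That term is therefore absorbed in the stated error. It suffices to prove $\EE[(\hat f_{n,b}(\bb s)-\widetilde f_{n,b}(\bb s))^2]\ll n^{-4/(p+4)}$. This follows from your remainder bound plus the crude estimate $\EE[A_n^2]\ll n^{-1}b^{-d/2}n^{-4/(p+4)}+n^{-4/(p+4)}$. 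To get that estimate, condition on $(\bb X_{1:n},\delta_{1:n})$: the $\bb Y_i$ are then independent with conditional mean $q(\bb X_i)$. Then apply $|\Var(U+V)-\Var(U)|\le\Var(V)+2\{\Var(U)\Var(V)\}^{1/2}$ with $\Var(\widetilde f_{n,b}(\bb s))\ll n^{-4/(d+4)}$. This shortcut was already available to the paper at \eqref{eq:star}. It also removes the need for Assumption~\ref{ass:C.1}, which both you and the paper invoke even though the proposition does not assume it.
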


\begin{corollary}[Mean squared error]\label{cor:MSE.feasible}
Suppose that Assumptions~\ref{ass:1}, \ref{ass:2}, \ref{ass:4}--\ref{ass:6}, and \ref{ass:B.1}--\ref{ass:B.3} hold. Choose $b\sim c n^{-2/(d+4)}$ and $h\sim \kappa n^{-1/(p+4)}$ as $n\to \infty$ for some $c,\kappa\in (0,\infty)$. For any $\bb{s}\in \mathrm{Int}(\mathcal{S}_d)$ such that $f(\bb{s})\in (0,\infty)$, we have
\[
\begin{aligned}
\mathrm{MSE}[\hat{f}_{n,b}(\bb{s})]
&= \EE[(\hat{f}_{n,b}(\bb{s}) - f(\bb{s}))^2] \\
&= \Var(\hat{f}_{n,b}(\bb{s})) + \{\Bias[\hat{f}_{n,b}(\bb{s})]\}^2 \\
&= n^{-1} b^{-d/2} \psi(\bb{s}) f(\bb{s}) (1 + \zeta(\bb{s})) - n^{-1} \xi(\bb{s}) + b^2 \phi^2(\bb{s}) \\
&\quad+ \oo_{\bb{s}}(n^{-4/(d+4)}) + \OO(n^{-4/(p+4)}) + \OO(n^{-2/(p+4)} n^{-2/(d+4)}), \quad n\to \infty.
\end{aligned}
\]
\end{corollary}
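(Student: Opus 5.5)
This corollary follows by combining Proposition~\ref{prop:bias.feasible} and Proposition~\ref{prop:variance.feasible} through the decomposition $\mathrm{MSE} = \Var + \Bias^2$. That decomposition is an exact identity, so only the remainder terms need care. A small point first: Proposition~\ref{prop:bias.feasible} is stated under \ref{ass:3}, while the corollary lists \ref{ass:2}. I read the corollary as implicitly using the bias expansion, so I would invoke \ref{ass:3} (which contains \ref{ass:2}) when quoting the bias term. That assumption is needed for $\phi$ to be defined anyway.

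\textbf{Step 1: bias.} With $b\sim c n^{-2/(d+4)}$, Proposition~\ref{prop:bias.feasible} gives
\[
\Bias(\hat{f}_{n,b}(\bb{s})) = b\,\phi(\bb{s}) + \oo(n^{-2/(d+4)}) + \OO(n^{-2/(p+4)}).
\]
Squaring gives $b^2\phi^2(\bb{s})$ plus three cross or square terms, which I bound one at a time:
\begin{itemize}
\item $2b\phi(\bb{s})\cdot \oo(b) = \oo(n^{-4/(d+4)})$, since $\phi$ is bounded on $\mathcal{S}_d$ by \ref{ass:3};
\item $2b\phi(\bb{s})\cdot\OO(n^{-2/(p+4)}) = \OO(n^{-2/(p+4)}n^{-2/(d+4)})$;
\item $\oo(b)\cdot\OO(n^{-2/(p+4)})$ is absorbed in the previous term;
\item the squares $\oo(b^2)$ and $\OO(n^{-4/(p+4)})$ match the stated remainders directly.
\end{itemize}

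\textbf{Step 2: variance.} I insert the second line of Proposition~\ref{prop:variance.feasible} verbatim. It already carries the remainders $\oo_{\bb{s}}(n^{-4/(d+4)})$, $\OO(n^{-4/(p+4)})$ and $\OO(n^{-2/(p+4)}n^{-2/(d+4)})$. Note that $n^{-1}b^{-d/2}\asymp n^{-4/(d+4)}$ under the bandwidth choice, so the $\oo_{\bb{s}}(n^{-1}b^{-d/2})$ error from Proposition~\ref{prop:variance.pseudo} has already been converted to $\oo_{\bb{s}}(n^{-4/(d+4)})$.

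\textbf{Step 3: combine.} Adding Steps 1 and 2 collects the leading terms $n^{-1}b^{-d/2}\psi f(1+\zeta) - n^{-1}\xi + b^2\phi^2$. The remainders merge into the three displayed classes, because $\oo(b^2) = \oo(n^{-4/(d+4)})$.

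The only delicate point is keeping $n^{-1}\xi(\bb{s})$ as a separate explicit term rather than absorbing it. Since $\kappa_{\bb{s},b}$ concentrates, $q(\bb{x})\to f_{\bb{Y}_1\mid\bb{X}_1}(\bb{s}\mid\bb{x})$, which is bounded by \ref{ass:6}. Together with \ref{ass:5}, this gives $\xi(\bb{s}) = \OO_{\bb{s}}(1)$. So $n^{-1}\xi$ is of smaller order than the main variance term, but it is retained as in Proposition~\ref{prop:variance.feasible}. No further obstacle is expected.
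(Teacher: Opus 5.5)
Your proposal is correct and matches the paper. The paper states this corollary without a separate proof, as the immediate combination of Propositions~\ref{prop:bias.feasible} and~\ref{prop:variance.feasible} through $\mathrm{MSE}=\Var+\Bias^2$, and your bookkeeping of the cross terms in the squared bias under $b\sim c\,n^{-2/(d+4)}$ is what that combination requires. Your remark that Assumption~\ref{ass:3}, not only \ref{ass:2}, must be invoked (the bias expansion and $\phi$ need it) is a valid correction to the hypotheses as listed.
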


\begin{theorem}[Asymptotic normality]\label{thm:CLT.feasible}
Suppose that Assumptions~\ref{ass:1}, \ref{ass:3}--\ref{ass:6}, and \ref{ass:B.1}--\ref{ass:B.3} hold. Choose $b\sim c n^{-2/(d+4)}$ and $h\sim \kappa n^{-1/(p+4)}$ as $n\to \infty$ for some $c,\kappa\in (0,\infty)$. If $p < d$, then for any $\bb{s}\in \mathrm{Int}(\mathcal{S}_d)$ such that $f(\bb{s})\in (0,\infty)$, we have
\[
n^{1/2} b^{\hspace{0.2mm}d/4} \frac{\hat{f}_{n,b}(\bb{s}) - f(\bb{s}) - b \phi(\bb{s})}{\sqrt{\psi(\bb{s}) f(\bb{s}) (1 + \zeta(\bb{s}))}} \rightsquigarrow \mathcal{N}(0, 1).
\]
\end{theorem}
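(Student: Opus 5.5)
The plan is to write $\hat{f}_{n,b}(\bb{s}) = \widetilde{f}_{n,b}(\bb{s}) + R_n(\bb{s})$, where
\[
R_n(\bb{s}) = n^{-1}\sum_{i=1}^n \delta_i \kappa_{\bb{s},b}(\bb{Y}_i)\Big\{\frac{1}{\hat{\pi}_i(\bb{X}_{1:n})} - \frac{1}{\pi(\bb{X}_i)}\Big\}.
\]
Then apply Slutsky's lemma together with Theorem~\ref{thm:CLT.pseudo}. Since $n^{1/2} b^{d/4} \asymp n^{2/(d+4)}$ when $b\sim c n^{-2/(d+4)}$, it suffices to show that $n^{2/(d+4)} R_n(\bb{s}) \to 0$ in probability. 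I will do this by proving $\EE|R_n(\bb{s})| = \oo(n^{-2/(d+4)})$, or $\EE[R_n^2(\bb{s})]=\oo(n^{-4/(d+4)})$.

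The second-moment route follows from what is already available. By Propositions~\ref{prop:bias.pseudo}--\ref{prop:bias.feasible}, $\EE[R_n(\bb{s})] = \OO(n^{-2/(p+4)})$. By Proposition~\ref{prop:variance.feasible}, together with a Cauchy--Schwarz bound on the covariance between $\widetilde{f}_{n,b}$ and $R_n$, the variance of $R_n$ is $\OO(n^{-1}) + \OO(n^{-4/(p+4)}) + \OO(n^{-2/(p+4)}n^{-2/(d+4)})$. However, deriving $\Var(R_n)$ from the stated variance expansion alone is not clean, so I would bound $\EE[R_n^2]$ directly.

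Linearize $1/\hat{\pi}_i - 1/\pi_i = -(\hat{\pi}_i-\pi_i)/\pi_i^2 + (\hat{\pi}_i-\pi_i)^2/(\pi_i^2\hat{\pi}_i)$. Assumptions~\ref{ass:4}, \ref{ass:5}, \ref{ass:B.1}--\ref{ass:B.3} give, uniformly on the bounded support of $\bb{X}$, that $\hat{\pi}_i \ge \pi_{\min}/2$ with probability tending to one. They also give the standard Nadaraya--Watson rates: bias $\OO(h^2)$ and variance $\OO((nh^p)^{-1})$. For $h\sim\kappa n^{-1/(p+4)}$ both are of order $n^{-4/(p+4)}$ in mean square.

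Next I split the linear term, writing $\hat{\pi}_i-\pi_i$ as a bias part $B(\bb{X}_i)=\OO(h^2)$ plus a centered sum $(n g(\bb{X}_i))^{-1}\sum_j K_h^*(\bb{X}_i-\bb{X}_j)(\delta_j-\pi(\bb{X}_j))$ plus a negligible remainder.

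For the bias part, $\EE[\delta_i\kappa_{\bb{s},b}(\bb{Y}_i)]$ is bounded through $q$ and Assumption~\ref{ass:6}. It therefore contributes $\OO(h^2)=\OO(n^{-2/(p+4)})$.

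For the centered double sum, rewrite $n^{-2}\sum_{i,j}$ as a U-statistic-type expression. Its projection is $n^{-1}\sum_j (\delta_j-\pi(\bb{X}_j))\,\pi(\bb{X}_j) q(\bb{X}_j)/\pi^2(\bb{X}_j)$ up to $\OO(h^2)$, which has variance $\OO(n^{-1})$ because $q$ is bounded. The degenerate part has variance $\OO(n^{-2}h^{-p}b^{-d/2})$, which is much smaller than $n^{-4/(d+4)}$. This yields $R_n = \OO_{\PP}(n^{-2/(p+4)} + n^{-1/2} + (nh^{p})^{-1})$.

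The crucial comparison is $n^{2/(d+4)}\cdot n^{-2/(p+4)} \to 0$, which holds exactly because $p<d$. The other terms, $n^{2/(d+4)-1/2}$ and $n^{2/(d+4)}(nh^p)^{-1}=n^{2/(d+4)-4/(p+4)}$, also vanish when $p<d$. The quadratic term $(\hat{\pi}_i-\pi_i)^2$ is handled by the uniform rate $\sup|\hat{\pi}-\pi| = \OO_{\PP}(h^2+\sqrt{\log n/(nh^p)})$, whose square times $n^{2/(d+4)}$ vanishes.

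The main obstacle is the centered double sum. Its summands involve $\kappa_{\bb{s},b}(\bb{Y}_i)$, which is of size up to $b^{-d/2}$. One therefore needs conditioning on $\bb{X}_i$, via $q$ and Assumption~\ref{ass:6}, to show that $\kappa$ enters only through its conditional mean in the dominant projection. The diagonal terms $i=j$ and the random denominator of $\hat{\pi}_i$ must also be controlled.

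I would first handle the denominator by replacing $\hat{g}$ with $g$ on the high-probability event $\{\sup|\hat{g}-g|\le g_{\min}/2\}$. I would then compute second moments term by term, as in the proof of Proposition~\ref{prop:variance.feasible}, reusing its bounds wherever possible.
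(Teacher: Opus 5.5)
Your proposal is correct, but it takes a genuinely different and finer route than the paper.

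The paper never separates the bias and the noise of $\hat\pi_i$. It expands $1/\hat\pi_i$ as a full geometric series and writes $\hat f_{n,b}(\bb{s})-\widetilde f_{n,b}(\bb{s})=\sum_{j\ge1}(-1)^jM_{n,j}$. It then reuses the crude bound $\EE[M_{n,j}^2]\ll n^{-4j/(p+4)}$ from \eqref{eq:bound.M2}, which charges each summand the full pointwise root-mean-square error $n^{-2/(p+4)}$ of $\hat\pi_i$. It concludes $L^2$ convergence because $n^{4/(d+4)}n^{-4/(p+4)}\to0$ iff $p<d$. Given the variance proposition, this takes three lines.

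You instead use the exact identity $1/\hat\pi-1/\pi=-(\hat\pi-\pi)/\pi^2+(\hat\pi-\pi)^2/(\pi^2\hat\pi)$, work on the event $\{\min_i\hat\pi_i\ge\pi_{\min}/2\}$, and Hoeffding-decompose the linear term. Your projection coincides, up to sign, with the paper's $B_n$ in \eqref{eq:M.B.R}, which is only $\OO_{\PP}(n^{-1/2})$.

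This buys two things. First, it handles the random denominator cleanly. The paper's series converges only on $\{|\hat\pi_i-\pi(\bb{X}_i)|<\pi(\bb{X}_i)\ \forall i\}$, and off that event $1/\hat\pi_i$ has no a priori $L^2$ control. Convergence in probability on a high-probability event needs no such control. Second, it shows that $p<d$ is forced only by the $\OO(h^2)$ smoothing bias of $\hat\pi$, since the stochastic part contributes $\OO_{\PP}(n^{-1/2}+(nh^p)^{-1})$. So with an undersmoothed $h$ your argument would reach beyond $p<d$ (up to $p<d+2$), which the paper's crude bound cannot.

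Two details to tighten when you write it out:
\begin{itemize}
\item Since $g\ge g_{\min}$ on a bounded support, $g$ jumps at the edge. In an $\OO(h)$-layer near the boundary the Nadaraya--Watson bias is only $\OO(h)$, and $\hat g_n$ does not approach $g$ there. Center with $\bar g_h(\bb{x})=\EE[K_h^*(\bb{x}-\bb{X}_1)]$ instead of $g$. The averaged bias is still $\OO(h^2)$, and $n^{2/(d+4)}\sup|\hat\pi-\pi|^2\to0$ in probability still holds because $n^{2/(d+4)}h^2\to0$.
\item Your ``up to $\OO(h^2)$'' in the projection would require Assumption~\ref{ass:C.1}, which the theorem does not assume. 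You do not need it: the projection with the kernel-smoothed weight already has variance $\OO(n^{-1})$ because $q$ is bounded.
\end{itemize}
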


\begin{remark}\label{rem:dimensionality}
In Theorem~\ref{thm:CLT.feasible}, the condition $p < d$ ensures that the error introduced by nonparametrically estimating the propensity scores does not overpower the error of the density estimation itself. In our framework, the Nadaraya--Watson estimator for the propensity score (the nuisance parameter) converges at a rate of $\mathcal{O}(n^{-4/(p+4)})$, while the Dirichlet KDE converges at a rate of $\mathcal{O}(n^{-4/(d+4)})$. For the uncertainty of the propensity score estimation to be asymptotically negligible at the first order, its convergence must strictly dominate the density estimation error: $n^{-4/(p+4)} = \mathrm{o}(n^{-4/(d+4)})$, which algebraically simplifies to $p < d$. If $p \geq d$, the curse of dimensionality inflates the variance of the inverse probability weights, causing the nuisance parameter error to dominate and invalidating standard asymptotic normality results. One can still obtain the asymptotic normality when $p \geq d$ by imposing stricter H\"older-type smoothness conditions on $\pi$ and the covariate density $g$, possibly coupled with the use of higher-order kernels. This would accelerate the convergence rate of the Nadaraya--Watson estimator; see, e.g., \citet{Alwadeai2026} for such a treatment in the context of beta kernel density estimation with missing data. We have chosen to omit this variant here to avoid overly technical assumptions and maintain the clarity of our presentation. In practice, high-dimensional covariate settings often require dimension reduction techniques or parametric working models for the missingness mechanism to restore efficiency; see, for example, \citet{Hu2010}, \citet{Wang2018}, and \citet{Ertefaie2020}.
\end{remark}

\section{Simulation results}\label{sec:simulation}

This section presents a Monte Carlo study assessing the finite-sample performance of the proposed feasible IPW Dirichlet KDE $\hat{f}_{n,b}$. Section~\ref{sec:models.setup} describes the data-generating mechanism for $(\bb{X},\bb{Y})$, the MAR missingness model, and the propensity-score estimation procedure. Section~\ref{sec:bandwidth} details the IPW-adapted least-squares cross-validation criterion used to select the Dirichlet bandwidth $b$. Section~\ref{sec:performance} defines the performance metric and summarizes results across models, sample sizes, and missing rates. Finally, Section~\ref{sec:comparison} compares the proposed estimator with IPW alternatives based on additive and isometric log-ratio transformations, and studies the effect of sample size and missingness.

\subsection{Models and setup}\label{sec:models.setup}

The simulation involves a continuous simplex-valued response $\bb{Y}\in\mathcal{S}_2$ and a bivariate covariate vector $\bb{X}\in\R^2$. For each Monte Carlo replication, we generate an iid sample $\{(\bb{X}_i,\bb{Y}_i,\delta_i)\}_{i=1}^n$ as follows:
\begin{itemize}\setlength\itemsep{0em}
\item \textbf{Generate the response.} Sample $\bb{Y}_i$ from a two-component Dirichlet mixture distribution $F_{\bb{Y}}$:
\begin{itemize}\setlength\itemsep{0em}
\item \textbf{Model~I}: $0.4 \cdot \mathrm{Dirichlet}(1.3, 1.6, 1) + 0.6 \cdot \mathrm{Dirichlet}(1.7, 1.2, 2.5)$.
\item \textbf{Model~II}: $0.4 \cdot \mathrm{Dirichlet}(4, 1, 2) + 0.6 \cdot \mathrm{Dirichlet}(1, 3, 2)$.
\end{itemize}
\item \textbf{Generate the covariates.} Set $\bb{X}_i = \rho \, \bb{Y}_i + \sqrt{1 - \rho^{2}} \, \bb{X}^{*}_i$, where $-1 < \rho < 1$ controls the strength of association between $\bb{X}_i$ and $\bb{Y}_i$, and $\bb{X}^{*}_i \sim \mathcal{N}_2(\bb{0}_2,I_2)$ is independent of $\bb{Y}_i$.
\item \textbf{Generate MAR missingness.} Sample the missingness indicator $\delta_i$ under a MAR mechanism through the logistic propensity score model
\begin{equation}\label{eq:logistic.model}
\pi(\bb{X}_i) = \PP\left(\delta_i=1\mid \bb{X}_i\right)
= \mathrm{logit}^{-1}\big(\beta_0 + \bb{\beta}_1^{\top}\bb{X}_i\big),
\end{equation}
where $\beta_0 \in \R$ and $\bb{\beta}_1 \in \R^2$, and then $\delta_i \sim \mathrm{Bernoulli}(\pi(\bb{X}_i)), ~ i\in [n]$. In each scenario, $(\beta_0,\bb{\beta}_1)$ is calibrated so that the marginal proportion of missing responses matches the targeted missing rate.
\end{itemize}

Across both models, we consider sample sizes $n\in\{100,200,400,800\}$ and a range of missingness proportions, and we report results over $1000$ Monte Carlo replications for each configuration. The propensity score $\pi(\bb{X}_i)$ is treated as unknown for the purpose of calculating density estimates, so it is estimated using the Nadaraya--Watson estimator $\hat{\pi}_i(\bb{X}_{1:n})$ from \eqref{eq:NW}, with $K^*$ taken as the standard bivariate Gaussian kernel. The bandwidth $h$ in $K_h^*$ is selected by Silverman's rule of thumb, $h = 1.06\,\sigma\,n^{-1/(p+4)}$, where $p=2$ and $\sigma$ is the average empirical standard deviation across the coordinates of $\bb{X}$.

The induced MAR observation patterns for the two mixture models are illustrated in Figure~\ref{fig:MAR}.
\begin{figure}[H]
\includegraphics[trim=1.3cm 12cm 1.5cm 2.2cm, clip, width=0.49\textwidth]{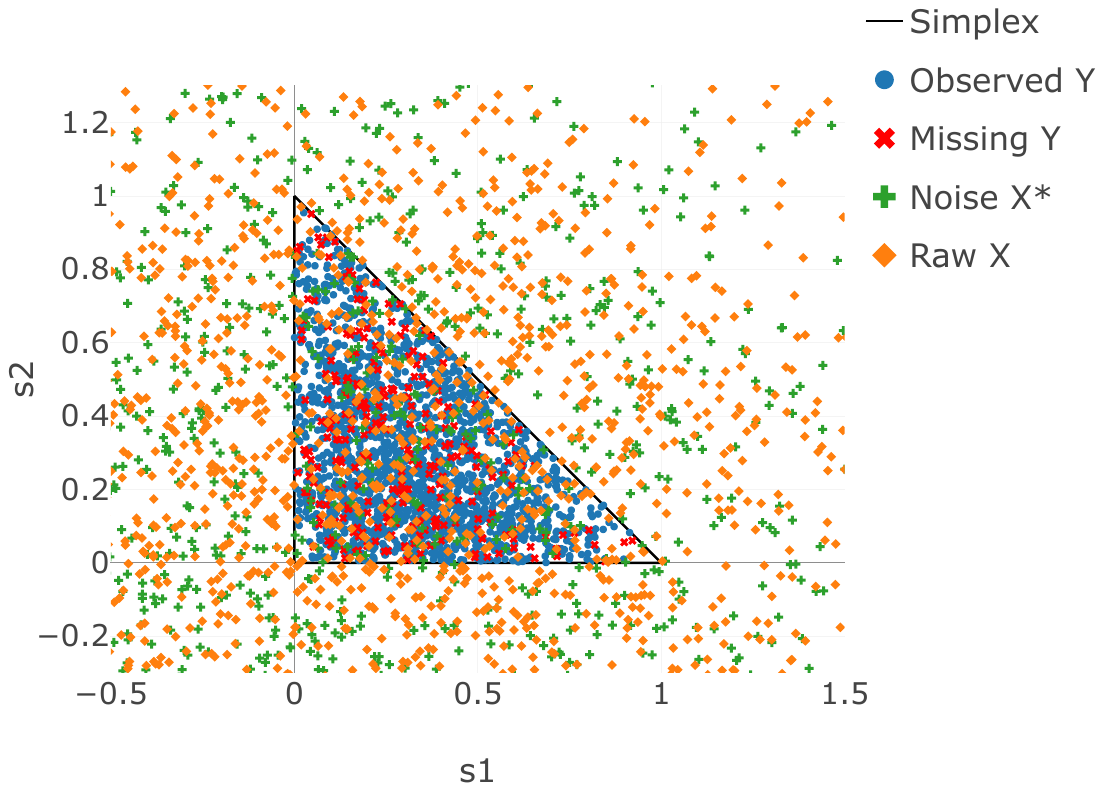}
\hfill
\includegraphics[trim=1.3cm 12cm 1.5cm 2.2cm, clip, width=0.49\textwidth]{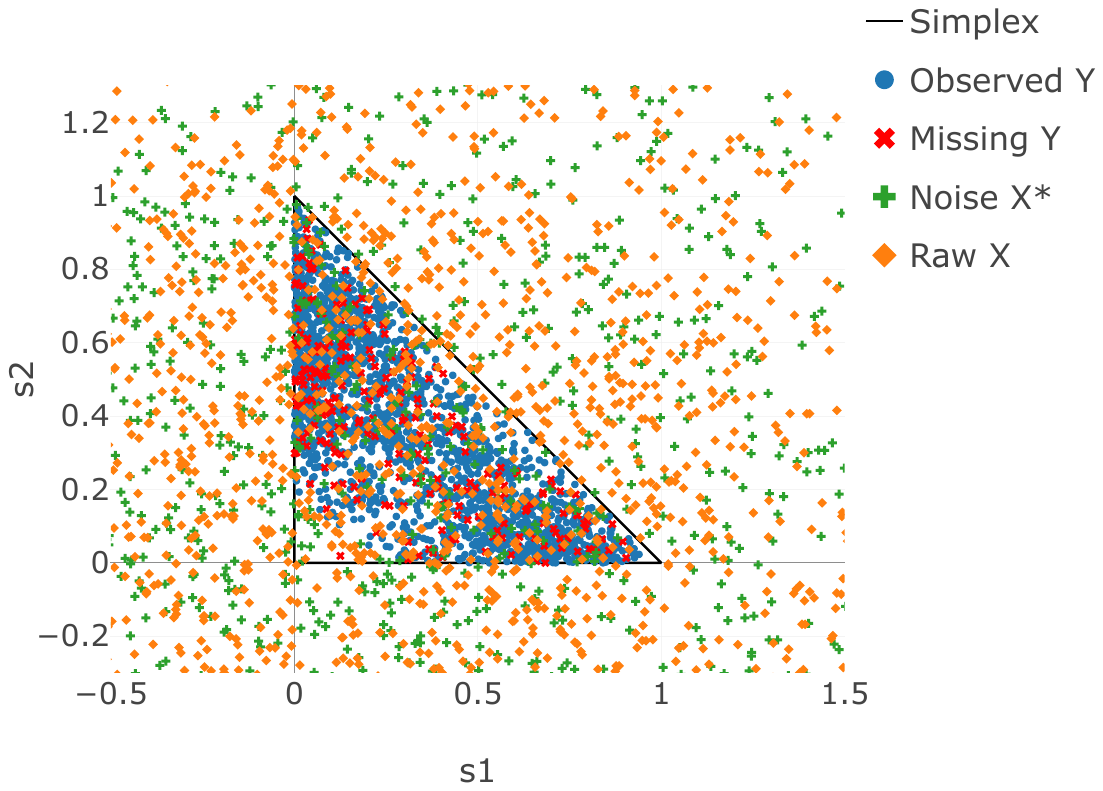}
\caption{Visualization of the MAR mechanisms for Model~I (left panel) and Model~II (right panel).}
\label{fig:MAR}
\end{figure}

Figure~\ref{fig:model.1} provides a visual comparison between the target density of Model I and the associated Dirichlet kernel density estimate under missingness, with $n=2000$ and a $10\%$ missing rate. The analogous comparison for Model~II is displayed in Figure~\ref{fig:model.2}. In both cases, the contour plots are obtained by evaluating the densities on a deterministic interior grid of $\mathcal{S}_2$ with resolution $\texttt{res}=300$ and tolerance $\varepsilon=0.01$ (see Section~\ref{sec:bandwidth} for the grid construction).

\begin{figure}[H]
\includegraphics[trim=0.3cm 1.8cm 0.5cm 1.5cm, clip, width=0.49\textwidth]{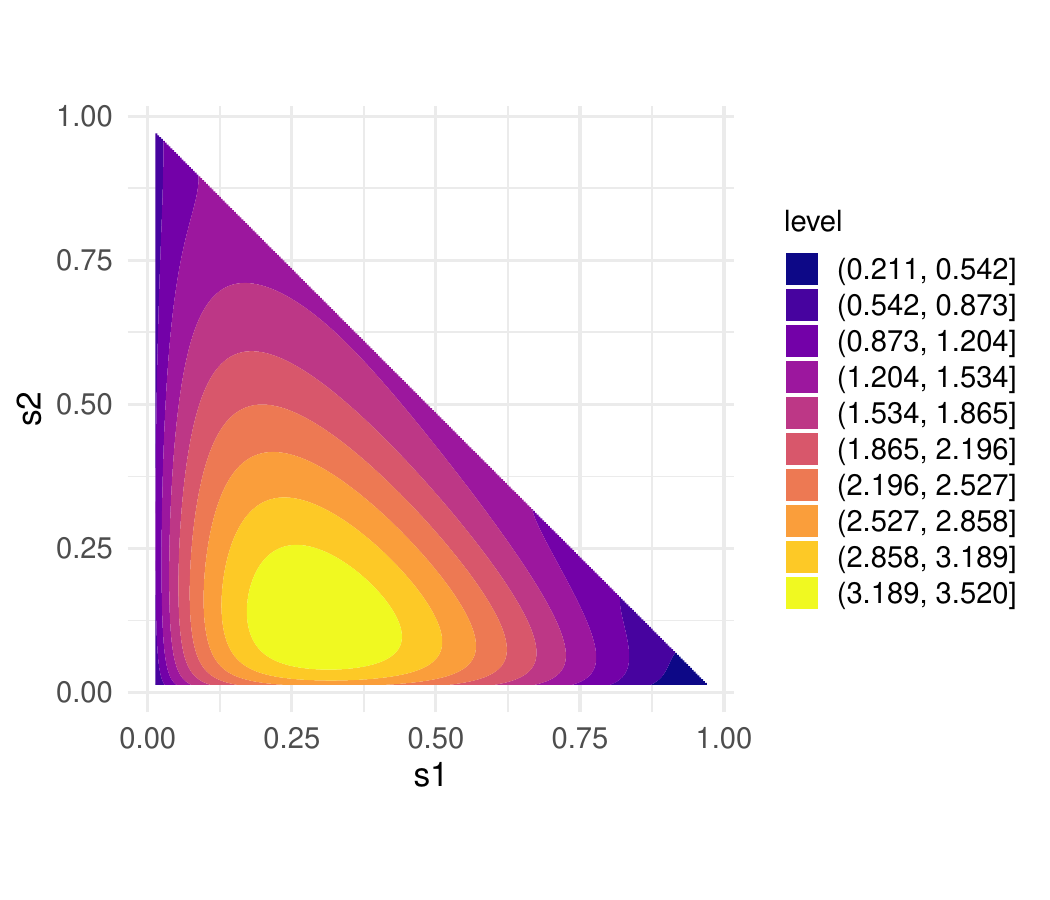}
\hfill
\includegraphics[trim=0.3cm 1.8cm 0.5cm 1.5cm, clip, width=0.49\textwidth]{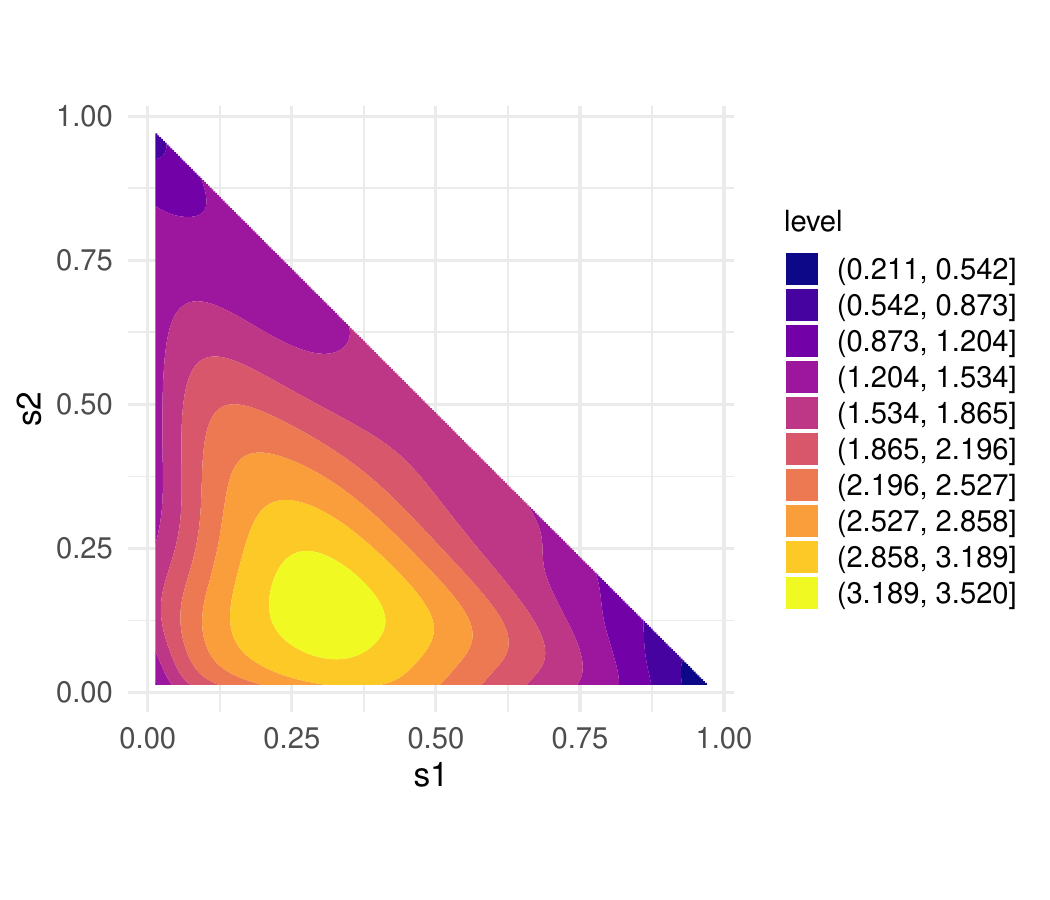}
\caption{Contour plots of the Model~I target density $f$ (left panel) and the associated Dirichlet kernel density estimate $\hat{f}_{n,0.05}$ (right panel), with a sample size $n = 2000$ and a $10\%$ missing rate.}
\label{fig:model.1}
\end{figure}

\begin{figure}[H]
\includegraphics[trim=0.3cm 2cm 0.5cm 2cm, clip, width=0.49\textwidth]{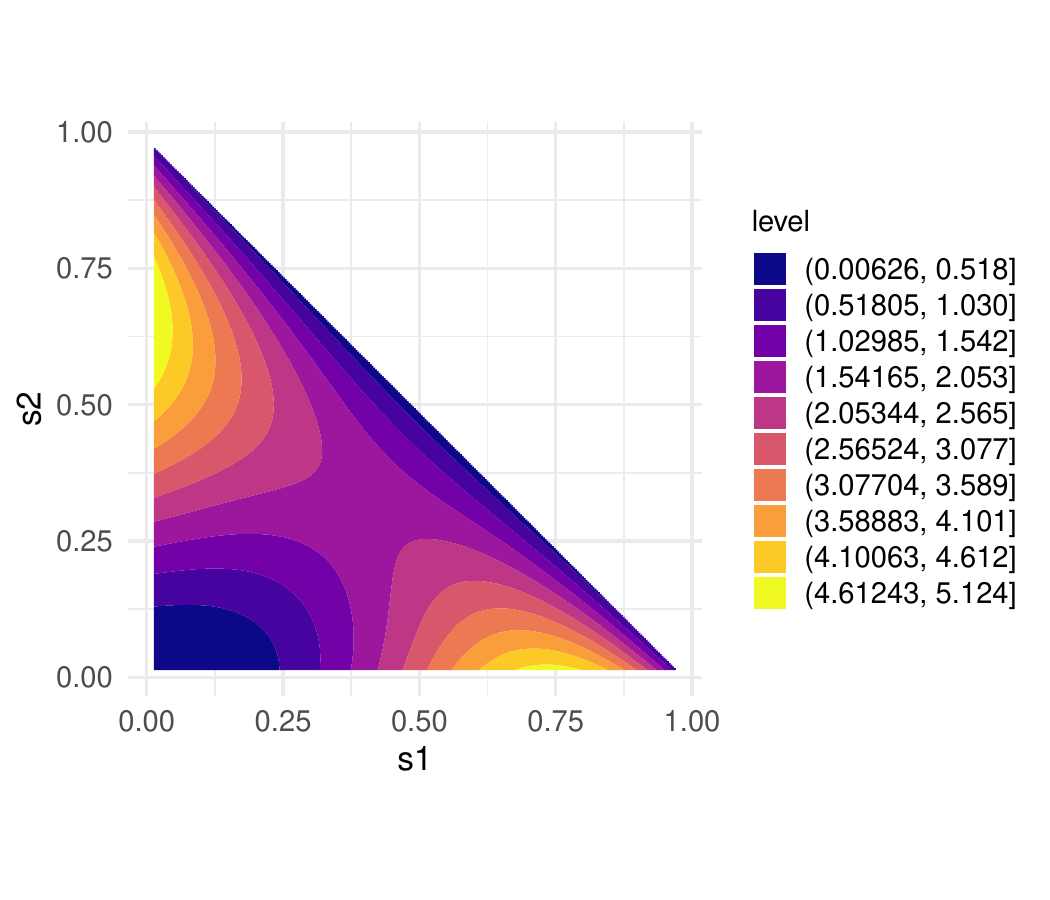}
\hfill
\includegraphics[trim=0.3cm 2cm 0.5cm 2cm, clip, width=0.49\textwidth]{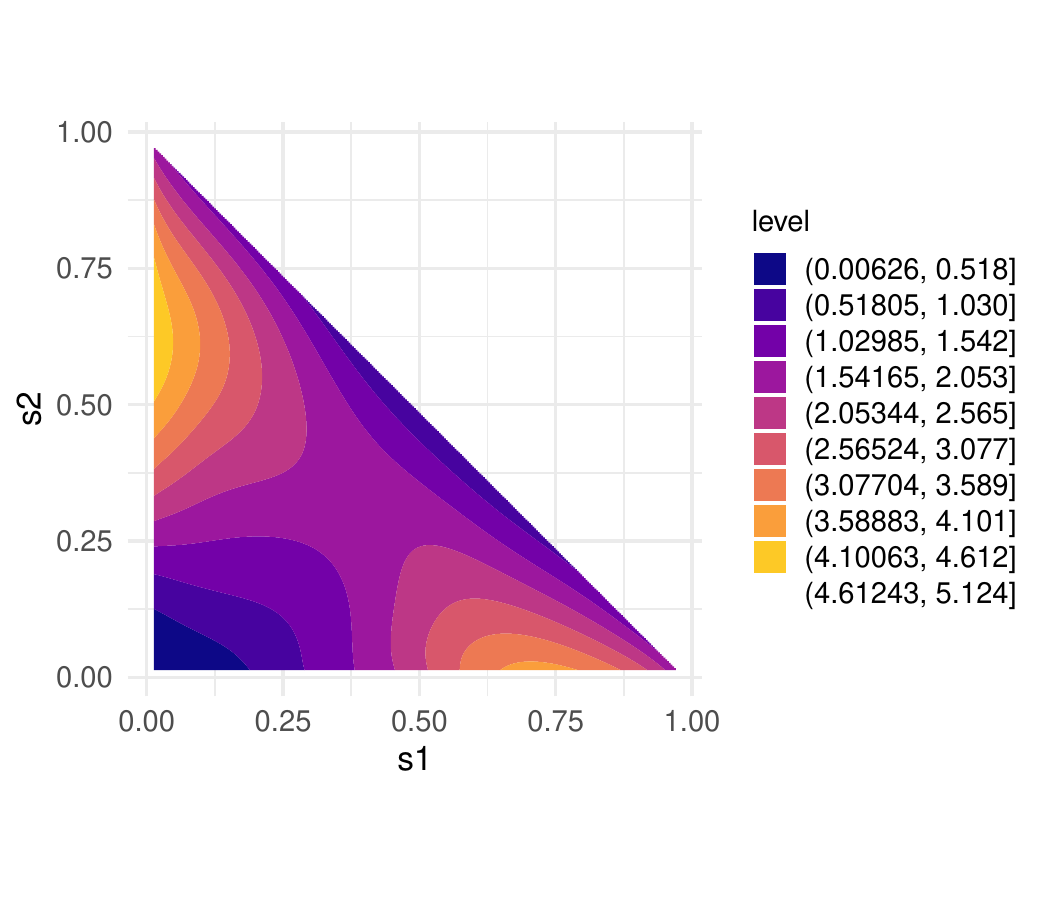}
\caption{Contour plots of the Model~II target density $f$ (left panel) and the associated Dirichlet kernel density estimate $\hat{f}_{n,0.05}$ (right panel), with a sample size $n = 2000$ and a $10\%$ missing rate.}
\label{fig:model.2}
\end{figure}

\subsection{Bandwidth selection}\label{sec:bandwidth}

In each replication, the Dirichlet bandwidth $b$ is selected by least-squares cross-validation (LSCV) adapted to the IPW setting. The criterion targets the integrated squared error (ISE) while accounting for missingness through inverse probability weights, and it includes a leave-one-out correction to prevent in-sample bias. For each $b$ in a candidate grid $\mathcal{B}\subseteq (0,1)$, we compute
\begin{equation}\label{eq:LSCV_b_simulation}
\mathrm{LSCV}(b) = \int_{\mathcal{S}_2} \hat{f}_{n,b}^2(\bb{s})\,\rd\bb{s}
- \frac{2}{n} \sum_{i=1}^n \frac{\delta_i}{\hat{\pi}_i(\bb{X}_{1:n})}\,\hat{f}_{n,b}^{\, (-i)}(\bb{Y}_i),
\end{equation}
where $\smash{\hat{f}_{n,b}^{\, (-i)}}$ denotes the estimator in \eqref{eq:feasible} without the $i$th observation.

Both the integral in \eqref{eq:LSCV_b_simulation} and the ISE approximation \eqref{eq:ISE.approximation} in Section~\ref{sec:performance} are evaluated on deterministic interior grids of $\mathcal{S}_2$. For a resolution parameter $\texttt{res}\geq 4$ and a tolerance $\varepsilon\in(0,1/3)$, let $H = \texttt{res}-1$ and define
\[
\mathcal{G}(\texttt{res},\varepsilon) = \Big\{(s_1,s_2)\in\mathrm{Int}(\mathcal{S}_2): s_{\ell} = \varepsilon+(1-3\varepsilon)\frac{i_{\ell}}{H},~ \ell\in\{1,2,3\},~ i_1,i_2,i_3\in \N,~ i_1+i_2+i_3 = H \Big\}.
\]
This construction yields
\[
M = \frac{(H-1)(H-2)}{2}
\]
grid points, all bounded away from the simplex boundary by $\varepsilon$. In our \textsf{R} implementation, we take $\varepsilon=0.01$ throughout. For the LSCV integral approximation we use $\texttt{res}=40$ (so $H=39$ and $M=703$ grid points).

\medskip

The first term in \eqref{eq:LSCV_b_simulation} can be written as
\[
\int_{\mathcal{S}_2} \hat{f}_{n,b}^2(\bb{s})\,\rd\bb{s}
= \frac{1}{n^2}\sum_{i=1}^n\sum_{j=1}^n \frac{\delta_i\,\delta_j}{\hat{\pi}_i(\bb{X}_{1:n}) \hat{\pi}_j(\bb{X}_{1:n})}
\int_{\mathcal{S}_2} \kappa_{\bb{s},b}(\bb{Y}_i) \kappa_{\bb{s},b}(\bb{Y}_j)\,\rd\bb{s},
\]
and, in practice, it is approximated numerically on $\mathcal{G}(\texttt{res},\varepsilon)$ via the Riemann-sum approximation
\[
\int_{\mathcal{S}_2} \hat{f}_{n,b}^2(\bb{s})\,\rd\bb{s}
\approx |\mathcal{S}_2| \frac{1}{M} \sum_{m=1}^M \hat{f}_{n,b}^2(\bb{s}_m)
= \frac{1}{2M}\sum_{m=1}^M \hat{f}_{n,b}^2(\bb{s}_m),
\]
where $\bb{s}_1,\ldots,\bb{s}_M\in\mathcal{G}(\texttt{res},\varepsilon)$ and $|\mathcal{S}_2|=1/2$ is the area of the two-dimensional simplex.

The selected bandwidth is
\begin{equation}\label{eq:b.star}
b^{\star} = \arg\min_{b \in \mathcal{B}} \mathrm{LSCV}(b).
\end{equation}
In the simulations reported in Tables~\ref{tab:results_LSCV_Model1}--\ref{tab:results_LSCV_Model2} below, we take the candidate set
\[
\mathcal{B} = \{0.01,0.02,\ldots,0.35\},
\]
i.e., $35$ equally spaced values from $0.01$ to $0.35$.

\subsection{Performance evaluation}\label{sec:performance}

The performance of the estimator $\hat{f}_{n,b^{\star}}$ is assessed by the integrated squared error (ISE),
\[
\mathrm{ISE}(\hat{f}_{n,b^{\star}}) = \int_{\mathcal{S}_2} \left(\hat{f}_{n,b^{\star}}(\bb{s})- f(\bb{s})\right)^2 \, \rd\bb{s}.
\]
In practice, we approximate it numerically using a deterministic interior grid $\bb{s}_1,\ldots,\bb{s}_M \in \mathrm{Int}(\mathcal{S}_2)$:
\begin{equation}\label{eq:ISE.approximation}
\mathrm{ISE}(\hat{f}_{n,b^{\star}}) \approx \frac{1}{2M} \sum_{m=1}^M  \big\{\hat{f}_{n,b^{\star}}(\bb{s}_m) - f(\bb{s}_m)\big\}^2.
\end{equation}
In our \textsf{R} implementation, we use the grid construction described in Section~\ref{sec:bandwidth} with $\varepsilon=0.01$ and $\texttt{res}=300$ (so $H=299$ and $M=44{,}253$ evaluation points) to evaluate ISE.

For each configuration (model, sample size, and missing rate), we compute ISE over $1000$ Monte Carlo replications and summarize the resulting distribution through its mean, median, standard deviation (SD), and interquartile range (IQR). The reported ``Mean~$b^{\star}$'' corresponds to the average LSCV-selected bandwidth across replications.

Summary results for Models I and II are reported in Tables~\ref{tab:results_LSCV_Model1} and \ref{tab:results_LSCV_Model2}, respectively.

\begin{table}[H]
\renewcommand{\arraystretch}{1.1}
\centering
\begin{tabular}{rrrrrrr}
\hline
n & Missing Rate (\%) & Mean ISE & Median ISE & SD ISE & IQR ISE & Mean $b^{\star}$ \\
\hline
100 & 5 & 0.1484 & 0.1400 & 0.0771 & 0.0723 & 0.2142 \\
100 & 10 & 0.1520 & 0.1448 & 0.0844 & 0.0804 & 0.2228 \\
100 & 20 & 0.1590 & 0.1512 & 0.0841 & 0.0741 & 0.2289 \\
100 & 40 & 0.2070 & 0.1925 & 0.1309 & 0.0920 & 0.2517 \\
200 & 5 & 0.1087 & 0.0997 & 0.0485 & 0.0660 & 0.1741 \\
200 & 10 & 0.1099 & 0.1016 & 0.0480 & 0.0637 & 0.1776 \\
200 & 20 & 0.1176 & 0.1098 & 0.0531 & 0.0715 & 0.1842 \\
200 & 40 & 0.1453 & 0.1434 & 0.0620 & 0.0841 & 0.2105 \\
400 & 5 & 0.0743 & 0.0684 & 0.0316 & 0.0405 & 0.1249 \\
400 & 10 & 0.0774 & 0.0710 & 0.0343 & 0.0468 & 0.1281 \\
400 & 20 & 0.0855 & 0.0807 & 0.0345 & 0.0440 & 0.1388 \\
400 & 40 & 0.1049 & 0.1001 & 0.0439 & 0.0589 & 0.1555 \\
800 & 5 & 0.0547 & 0.0506 & 0.0214 & 0.0271 & 0.1002 \\
800 & 10 & 0.0573 & 0.0547 & 0.0223 & 0.0291 & 0.1031 \\
800 & 20 & 0.0618 & 0.0586 & 0.0244 & 0.0305 & 0.1072 \\
800 & 40 & 0.0763 & 0.0740 & 0.0308 & 0.0409 & 0.1186 \\
\hline
\end{tabular}
\caption{The mean, median, standard deviation, and interquartile range of 1000 ISEs in Model I for the IPW Dirichlet KDE, with sample sizes $n\in \{100, 200, 400, 800\}$ and missing rates of $5\%$, $10\%$, $20\%$, and $40\%$.}
\label{tab:results_LSCV_Model1}
\end{table}

\begin{table}[H]
\renewcommand{\arraystretch}{1.1}
\centering
\begin{tabular}{rrrrrrr}
\hline
n & Missing Rate (\%) & Mean ISE & Median ISE & SD ISE & IQR ISE & Mean $b^{\star}$ \\
\hline
100 & 5 & 0.2546 & 0.2204 & 0.1523 & 0.1462 & 0.0616 \\
100 & 10 & 0.2713 & 0.2313 & 0.1648 & 0.1461 & 0.0637 \\
100 & 20 & 0.3004 & 0.2612 & 0.1762 & 0.1509 & 0.0695 \\
100 & 40 & 0.3742 & 0.3317 & 0.1992 & 0.1799 & 0.0906 \\
200 & 5 & 0.1663 & 0.1479 & 0.0856 & 0.0802 & 0.0405 \\
200 & 10 & 0.1746 & 0.1578 & 0.0859 & 0.0866 & 0.0414 \\
200 & 20 & 0.1942 & 0.1710 & 0.0982 & 0.0949 & 0.0449 \\
200 & 40 & 0.2511 & 0.2225 & 0.1199 & 0.1199 & 0.0528 \\
400 & 5 & 0.1058 & 0.0943 & 0.0473 & 0.0512 & 0.0283 \\
400 & 10 & 0.1111 & 0.0995 & 0.0490 & 0.0545 & 0.0288 \\
400 & 20 & 0.1245 & 0.1109 & 0.0535 & 0.0574 & 0.0312 \\
400 & 40 & 0.1627 & 0.1525 & 0.0639 & 0.0646 & 0.0370 \\
800 & 5 & 0.0703 & 0.0638 & 0.0264 & 0.0376 & 0.0200 \\
800 & 10 & 0.0736 & 0.0666 & 0.0273 & 0.0380 & 0.0208 \\
800 & 20 & 0.0840 & 0.0765 & 0.0307 & 0.0403 & 0.0217 \\
800 & 40 & 0.1103 & 0.1021 & 0.0373 & 0.0455 & 0.0253 \\
\hline
\end{tabular}
\caption{The mean, median, standard deviation, and interquartile range of 1000 ISEs in Model II for the IPW Dirichlet KDE, with sample sizes $n\in \{100, 200, 400, 800\}$ and missing rates of $5\%$, $10\%$, $20\%$, and $40\%$.}
\label{tab:results_LSCV_Model2}
\end{table}

Across both models, Tables~\ref{tab:results_LSCV_Model1}--\ref{tab:results_LSCV_Model2} exhibit two consistent trends: for a fixed missing rate, the mean, median, SD, and IQR of the $1000$ ISEs decrease as the sample size increases; for a fixed $n$, these summary statistics increase as the missing rate grows. These patterns are illustrated in Figures~\ref{fig:ise.metrics.model1} and \ref{fig:ise.metrics.model2}.

\newpage
\begin{figure}[H]
\centering
\includegraphics[width=0.75\textwidth]{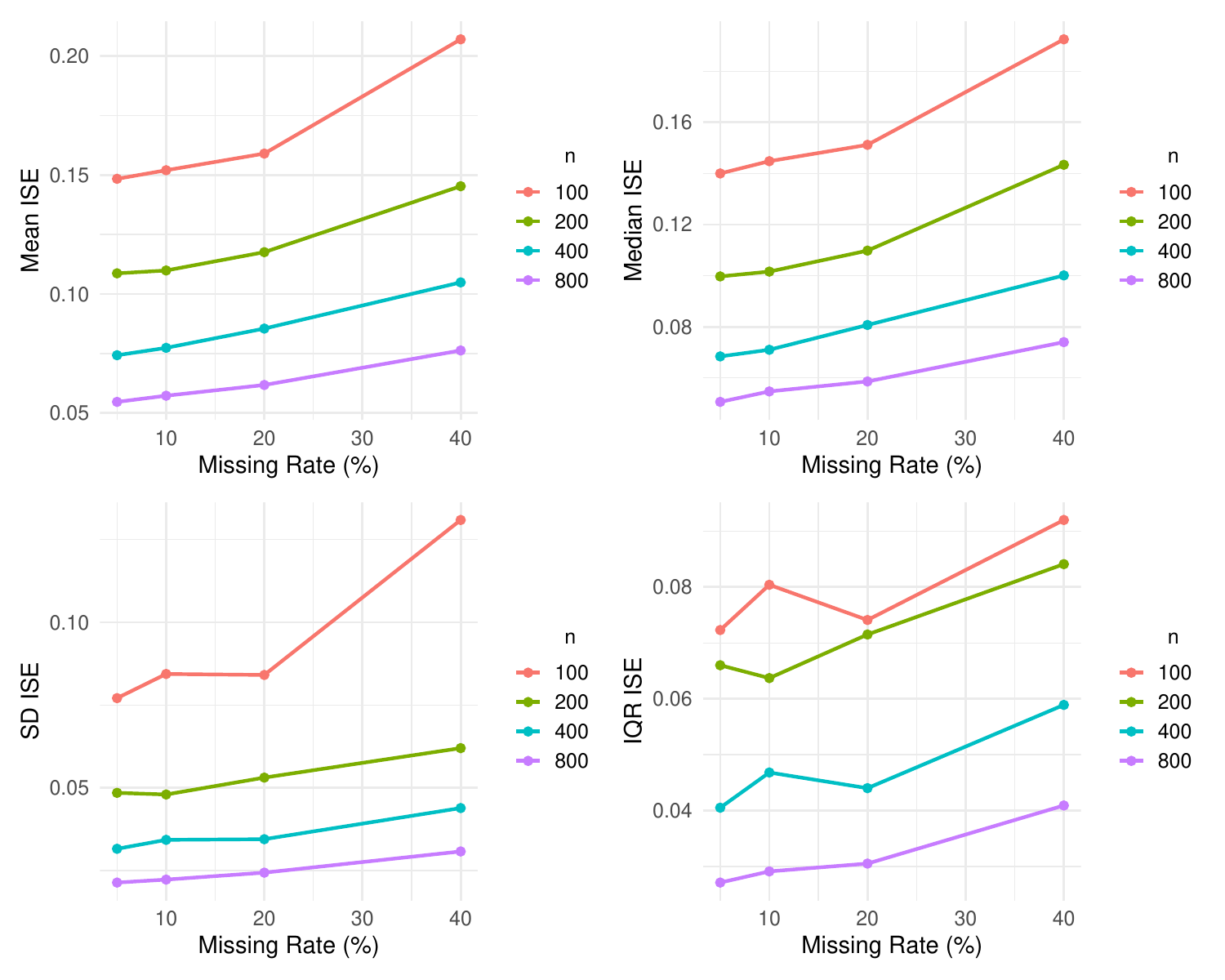}
\vspace{-2mm}
\caption{Mean, median, standard deviation, and interquartile range of $1000$ ISEs in Model~I for the IPW Dirichlet KDE as a function of the proportion of missing data, shown for four sample sizes $n\in \{100, 200, 400, 800\}$.}
\label{fig:ise.metrics.model1}
\end{figure}

\begin{figure}[H]
\centering
\includegraphics[width=0.75\textwidth]{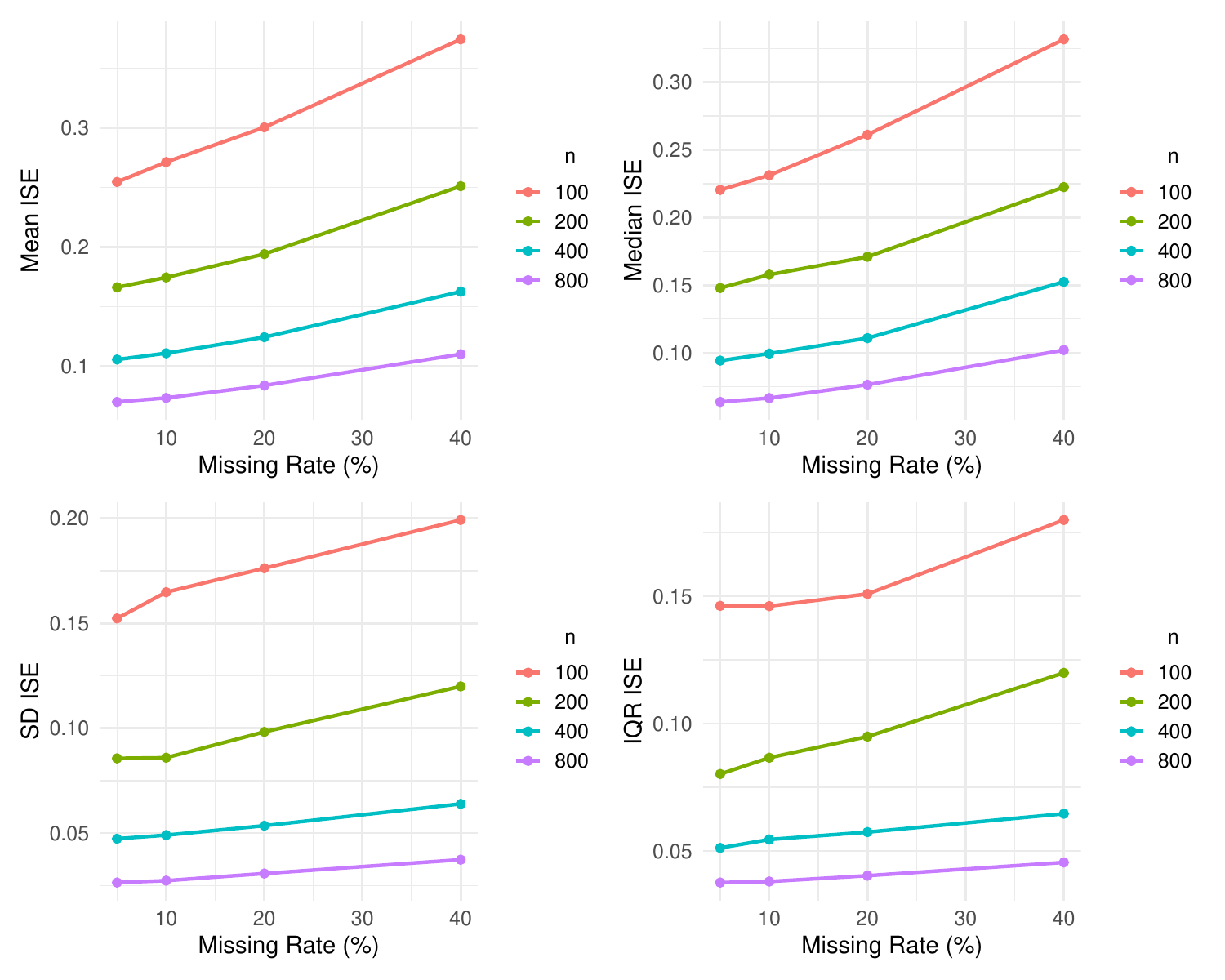}
\vspace{-2mm}
\caption{Mean, median, standard deviation, and interquartile range of $1000$ ISEs in Model~II for the IPW Dirichlet KDE as a function of the proportion of missing data, shown for four sample sizes $n\in \{100, 200, 400, 800\}$.}
\label{fig:ise.metrics.model2}
\end{figure}

\subsection{Comparative study}\label{sec:comparison}

We complement the previous analysis by comparing the proposed IPW Dirichlet KDE with two log-ratio-based alternatives, obtained by performing kernel density estimation in an unconstrained Euclidean space and transforming the estimate back to the simplex with the appropriate Jacobian correction.

Let $\bb{Y}_i = (Y_{i1},Y_{i2}) \in \mathcal{S}_2$ denote the $i$-th compositional observation on the two-dimensional simplex, and set $Y_{i3}=1-\|\bb{Y}_i\|_1$. To construct alternative smoothing strategies, we employ the additive log-ratio (alr) transformation and the isometric log-ratio (ilr) transformation, which map $\mathrm{Int}(\mathcal{S}_2)$ onto $\R^2$:
\[
\begin{aligned}
\mathrm{alr}(\bb{Y}_i) &= (\log(Y_{i1}/Y_{i3}),\log(Y_{i2}/Y_{i3})), \\
\mathrm{ilr}(\bb{Y}_i) &= ((1/\sqrt{2}) \log(Y_{i1}/Y_{i2}), (1/\sqrt{6}) \log(Y_{i1}Y_{i2}/Y^2_{i3})).
\end{aligned}
\]

The inverse transformations are smooth. For a generic point $\bb{s}=(s_1,s_2)\in \mathrm{Int}(\mathcal{S}_2)$, set $s_3=1-\|\bb{s}\|_1$. The Jacobian determinants of the $\mathrm{alr}$ and $\mathrm{ilr}$ mappings are given by
\[
\left| \det\left( \frac{\partial \, \mathrm{alr}(\bb{s})}{\partial \bb{s}} \right) \right| = \frac{1}{s_1 s_2 s_3}, \qquad
\left| \det\left( \frac{\partial \, \mathrm{ilr}(\bb{s})}{\partial \bb{s}} \right) \right| = \frac{1}{\sqrt{3} \, s_1 s_2 s_3},
\]
see, e.g., \cite[p.~115~and~p.~120]{PawlowskyGlahn2015Modeling}. Let $K_h(\cdot)$ denote the centered bivariate Gaussian kernel of covariance $h I_2$. The IPW-alr and IPW-ilr KDEs on the simplex (when propensity scores are estimated) are respectively defined as
\[
\begin{aligned}
\hat{f}_{\mathrm{alr}}(\bb{s})
&= \frac{1}{n} \sum_{i=1}^n \frac{\delta_i}{\hat{\pi}_i(\bb{X}_{1:n})}
K_h \big(\mathrm{alr}(\bb{s}) - \mathrm{alr}(\bb{Y}_i)\big) \, \left| \det\left( \frac{\partial \, \mathrm{alr}(\bb{s})}{\partial \bb{s}} \right) \right|, \\
\hat{f}_{\mathrm{ilr}}(\bb{s})
&= \frac{1}{n} \sum_{i=1}^n \frac{\delta_i}{\hat{\pi}_i(\bb{X}_{1:n})}
K_h \big(\mathrm{ilr}(\bb{s}) - \mathrm{ilr}(\bb{Y}_i)\big) \, \left| \det\left( \frac{\partial \, \mathrm{ilr}(\bb{s})}{\partial \bb{s}} \right) \right|.
\end{aligned}
\]
The unweighted versions of these estimators were introduced by \citet{doi:10.2307/2347365} (alr) and \citet{doi:10.1016/j.cageo.2009.12.011} (ilr), respectively.

The models, setup, bandwidth selection, and performance metric remain exactly as described in Sections~\ref{sec:models.setup}, \ref{sec:bandwidth}, and \ref{sec:performance}, with $\mathrm{ISE}(\hat{f}_{\mathrm{trans}})$, for $\mathrm{trans}\in \{\mathrm{alr},\mathrm{ilr}\}$, defined in the obvious way.

\subsubsection{Impact of the sample size}

Increasing the sample size $n$ leads to systematic improvements in estimator performance. Both Dirichlet and log-ratio-based estimators exhibit decreasing median ISE and reduced dispersion as $n$ increases from $100$ to $800$.

This effect is illustrated in Figure~\ref{fig:comparison.n}, which reports boxplots of $1000$ ISEs across the four sample sizes under $20\%$ missingness. For all sample sizes and models, the IPW Dirichlet KDE performs best.

\begin{figure}[H]
\includegraphics[width=0.49\textwidth]{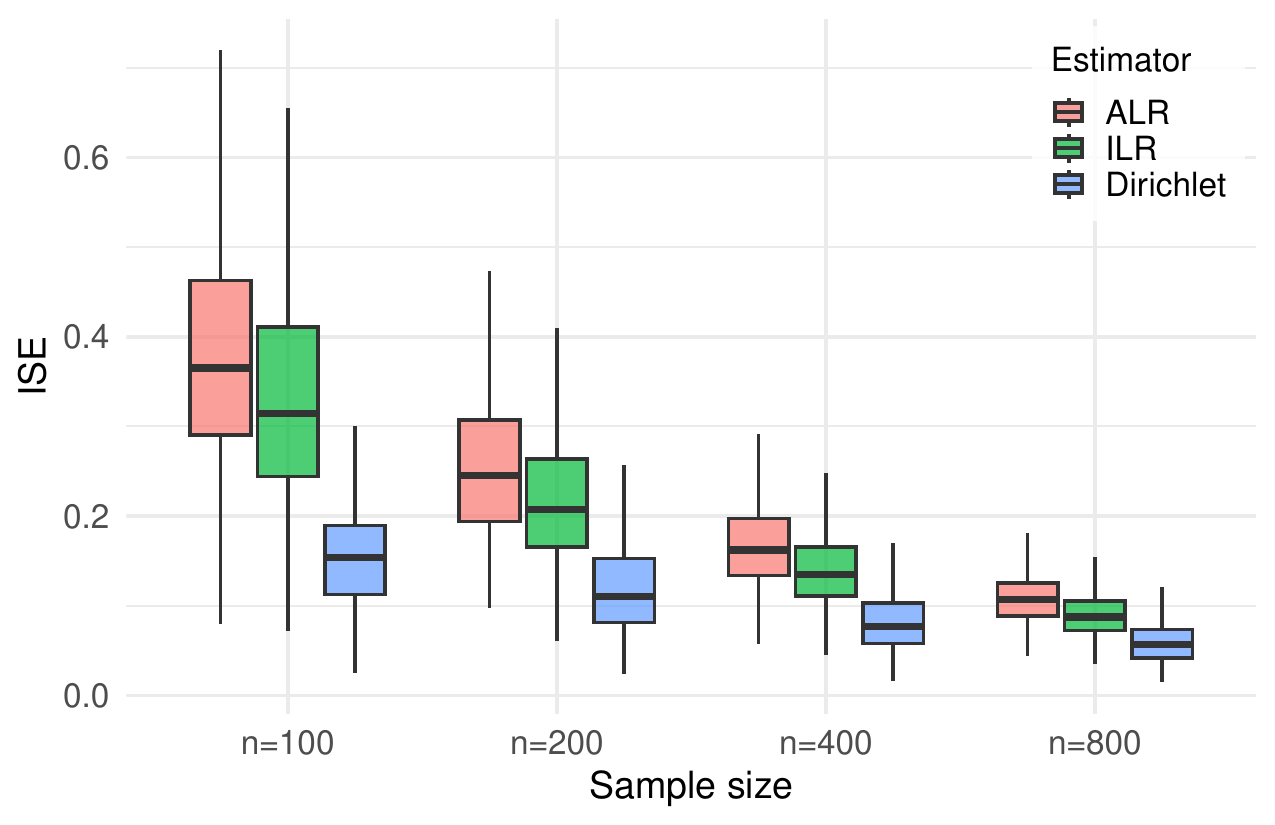}
\hfill
\includegraphics[width=0.49\textwidth]{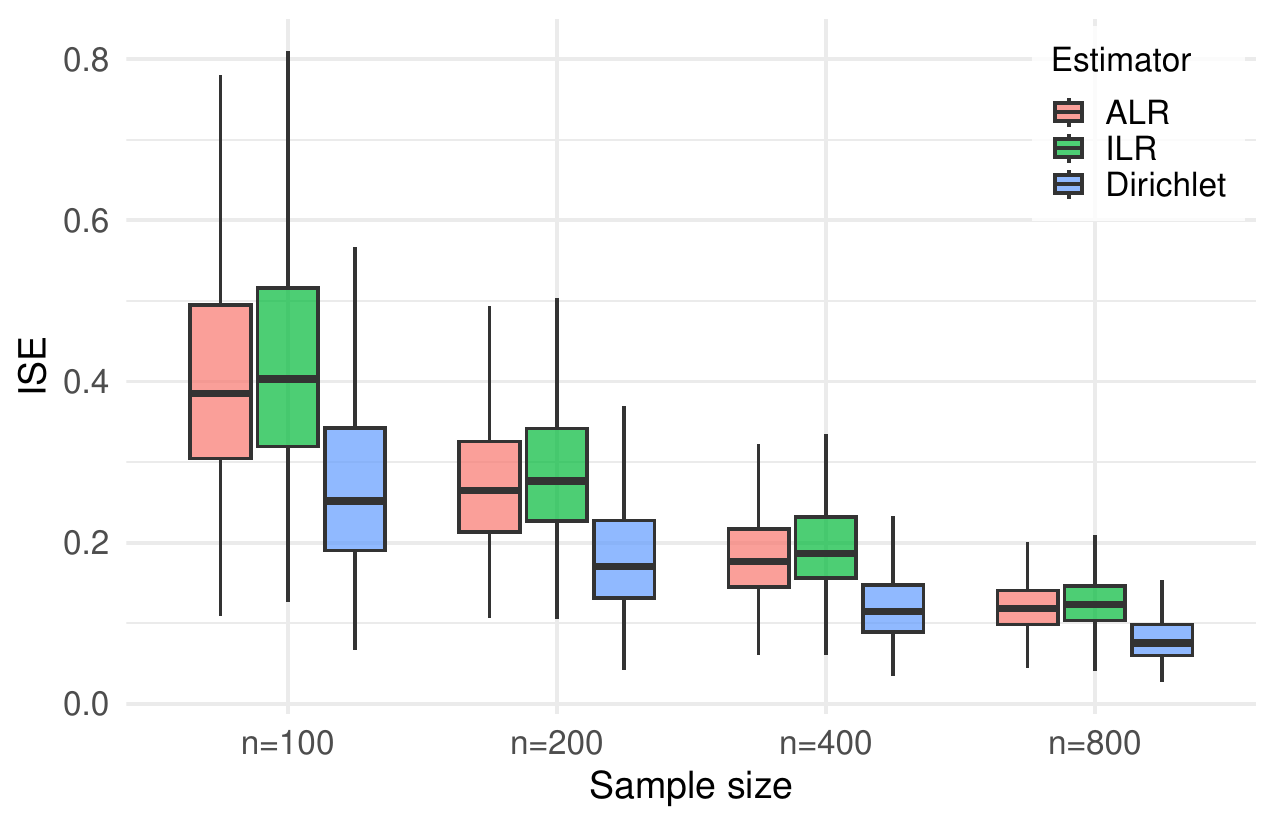}
\caption{Boxplots of $1000$ ISEs in Model~I (left panel) and Model~II (right panel) for alr, ilr and Dirichlet kernel estimators, under a MAR mechanism with a $20\%$ missing rate, and $n\in \{100, 200, 400, 800\}$.}
\label{fig:comparison.n}
\end{figure}

\subsubsection{Impact of the missing rate}

As the missing rate increases, inverse probability weights become more variable and the effective sample size of complete responses decreases, which inflates estimator variability. Consequently, the median and IQR of ISEs increase with the level of missingness for all estimators.

Figure~\ref{fig:comparison.missing} displays the boxplots of $1000$ ISEs at a fixed sample size $n=400$ across four missing rates: $5\%$, $10\%$, $20\%$, and $40\%$. For all missing rates and models, the IPW Dirichlet KDE performs best.

\begin{figure}[H]
\includegraphics[width=0.49\textwidth]{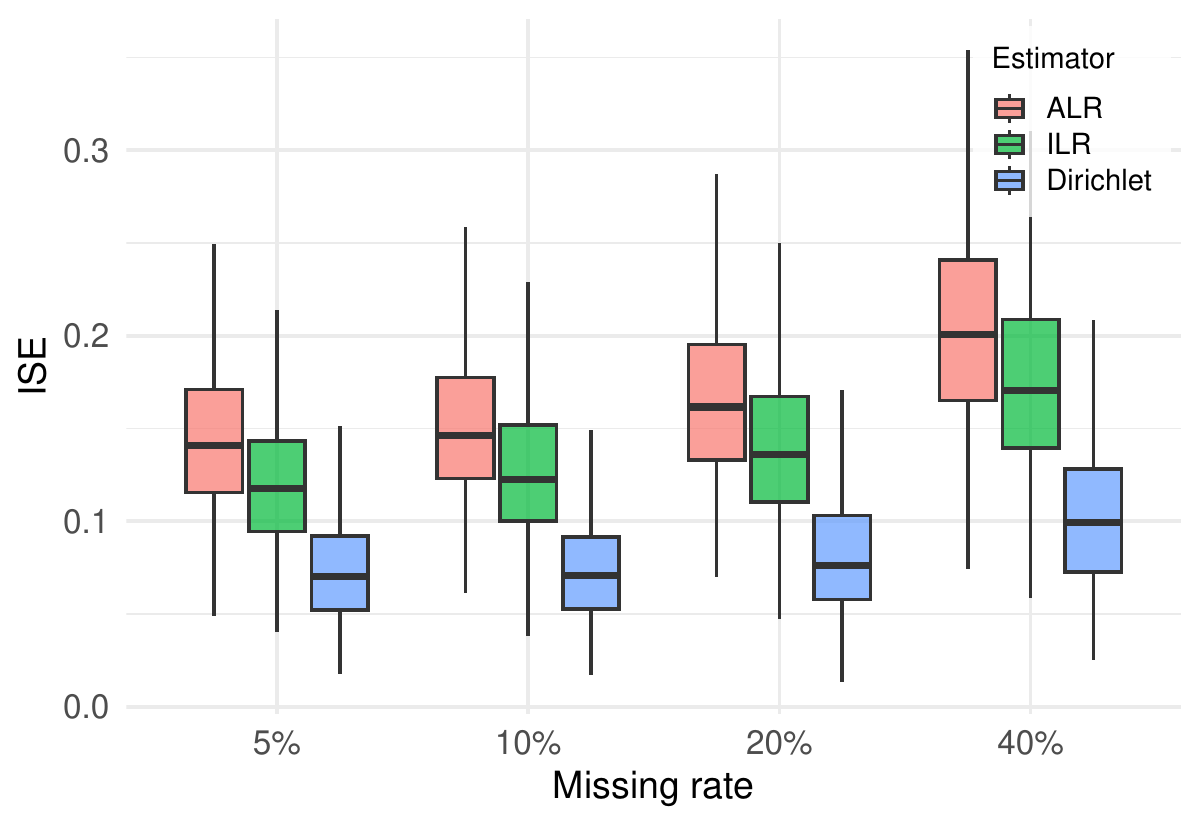}
\hfill
\includegraphics[width=0.49\textwidth]{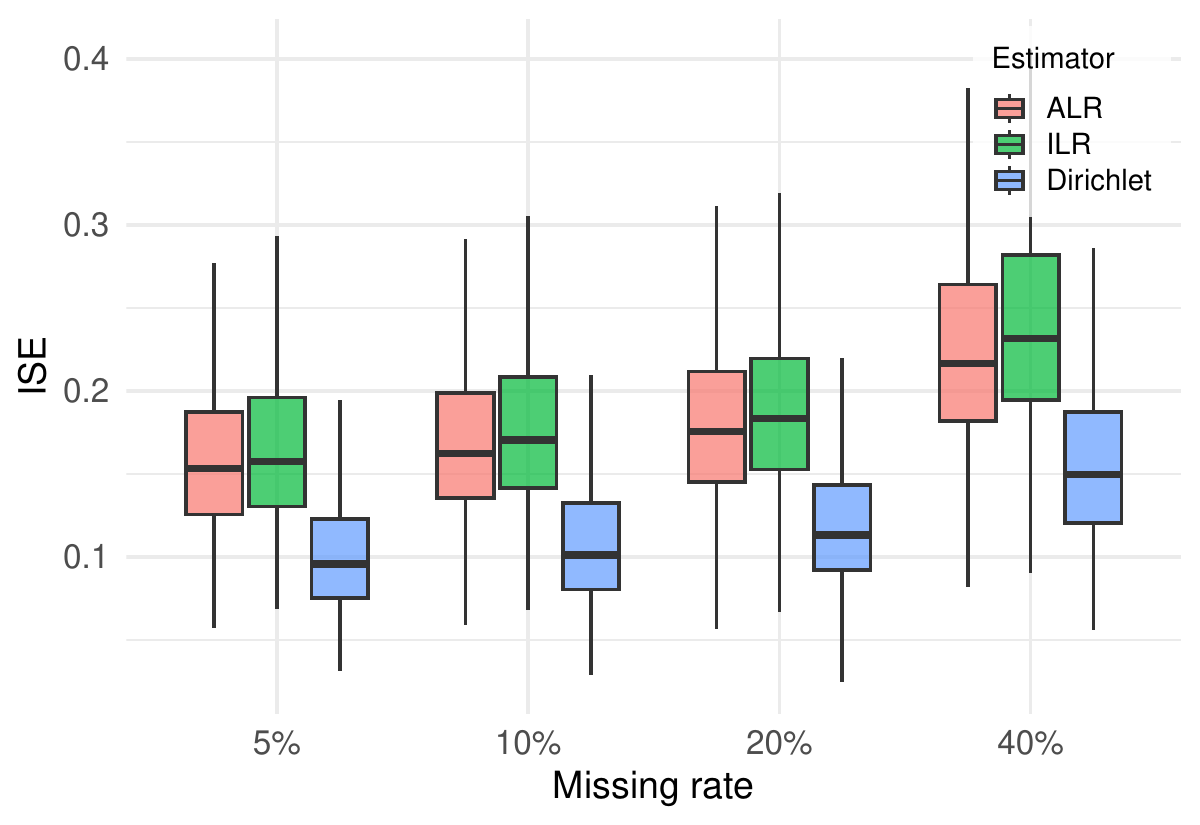}
\caption{Boxplots of $1000$ ISEs in Model~I (left panel) and Model~II (right panel) for alr, ilr and Dirichlet kernel estimators, under MAR mechanisms with $5\%$, $10\%$, $20\%$, and $40\%$ missing rates, and $n=400$.}
\label{fig:comparison.missing}
\end{figure}

\subsubsection{Joint effect of sample size and missing rate}

The joint variation of $n$ and the missing rate highlights a compensatory effect: while higher missingness increases variance through more variable IPW weights, larger sample sizes mitigate this inflation. In particular, at larger $n$ the dispersion of the ISE distributions decreases substantially, and performance remains stable even under $40\%$ missingness.

This interaction is summarized in Figure~\ref{fig:comparison.missing.n}, which compares ISE distributions across the sample sizes $n\in\{100,200,400,800\}$ for each missing rate ($10\%$, $20\%$, and $40\%$). For all sample sizes, missing rates, and models, the IPW Dirichlet KDE performs best.

\begin{figure}[H]
\includegraphics[width=0.49\textwidth]{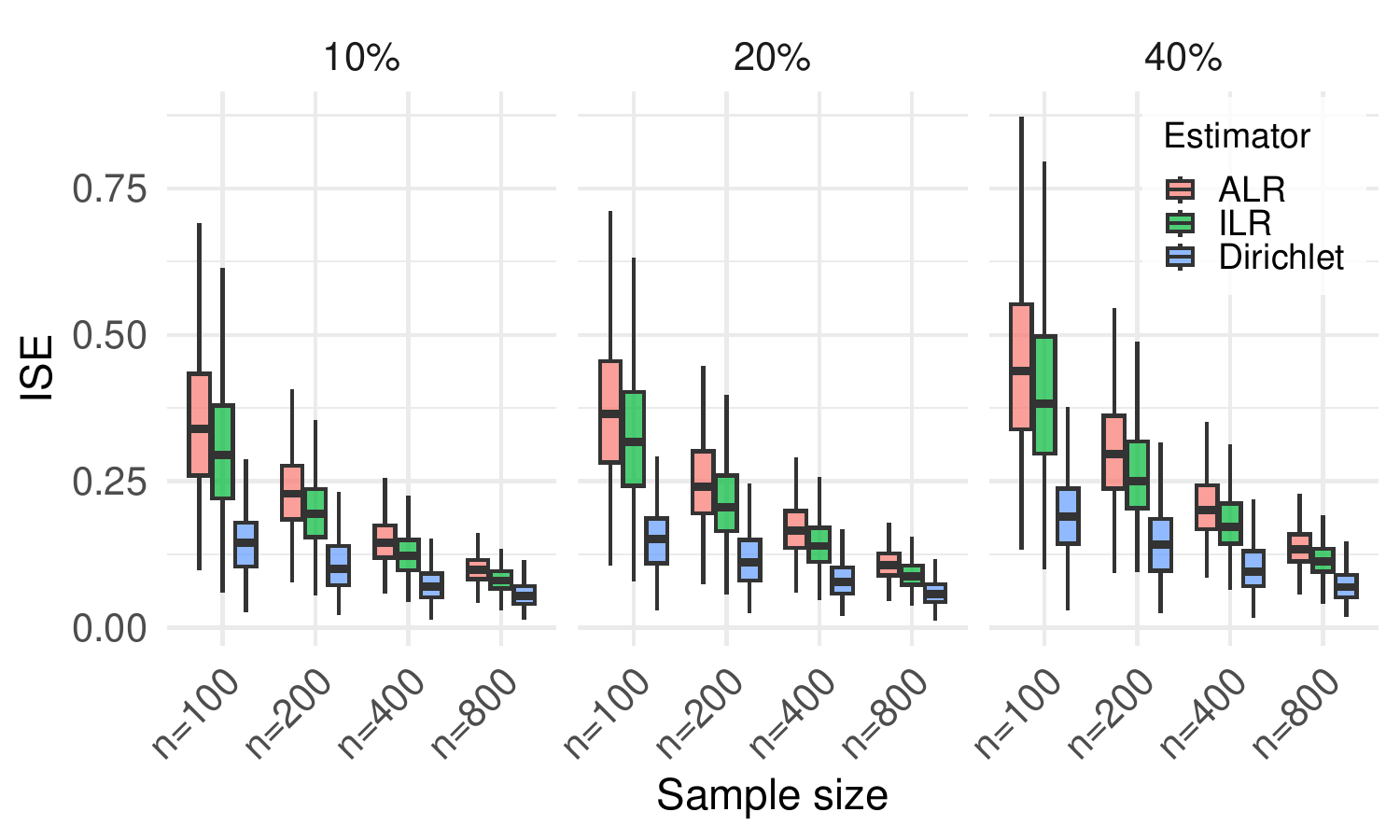}
\hfill
\includegraphics[width=0.49\textwidth]{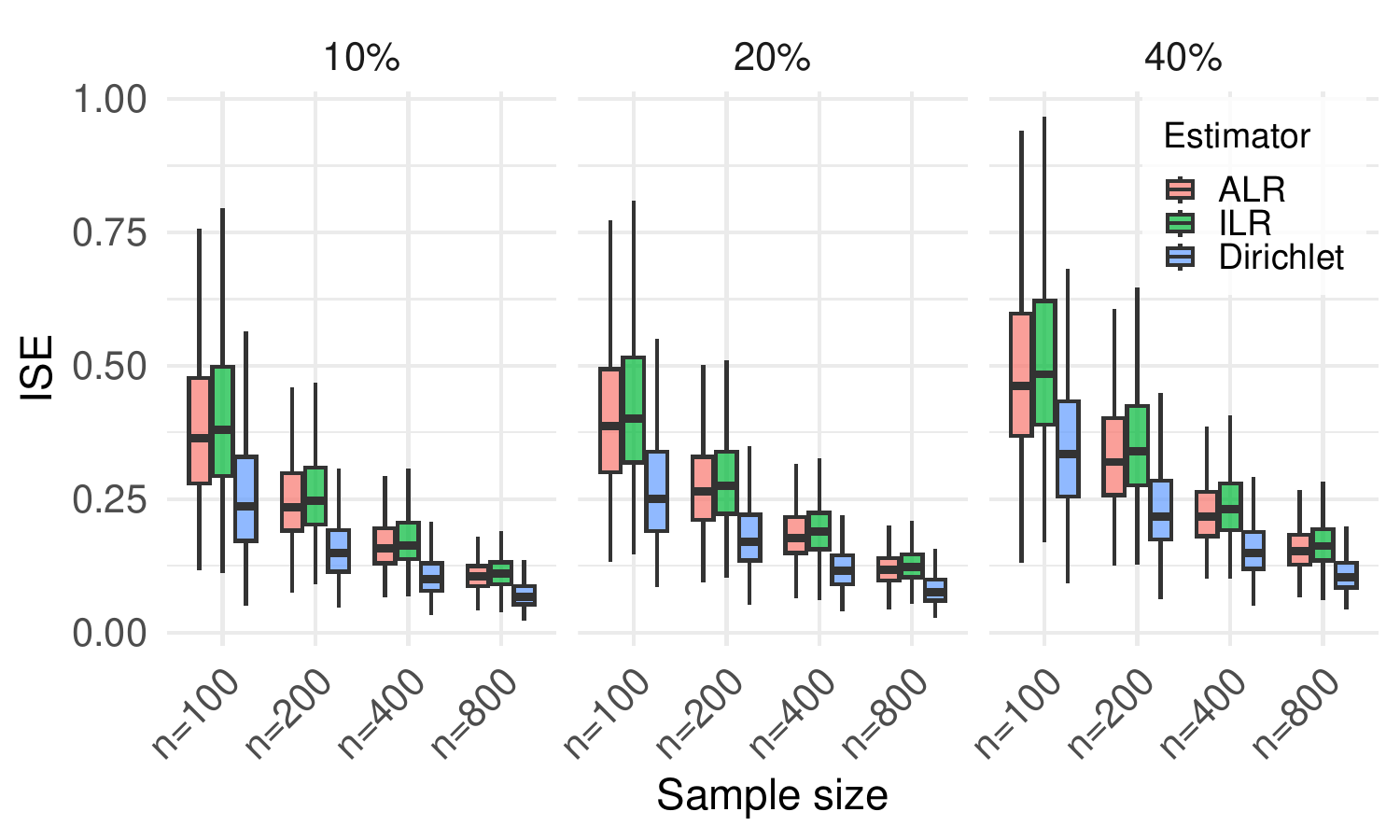}
\caption{Boxplots of $1000$ ISEs in Model~I (left panel) and Model~II (right panel) for alr, ilr and Dirichlet kernel estimators, under MAR mechanisms with 10\%, 20\%, and 40\% missing rates, and $n\in \{100, 200, 400, 800\}$.}
\label{fig:comparison.missing.n}
\end{figure}

Overall, the simulation study supports the practical usefulness of the proposed IPW Dirichlet KDE: it remains stable under moderate missingness, improves systematically with sample size, and compares favorably to log-ratio-based smoothing strategies in all settings examined here.

\section{Real-data application}\label{sec:application}

We illustrate the proposed methodology using data from the National Health and Nutrition Examination Survey (NHANES 2017--2018) conducted by the Centers for Disease Control and Prevention; see \citet{doi:10.32614/CRAN.package.nhanesA} to retrieve the dataset. NHANES is a large cross-sectional study designed to assess the health and nutritional status of the U.S. population through interviews and clinical examinations. In the present section, we use NHANES as a real-world illustration of the estimator; our analysis treats the extracted sample as iid and does not incorporate the complex survey design or examination weights, so results should be interpreted as descriptive of the analyzed NHANES sample rather than as fully population-representative.

We focus on the complete blood count (CBC) differential laboratory data (file \texttt{CBC\_J}), which provide the white blood cell differential (a key hematological indicator used in the diagnosis of infections, inflammatory conditions, and hematological disorders). The white blood cell differential is naturally compositional, as it represents relative proportions of leukocyte subtypes. For a three-part representation on the simplex, we consider the following complementary components:
\begin{itemize}[itemsep=0pt]
\item \textbf{Neutrophils} (\texttt{LBXNEPCT});
\item \textbf{Lymphocytes} (\texttt{LBXLYPCT});
\item \textbf{Others} (\texttt{LBXMOPCT} + \texttt{LBXEOPCT} + \texttt{LBXBAPCT}), defined as the aggregated proportion of Monocytes, Eosinophils, and Basophils.
\end{itemize}
In NHANES, these variables are reported as \emph{percentages} (suffix \texttt{PCT}, typically ranging from 0 to 100); throughout, we convert them to proportions by dividing by 100. Because laboratory percentages are subject to rounding, we additionally renormalize observed compositions so that the three parts sum exactly to one before applying the IPW Dirichlet KDE.

Each response, when observed, is a three-part composition. We represent it by $\bb{Y}_i=(Y_{i1}, Y_{i2}) \in \mathcal{S}_2$, with $Y_{i3}=1-\|\bb{Y}_i\|_1$. The three components correspond to the proportions of Neutrophils, Lymphocytes and Others in each leukocyte sample, respectively. In NHANES, the CBC differential is produced by a single laboratory procedure, so the differential is effectively observed \emph{as a block}: when the differential is unavailable or invalid, all its components are missing simultaneously for that participant. We therefore define the missingness indicator
\[
\delta_i = \ind_{\{\text{CBC differential fully observed}\}},
\]
i.e., $\delta_i=1$ when all required CBC differential percentages are available, and $\delta_i=0$ otherwise.

For the application below, we restrict attention to participants for whom body mass index (BMI) is observed. Specifically, after merging \texttt{CBC\_J} with the NHANES body measures file \texttt{BMX\_J} via the respondent identifier \texttt{SEQN} and retaining only records with observed \texttt{BMXBMI}, the analysis sample contains $n=8005$ participants. Among these $8005$ participants, the leukocyte composition $\bb{Y}_i$ is fully observed for $7280$ and missing for $725$, so the missing-response rate within the BMI-observed analysis sample is $725/8005 \approx 0.0906$ (about $9.1\%$). Thus, throughout this application, the auxiliary covariate is observed for every retained unit, whereas the response composition may still be missing.

In this framework, missingness of the response composition $\bb{Y}_i$ is assumed to be MAR, conditional on the observed covariate
\[
X_i = \texttt{BMXBMI}_i\in \R, \quad i\in [n].
\]
With these choices, the auxiliary dimension is $p=1$ and the composition lies in $\mathcal{S}_2$, so $d=2$. The observation probability $\pi(x)=\PP(\delta=1\mid X=x)$ is estimated nonparametrically using the Nadaraya--Watson estimator in \eqref{eq:NW}. To estimate the density of $\bb{Y}$, we use the feasible IPW Dirichlet KDE in \eqref{eq:feasible}. The Dirichlet bandwidth $b$ is selected by minimizing the IPW-adapted LSCV criterion introduced in \eqref{eq:LSCV_b_simulation}, yielding the data-driven choice $b^{\star}$ defined in \eqref{eq:b.star}.

The two- and three-dimensional representations in Figure~\ref{fig:application} reveal a dominant and biologically coherent mode of leukocyte composition, supporting the practical usefulness of the proposed IPW Dirichlet KDE on the simplex under missing-at-random sampling.

\begin{figure}[ht!]
\centering
\includegraphics[width=0.49\textwidth]{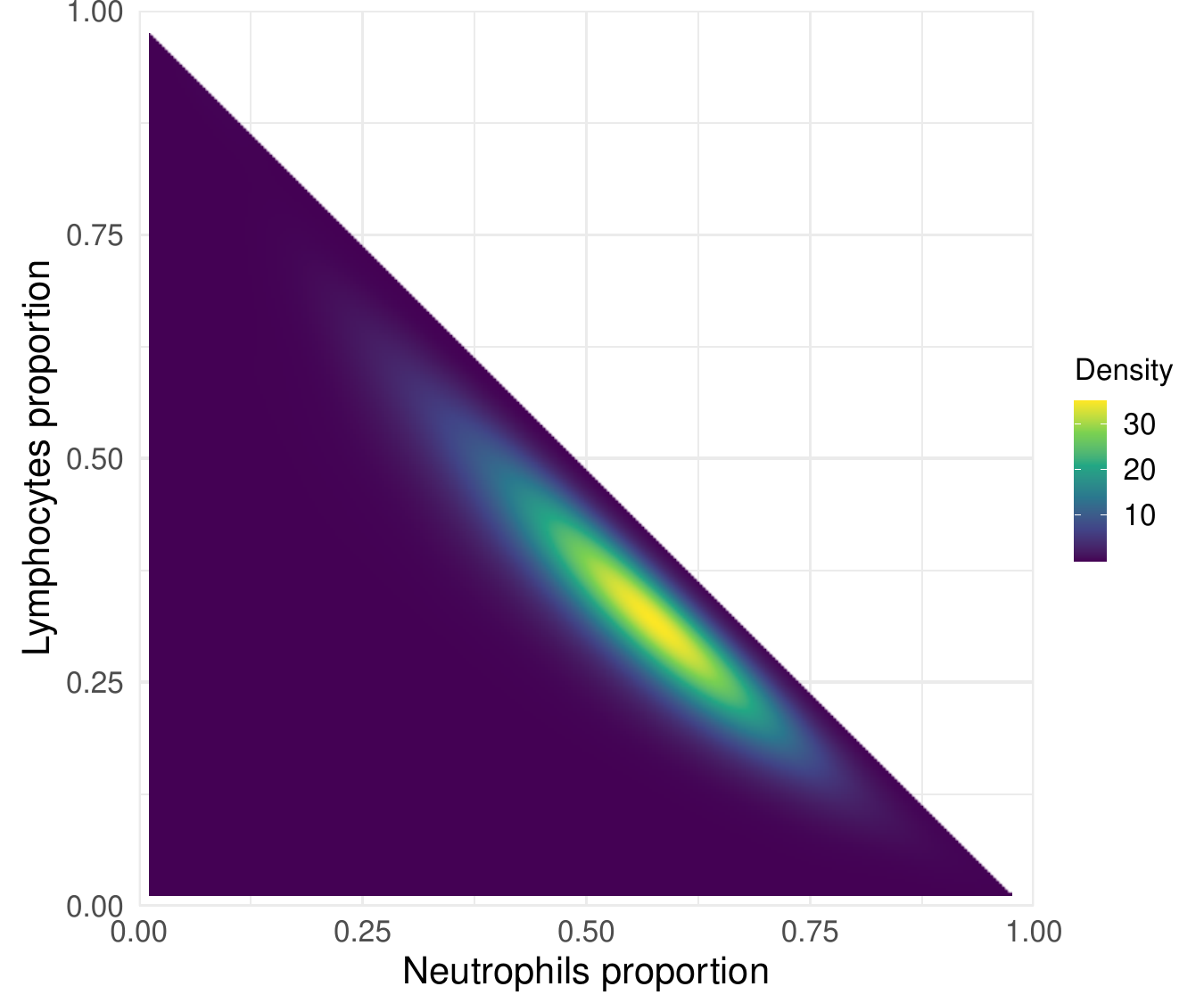}
\hfill
\includegraphics[trim=2.2cm 1.3cm 2.2cm 1.7cm, clip, width=0.49\textwidth]{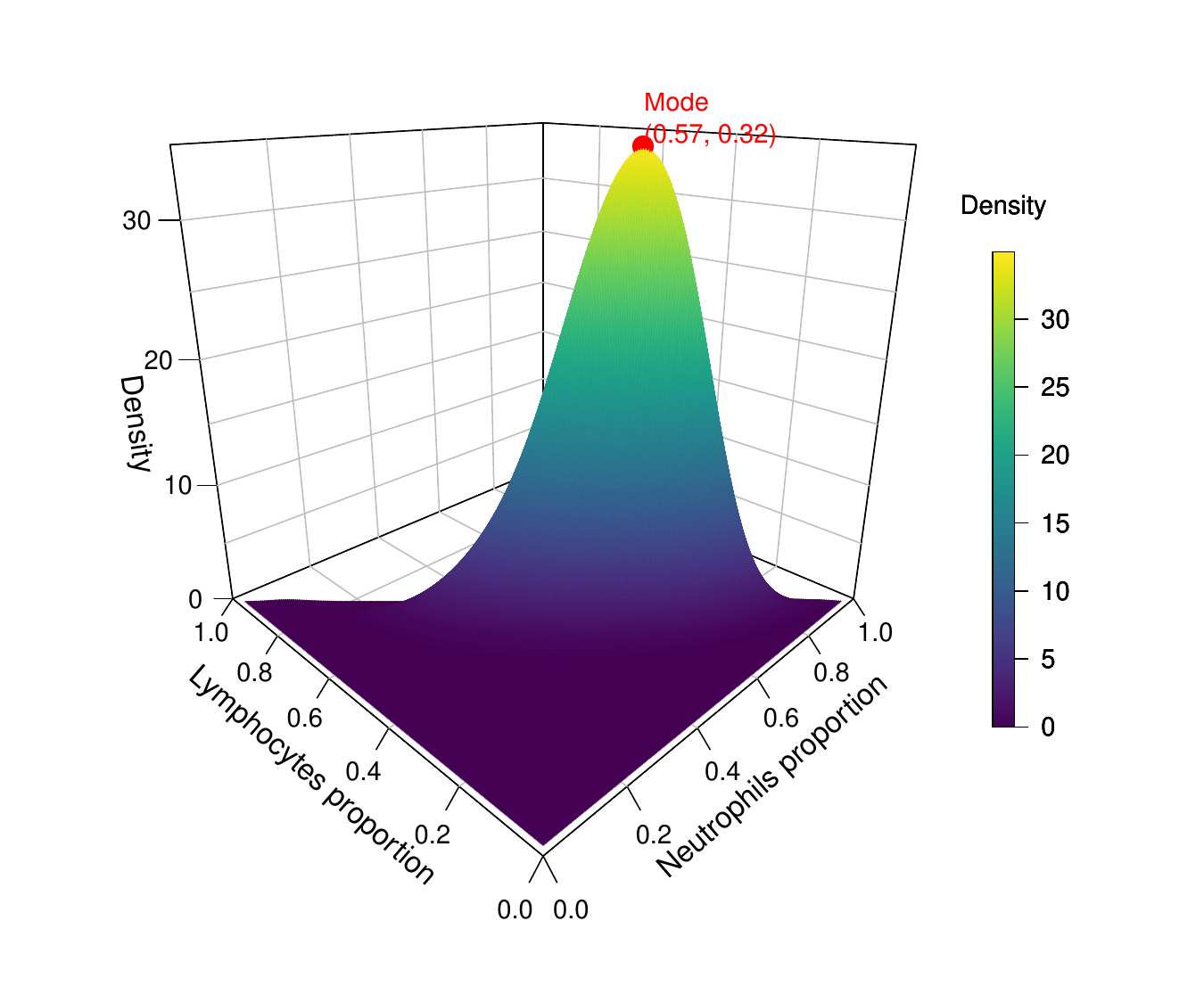}
\caption{Representations of the IPW Dirichlet kernel density estimate of leukocyte compositions.}
\label{fig:application}
\end{figure}

The estimated density attains its maximum in the vicinity of the composition $(0.57,\,0.32,\,0.11)$, corresponding to approximately $57\%$ Neutrophils, $32\%$ Lymphocytes, and $11\%$ Others. As illustrated in the right panel of Figure~\ref{fig:application}, this point represents the mode of the estimated density and can be interpreted as the most typical leukocyte profile, marking the region where observations from the analyzed NHANES sample are most highly concentrated.

From a biomedical standpoint, this modal composition is well within the established reference ranges for healthy adult populations; see, e.g., \citet{kratz2004laboratory} and \citet{Forget2017NLR}. The predominance of Neutrophils combined with a substantial proportion of Lymphocytes reflects a stable and balanced immune profile, often described as a state of immunological homeostasis \citep{buonacera2022neutrophil}.

Overall, this application illustrates how IPW Dirichlet smoothing provides a valid density estimate supported on the simplex while accounting for compositional constraints and blockwise missing CBC differentials.

\section{Summary and outlook}\label{sec:outlook}

This paper studied nonparametric density estimation for compositional data supported on the simplex under a MAR mechanism. Rather than imputing missing compositions, we proposed an IPW Dirichlet KDE that preserves the simplex geometry: it is nonnegative by construction, automatically respects the constrained support, and exhibits stable behavior near the boundary through the adaptive (asymmetric) Dirichlet kernel. We analyzed two variants: a pseudo estimator $\widetilde{f}_{n,b}$ with known propensity scores (Section~\ref{sec:results.pseudo}) and a feasible estimator $\hat{f}_{n,b}$ that replaces $\pi(\bb{X}_i)$ by a Nadaraya--Watson estimate $\hat{\pi}_i(\bb{X}_{1:n})$ (Section~\ref{sec:results.feasible}). For both estimators, we derived pointwise bias and variance expansions (Propositions~\ref{prop:bias.pseudo}--\ref{prop:variance.feasible}), MSE (Corollaries~\ref{cor:MSE.pseudo}--\ref{cor:MSE.feasible}), optimal smoothing rates, and asymptotic normality (Theorems~\ref{thm:CLT.pseudo}--\ref{thm:CLT.feasible}). A key message is that the leading bias term is the same as in the full-data Dirichlet KDE (Proposition~\ref{prop:bias.pseudo}), while MAR affects the variance through the additional factor $1+\zeta(\bb{s})$ in \eqref{eq:psi.zeta}. Moreover, when $\pi$ is estimated, the feasible estimator admits a second-order variance reduction term $-n^{-1}\xi(\bb{s})$ (Proposition~\ref{prop:variance.feasible}), so estimating the propensities need not inflate variability at the first order.

From a practical standpoint, the workflow is simple. Estimate the propensity scores $\hat{\pi}_i(\bb{X}_{1:n})$ using \eqref{eq:NW} (with a kernel bandwidth $h$ chosen, for instance, by a rule of thumb), and control extreme weights by enforcing a small floor $\pi_{\min}$ or using stabilized weights when needed. When auxiliary information includes discrete factors (e.g., sex), a simple strategy that preserves the theoretical setting is to stratify on the discrete variables and estimate $\pi$ within strata using only continuous covariates, or to extend \eqref{eq:NW} with mixed discrete/continuous kernels. Then select the Dirichlet bandwidth $b$ by minimizing the IPW-adapted LSCV criterion in \eqref{eq:LSCV_b_simulation} over a candidate grid, approximating the integral over $\mathcal{S}_2$ numerically on an interior grid. Our simulation results and the NHANES application illustrate that the resulting estimator is accurate and stable under moderate missingness, and that Dirichlet smoothing outperforms log-ratio-based alternatives on the simplex for the models we studied.

Several extensions are natural. First, sharper theory and data-driven tuning for the \emph{joint} choice of $(b,h)$ (e.g., joint cross-validation or cross-fitting for $\hat{\pi}$) could improve finite-sample performance. Second, accommodating sparse compositional data with structural zeros (common in microbiome applications) would require coupling the IPW Dirichlet KDE with principled zero-handling or zero-inflated models. Third, inference beyond pointwise limits, such as uniform confidence bands on $\mathcal{S}_d$, inference for functionals (e.g., modes, level sets, or integrated risks), and uncertainty quantification for the NHANES modal composition, remain important directions. Fourth, extending the methodology to complex survey designs (such as NHANES) by incorporating sampling weights and accounting for clustering/stratification would broaden the scope of valid population-level interpretations. Fifth, while the adaptivity of the Dirichlet kernel naturally mitigates boundary effects, developing higher-order bias reduction techniques, such as the non-negative multiplicative bias corrections adapted for multivariate bounded domains \citep{FunkeKawka2015}, within the IPW framework could further improve estimation accuracy near the simplex boundaries. Finally, it would be valuable to extend the approach to dependent designs (time series or spatial data) and to sensitivity analyses under departures from MAR (missing not at random), where IPW alone is no longer sufficient.

\section{Proofs}\label{sec:proofs}

\subsection{Proof of Proposition~\ref{prop:bias.pseudo}}

Let $\bb{s}\in \mathcal{S}_d$ be given. By linearity of expectations, the conditional independence of $\delta_i$ and $\bb{Y}_i$ given $\bb{X}_i$, and the fact that $\EE[\delta_1 \mid \bb{X}_1] = \pi(\bb{X}_1)$, we have
\[
\begin{aligned}
\EE[\widetilde{f}_{n,b}(\bb{s})]
&= n^{-1} \sum_{i=1}^n \EE\left[\frac{\delta_{i}}{\pi(\bb{X}_i)}\kappa_{\bb{s},b}(\bb{Y}_i)\right] \\
&= \EE\left[\EE\left[\frac{\delta_1}{\pi(\bb{X}_1)}\kappa_{\bb{s},b}(\bb{Y}_1) \mid \bb{X}_1\right]\right] \\
&= \EE\left[\frac{\EE[\delta_1 \mid \bb{X}_1]}{\pi(\bb{X}_1)} \EE[\kappa_{\bb{s},b}(\bb{Y}_1) \mid \bb{X}_1]\right] \\[2mm]
&= \EE[\EE[\kappa_{\bb{s},b}(\bb{Y}_1) \mid \bb{X}_1]] \\[3mm]
&= \EE[\kappa_{\bb{s},b}(\bb{Y}_1)],
\end{aligned}
\]
so that
\[
\EE[\widetilde{f}_{n,b}(\bb{s})]
= \EE[\kappa_{\bb{s},b}(\bb{Y}_1)]
= \EE[\hat{f}^{\hspace{0.2mm}\mathrm{full}}_{n,b}(\bb{s})].
\]
From here, the asymptotics are exactly the same as in Theorem~1 of \citet{MR4319409} under Assumptions~\ref{ass:3} and \ref{ass:1}. This concludes the proof.

\subsection{Proof of Proposition~\ref{prop:variance.pseudo}}

Let $\bb{s}\in \mathrm{Int}(\mathcal{S}_d)$ be given such that $f(\bb{s})\in (0,\infty)$. Since the pairs $(\bb{X}_1,\bb{Y}_1),\ldots,(\bb{X}_n,\bb{Y}_n)$ are iid, note that
\[
\begin{aligned}
\Var(\widetilde{f}_{n,b}(\bb{s}))
&= \Var\left(n^{-1} \sum_{i=1}^n \frac{\delta_{i}}{\pi(\bb{X}_i)} \kappa_{\bb{s},b}(\bb{Y}_i)\right) \\
&= n^{-1} \Var\left(\frac{\delta_1}{\pi(\bb{X}_1)} \kappa_{\bb{s},b}(\bb{Y}_1)\right).
\end{aligned}
\]
By the law of total variance, we have the decomposition:
\[
\begin{aligned}
\Var\left(\frac{\delta_1}{\pi(\bb{X}_1)} \kappa_{\bb{s},b}(\bb{Y}_1)\right)
&= \Var\left(\EE\left[\frac{\delta_1}{\pi(\bb{X}_1)} \kappa_{\bb{s},b}(\bb{Y}_1) \mid \bb{X}_1\right]\right) + \EE\left[\Var\left(\frac{\delta_1}{\pi(\bb{X}_1)} \kappa_{\bb{s},b}(\bb{Y}_1)\mid \bb{X}_1\right)\right] \\
&\equiv I_1 + I_2.
\end{aligned}
\]
It follows from the calculations in the proof of Proposition~\ref{prop:bias.pseudo} that
\[
I_1 = \Var\left(\EE\left[\kappa_{\bb{s},b}(\bb{Y}_1) \mid \bb{X}_1\right]\right).
\]
Applying the law of total variance, together with $\Var(\delta_1\mid \bb{Y}_1,\bb{X}_1) = \pi(\bb{X}_1) (1 - \pi(\bb{X}_1))$, yields
\[
\begin{aligned}
&\Var\left(\frac{\delta_1}{\pi(\bb{X}_1)} \kappa_{\bb{s},b}(\bb{Y}_1)\mid \bb{X}_1\right) \\
&\quad= \Var\left(\EE\left[\frac{\delta_1}{\pi(\bb{X}_1)} \kappa_{\bb{s},b}(\bb{Y}_1) \mid \bb{Y}_1, \bb{X}_1\right] \mid \bb{X}_1\right) + \EE\left[\Var\left(\frac{\delta_1}{\pi(\bb{X}_1)} \kappa_{\bb{s},b}(\bb{Y}_1)\mid \bb{Y}_1, \bb{X}_1\right) \mid \bb{X}_1\right] \\
&\quad= \Var\left(\frac{\EE[\delta_1 \mid \bb{Y}_1,\bb{X}_1]}{\pi(\bb{X}_1)} \kappa_{\bb{s},b}(\bb{Y}_1) \mid \bb{X}_1\right) + \EE\left[\frac{\Var(\delta_1\mid \bb{Y}_1,\bb{X}_1)}{\pi(\bb{X}_1)^2} \kappa_{\bb{s},b}(\bb{Y}_1)^2 \mid \bb{X}_1\right] \\
&\quad= \Var(\kappa_{\bb{s},b}(\bb{Y}_1) \mid \bb{X}_1) + \frac{1 - \pi(\bb{X}_1)}{\pi(\bb{X}_1)} \EE\left[\kappa^2_{\bb{s},b}(\bb{Y}_1) \mid \bb{X}_1\right],
\end{aligned}
\]
so that
\[
I_2 = \EE[\Var(\kappa_{\bb{s},b}(\bb{Y}_1) \mid \bb{X}_1)] + \EE\left[\frac{1 - \pi(\bb{X}_1)}{\pi(\bb{X}_1)} \EE[\kappa^2_{\bb{s},b}(\bb{Y}_1) \mid \bb{X}_1]\right].
\]
Therefore, a third application of the law of total variance implies
\begin{equation}\label{eq:var.end.0}
\begin{aligned}
\Var(\widetilde{f}_{n,b}(\bb{s}))
&= n^{-1} (I_1 + I_2) \\
&= n^{-1} \Var(\kappa_{\bb{s},b}(\bb{Y}_1)) + n^{-1} \EE\left[\frac{1 - \pi(\bb{X}_1)}{\pi(\bb{X}_1)} \EE[\kappa^2_{\bb{s},b}(\bb{Y}_1) \mid \bb{X}_1]\right].
\end{aligned}
\end{equation}
By Theorem~2 of \citet{MR4319409} under Assumption~\ref{ass:2} we know that
\begin{equation}\label{eq:var.end.1}
n^{-1} \Var(\kappa_{\bb{s},b}(\bb{Y}_1)) = \Var(\hat{f}^{\hspace{0.2mm}\mathrm{full}}_{n,b}(\bb{s})) = n^{-1} b^{-d/2} \psi(\bb{s}) f(\bb{s}) (1 + \OO_{\bb{s}}(b^{1/2})).
\end{equation}
Similarly, by applying Lemma~1 of \citet{MR4319409} under Assumption~\ref{ass:6}, with $f_{\bb{Y}_1 \mid \bb{X}_1}$ as the target density, we have, for each $\bb{x}\in \R^p$ such that $g(\bb{x})\in (0,\infty)$,
\[
\EE[\kappa^2_{\bb{s},b}(\bb{Y}_1) \mid \bb{X}_1 = \bb{x}] = b^{-d/2} \psi(\bb{s}) f_{\bb{Y}_1 \mid \bb{X}_1}(\bb{s} \mid \bb{x}) (1 + \OO_{\bb{s}}(b^{1/2})),
\]
Hence, since $f_{\bb{Y}_1 \mid \bb{X}_1}(\bb{s} \mid \bb{x}) g(\bb{x}) / f(\bb{s}) = f_{\bb{X}_1 \mid \bb{Y}_1}(\bb{x} \mid \bb{s})$ by Bayes' rule, we obtain
\begin{equation}\label{eq:var.end.2}
\begin{aligned}
&n^{-1} \EE\left[\frac{1 - \pi(\bb{X}_1)}{\pi(\bb{X}_1)} \EE[\kappa^2_{\bb{s},b}(\bb{Y}_1) \mid \bb{X}_1]\right] \\
&\qquad= n^{-1} b^{-d/2} \psi(\bb{s}) f(\bb{s}) \, \EE\left[\frac{1 - \pi(\bb{X}_1)}{\pi(\bb{X}_1)} \mid \bb{Y}_1 = \bb{s}\right] (1 + \OO_{\bb{s}}(b^{1/2})).
\end{aligned}
\end{equation}
(Note that the last conditional expectation exists by Assumption~\ref{ass:5}.) Substituting \eqref{eq:var.end.1} and \eqref{eq:var.end.2} back into \eqref{eq:var.end.0}, the conclusion follows.

\subsection{Proof of Theorem~\ref{thm:CLT.pseudo}}

Let $\bb{s}\in \mathrm{Int}(\mathcal{S}_d)$ be given such that $f(\bb{s})\in (0,\infty)$. Choose $b\sim c n^{-2/(d+4)}$ as $n\to \infty$ for some $c\in (0,\infty)$. Start with the decomposition:
\[
\begin{aligned}
&n^{1/2} b^{\hspace{0.2mm}d/4} \big(\widetilde{f}_{n,b}(\bb{s}) - f(\bb{s}) - b \phi(\bb{s})\big) \\
&\quad= n^{1/2} b^{\hspace{0.2mm}d/4} \big(\widetilde{f}_{n,b}(\bb{s}) - \EE[\widetilde{f}_{n,b}(\bb{s})]\big) + n^{1/2} b^{\hspace{0.2mm}d/4} \big(\EE[\widetilde{f}_{n,b}(\bb{s})] - f(\bb{s}) - b \phi(\bb{s})\big).
\end{aligned}
\]
The second term on the right-hand side is $\oo(n^{1/2} b^{(d+4)/4})$ by Proposition~\ref{prop:bias.pseudo} (under Assumption~\ref{ass:3}), which is $\oo(1)$ because $b\sim c n^{-2/(d+4)}$. Thus, it suffices to show that
\[
n^{1/2} b^{\hspace{0.2mm}d/4} (\widetilde{f}_{n,b}(\bb{s}) - \EE[\widetilde{f}_{n,b}(\bb{s})])\rightsquigarrow \mathcal{N}(0, \psi(\bb{s}) f(\bb{s}) (1 + \zeta(\bb{s}))).
\]
Note that we can write
\[
\widetilde{f}_{n,b}(\bb{s}) - \EE[\widetilde{f}_{n,b}(\bb{s})] = n^{-1} \sum_{i=1}^n Z_{i,b}(\bb{s}),
\]
where the random variables,
\[
Z_{i,b}(\bb{s}) = \frac{\delta_{i}}{\pi(\bb{X}_i)} \kappa_{\bb{s},b}(\bb{Y}_i)-\EE\left[\frac{\delta_{i}}{\pi(\bb{X}_i)} \kappa_{\bb{s},b}(\bb{Y}_i)\right], \quad i\in \{1,\ldots,n\},
\]
are centered and independent. The asymptotic normality of $n^{1/2} b^{\hspace{0.2mm}d/4} (\widetilde{f}_{n,b}(\bb{s}) - \EE \{\widetilde{f}_{n,b}(\bb{s})\})$ will be proved if we verify the following Lindeberg condition for double arrays: For every $\e\in (0,\infty)$,
\begin{equation}\label{eq:Lindeberg.condition}
s_{n,b}^{-2} \, \EE[|Z_{1,b}(\bb{s})|^2 \, \mathds{1}_{\{|Z_{1,b}(\bb{s})| > \e n^{1/2} s_{n,b}\}}] \to 0, \quad n\to \infty,
\end{equation}
where $s_{n,b}^2 = \EE[|Z_{1,b}(\bb{s})|^2]$. Using Assumption~\ref{ass:5}, and applying the local bound on the Dirichlet kernel from Lemma~\ref{lem:local.bound}, one finds that, for every $i\in \{1,\ldots,n\}$,
\[
|Z_{i,b}(\bb{s})| \ll \frac{\max_{\bb{x}\in \mathcal{S}_d} \kappa_{\bb{s},b}(\bb{x})}{\pi(\bb{X}_i)} \ll_{\bb{s}} \frac{b^{-d/2}}{\pi_{\min}} = \OO_{\bb{s}}(b^{-d/2}).
\]
Using Proposition~\ref{prop:variance.pseudo} (under Assumptions~\ref{ass:3} and \ref{ass:6}), we have
\[
s_{n,b}^2 = b^{-d/2} \psi(\bb{s}) f(\bb{s}) (1 + \zeta(\bb{s})) (1 + \oo_{\bb{s}}(1)).
\]
Combining the last two equations yields
\[
\frac{|Z_{1,b}(\bb{s})|}{n^{1/2} s_{n,b}} \ll_{\bb{s}} n^{-1/2} \, b^{-d/2}\, b^{\hspace{0.2mm}d/4} = n^{-1/2} \, b^{-d/4} \to 0.
\]
It follows that the indicator function in \eqref{eq:Lindeberg.condition} is eventually $0$ (uniformly in $\omega$), which proves Lindeberg's condition. This concludes the proof.

\subsection{Proof of Proposition~\ref{prop:bias.feasible}}

Let $\bb{s}\in \mathcal{S}_d$ be given. The Taylor expansion of $1/\hat{\pi}_i(\bb{X}_{1:n})$ around $1/\pi(\bb{X}_i)$ yields
\[
\frac{1}{\hat{\pi}_i(\bb{X}_{1:n})} = \frac{1}{\pi(\bb{X}_i)} - \frac{1}{\pi(\bb{X}_i)^2} (\hat{\pi}_i(\bb{X}_{1:n}) - \pi(\bb{X}_i)) + \sum_{j=2}^{\infty} \frac{(-1)^j}{\pi(\bb{X}_i)^{j+1}} (\hat{\pi}_i(\bb{X}_{1:n}) - \pi(\bb{X}_i))^j.
\]
It follows that
\begin{equation}\label{eq:hat.f.expansion}
\begin{aligned}
\hat{f}_{n,b}(\bb{s})
&= n^{-1} \sum_{i=1}^n \frac{\delta_i}{\pi(\bb{X}_i)} \kappa_{\bb{s},b}(\bb{Y}_i) - n^{-1} \sum_{i=1}^n \frac{\delta_i}{\pi(\bb{X}_i)^2}(\hat{\pi}_i(\bb{X}_{1:n}) - \pi(\bb{X}_i)) \kappa_{\bb{s},b}(\bb{Y}_i) \\
&\qquad+ n^{-1} \sum_{j=2}^{\infty} \sum_{i=1}^n \frac{(-1)^j \delta_i}{\pi(\bb{X}_i)^{j+1}}(\hat{\pi}_i(\bb{X}_{1:n}) - \pi(\bb{X}_i))^j \kappa_{\bb{s},b}(\bb{Y}_i) \\
&\equiv \widetilde{f}_{n,b}(\bb{s}) - T_{n,1} + T_{n,2}.
\end{aligned}
\end{equation}

First, we control the term $T_{n,2}$ in \eqref{eq:hat.f.expansion}. Under Assumption~\ref{ass:5} and Assumptions~\ref{ass:B.1}--\ref{ass:B.3}, and our choice $h \equiv h_n \asymp n^{-1/(p+4)}$ as $n\to \infty$, we know from a multivariate extension of Corollaries~2.1--2.2 in \citet{MR3167772} that, for all $k\in \N$ and $\bb{x}\in \R^p$ such that $g(\bb{x})\in (0,\infty)$,
\begin{equation}\label{eq:Geenens}
\EE[|\hat{\pi}_i(\bb{X}_{1:n}) - \pi(\bb{x})|^k \mid \bb{X}_i = \bb{x}] \ll n^{-2k/(p+4)}.
\end{equation}
By the triangle inequality and the conditional independence of $\delta_i$ and $\bb{Y}_i$ given $\bb{X}_i$, we have
\[
\begin{aligned}
|\EE[T_{n,2}]|
&= \left|\EE\left[n^{-1} \sum_{j=2}^{\infty} \sum_{i=1}^n \frac{(-1)^j \delta_i}{\pi(\bb{X}_i)^{j+1}}(\hat{\pi}_i(\bb{X}_{1:n}) - \pi(\bb{X}_i))^j \kappa_{\bb{s},b}(\bb{Y}_i)\right]\right| \\
&\quad\leq n^{-1} \sum_{j=2}^{\infty} \frac{1}{\pi_{\min}^{j+1}} \sum_{i=1}^n \EE\left[\EE[|\hat{\pi}_i(\bb{X}_{1:n}) - \pi(\bb{X}_i)|^j \mid \bb{X}_i] \, \EE[\kappa_{\bb{s},b}(\bb{Y}_i) \mid \bb{X}_i]\right].
\end{aligned}
\]
Combined with \eqref{eq:Geenens}, the fact that the conditional density $f_{\bb{Y}_i \mid \bb{X}_i}$ is bounded under Assumption~\ref{ass:6}, and the fact that $\int_{\mathcal{S}_d} \kappa_{\bb{s},b}(\bb{y}) \rd \bb{y} = 1$, we deduce that
\begin{equation}\label{eq:bias.T3}
|\EE[T_{n,2}]|
\ll \sum_{j=2}^{\infty} \frac{1}{\pi_{\min}^{j+1}} n^{-2j/(p+4)}
\ll n^{-4/(p+4)}.
\end{equation}

Next, decompose the term $T_{n,1}$ on the right-hand side of \eqref{eq:hat.f.expansion} as follows:
\[
\begin{aligned}
T_{n,1}
&= n^{-1} \sum_{i=1}^n \frac{\delta_i - \pi(\bb{X}_i)}{\pi(\bb{X}_i)^2}(\hat{\pi}_i(\bb{X}_{1:n}) - \pi(\bb{X}_i)) \kappa_{\bb{s},b}(\bb{Y}_i) \\
&\quad+ n^{-1} \sum_{i=1}^n \frac{\pi(\bb{X}_i)}{\pi(\bb{X}_i)^2}(\hat{\pi}_i(\bb{X}_{1:n}) - \pi(\bb{X}_i)) \kappa_{\bb{s},b}(\bb{Y}_i) \\
&\equiv J_{n,1} + J_{n,2}.
\end{aligned}
\]
Letting
\begin{equation}\label{eq:g.hat}
\hat{g}_n(\bb{x}) = n^{-1} \sum_{j=1}^n K_h^*(\bb{x} - \bb{X}_j), \quad \bb{x}\in \R^p,
\end{equation}
we have
\begin{align}
J_{n,1}
&= n^{-1} \sum_{i=1}^n \frac{(\delta_i - \pi(\bb{X}_i))}{\pi(\bb{X}_i)^2} \left\{\frac{\sum_{j=1}^n \delta_j \, K_h^*(\bb{X}_i - \bb{X}_j)}{\sum_{j=1}^n K_h^*(\bb{X}_i - \bb{X}_j)} - \pi(\bb{X}_i)\right\} \kappa_{\bb{s},b}(\bb{Y}_i) \\
&= n^{-2} \sum_{i=1}^n \sum_{j=1}^n \frac{(\delta_i - \pi(\bb{X}_i)) (\delta_j - \pi(\bb{X}_i)) \, K_h^*(\bb{X}_i - \bb{X}_j) \, \kappa_{\bb{s},b}(\bb{Y}_i)}{\pi^2(\bb{X}_i) \, \hat{g}_n(\bb{X}_i)}
\end{align}
Conditioning on $\bb{X}_{1:n}$ yields, for $i\neq j$,
\begin{equation}\label{eq:cov.delta}
\begin{aligned}
\EE[(\delta_i - \pi(\bb{X}_i))(\delta_j - \pi(\bb{X}_i)) \mid \bb{X}_{1:n}]
&= \EE[\delta_i - \pi(\bb{X}_i) \mid \bb{X}_{1:n}] \, \EE[\delta_j - \pi(\bb{X}_i) \mid \bb{X}_{1:n}] \\
&= (\pi(\bb{X}_i) - \pi(\bb{X}_i)) (\pi(\bb{X}_j) - \pi(\bb{X}_i)) \\
&= 0,
\end{aligned}
\end{equation}
so that
\[
\begin{aligned}
\EE[J_{n,1}]
&= n^{-2} \sum_{i=1}^n \sum_{j=1}^n \EE\left[\frac{\EE[(\delta_i - \pi(\bb{X}_i)) (\delta_j - \pi(\bb{X}_i)) \mid \bb{X}_{1:n}] \, K_h^*(\bb{X}_i - \bb{X}_j) \, \EE[\kappa_{\bb{s},b}(\bb{Y}_i) \mid \bb{X}_{1:n}]}{\pi^2(\bb{X}_i) \, \hat{g}_n(\bb{X}_i)}\right] \\
&= n^{-2} \sum_{i=1}^n \EE\left[\frac{\EE[(\delta_i - \pi(\bb{X}_i))^2 \mid \bb{X}_{1:n}] \, K_h^*(\bb{0}_p) \, \EE[\kappa_{\bb{s},b}(\bb{Y}_i) \mid \bb{X}_{1:n}]}{\pi^2(\bb{X}_i) \, \hat{g}_n(\bb{X}_i)}\right] \\
&= K^*(\bb{0}_p) \, n^{-1} h^{-p} \, \EE\left[\frac{\EE[(\delta_1 - \pi(\bb{X}_1))^2 \mid \bb{X}_{1:n}] \, \EE[\kappa_{\bb{s},b}(\bb{Y}_1) \mid \bb{X}_{1:n}]}{\pi^2(\bb{X}_1) \, \hat{g}_n(\bb{X}_1)}\right].
\end{aligned}
\]
Since
\[
\EE[(\delta_1 - \pi(\bb{X}_1))^2 \mid \bb{X}_{1:n}] = \Var(\delta_1 \mid \bb{X}_{1:n}) = \pi(\bb{X}_1) (1 - \pi(\bb{X}_1)),
\]
we have, under Assumption~\ref{ass:6},
\[
\begin{aligned}
\EE[J_{n,1}]
&= K^*(\bb{0}_p) \, n^{-1} h^{-p} \, \EE\left[\frac{(1 - \pi(\bb{X}_1)) \, \EE[\kappa_{\bb{s},b}(\bb{Y}_1) \mid \bb{X}_{1:n}]}{\pi(\bb{X}_1) \, \hat{g}_n(\bb{X}_1)}\right] \\
&\ll n^{-1} h^{-p} \, \EE\left[\frac{1}{\hat{g}_n(\bb{X}_1)}\right].
\end{aligned}
\]
By Assumptions~\ref{ass:B.1}--\ref{ass:B.3}, the uniform strong consistency of $\hat{g}_n$, and Assumption~\ref{ass:4}, it follows that
\[
\EE[J_{n,1}] \ll n^{-1} h^{-p} \, \EE\left[\frac{1}{g(\bb{X}_1)}\right] \ll n^{-1} h^{-p}.
\]

Next, we want to control $J_{n,2}$. Under Assumptions~\ref{ass:B.1}--\ref{ass:B.3}, note that
\[
\EE[\hat{\pi}_i(\bb{X}_{1:n}) \mid \bb{X}_i] - \pi(\bb{X}_i) = \OO(h^2),
\]
being the bias term of a classical Nadaraya--Watson regression estimator. Therefore, under Assumptions~\ref{ass:5}--\ref{ass:6}, we have
\[
\begin{aligned}
|\EE[J_{n,2}]|
&= \left|\EE\left[n^{-1} \sum_{i=1}^n \frac{1}{\pi(\bb{X}_i)}(\EE[\hat{\pi}_i(\bb{X}_{1:n}) \mid \bb{X}_i] - \pi(\bb{X}_i)) \, \EE[\kappa_{\bb{s},b}(\bb{Y}_i) \mid \bb{X}_i]\right]\right| \\
&\leq \EE\left[n^{-1} \sum_{i=1}^n \frac{1}{\pi(\bb{X}_i)} |\EE[\hat{\pi}_i(\bb{X}_{1:n}) \mid \bb{X}_i] - \pi(\bb{X}_i)| \, \EE[\kappa_{\bb{s},b}(\bb{Y}_i) \mid \bb{X}_i]\right] \\
&\ll \frac{h^2}{\pi_{\mathrm{min}}} \, \EE\left[\kappa_{\bb{s},b}(\bb{Y}_1) \mid \bb{X}_1\right] \\
&\ll h^2.
\end{aligned}
\]
We deduce that
\begin{equation}\label{eq:bias.T2}
\EE[T_{n,1}] = \EE[J_{n,1}] + \EE[J_{n,2}] = \OO(n^{-1} h^{-p} + h^2).
\end{equation}

Substituting \eqref{eq:bias.T3} and \eqref{eq:bias.T2} back into \eqref{eq:hat.f.expansion}, we have, for $h\sim \kappa n^{-1/(p+4)}$,
\[
\EE(\hat{f}_{n,b}(\bb{s})) = \EE(\widetilde{f}_{n,b}(\bb{s})) + \OO(n^{-2/(p+4)}).
\]
The asymptotics of $\EE(\widetilde{f}_{n,b}(\bb{s}))$ are given in Proposition~\ref{prop:bias.pseudo}. This concludes the proof.

\subsection{Proof of Proposition~\ref{prop:variance.feasible}}

Let $\bb{s}\in \mathrm{Int}(\mathcal{S}_d)$ be given such that $f(\bb{s})\in (0,\infty)$. As in the proof of Proposition~\ref{prop:bias.feasible}, consider the decomposition:
\begin{equation}\label{eq:hat.f.expansion.2}
\hat{f}_{n,b}(\bb{s}) = \widetilde{f}_{n,b}(\bb{s}) + \sum_{j=1}^{\infty} (-1)^j M_{n,j},
\end{equation}
where
\[
M_{n,j} = n^{-1} \sum_{i=1}^n \frac{\delta_i}{\pi(\bb{X}_i)^{j+1}}(\hat{\pi}_i(\bb{X}_{1:n}) - \pi(\bb{X}_i))^j \kappa_{\bb{s},b}(\bb{Y}_i), \quad j\in \N.
\]
By the triangle inequality, the conditional independence of $\delta_i$ and $\bb{Y}_i$ given $\bb{X}_i$, and Cauchy-Schwarz, we have, for all $j\in \N$,
\[
\begin{aligned}
\EE[M_{n,j}^2]
&\leq \frac{n^{-2}}{\pi_{\mathrm{min}}^{2(j+1)}} \sum_{i=1}^n \EE\left[\EE[|\hat{\pi}_i(\bb{X}_{1:n}) - \pi(\bb{X}_i)|^{2j} \mid \bb{X}_i] \, \EE[\kappa_{\bb{s},b}(\bb{Y}_i)^2 \mid \bb{X}_i]\right] \\
&\quad+ \frac{n^{-2}}{\pi_{\mathrm{min}}^{2(j+1)}} \sum_{i=1}^n \sum_{\substack{k=1 \\ k\neq i}}^n \EE\Big[\sqrt{\EE[|\hat{\pi}_i(\bb{X}_{1:n}) - \pi(\bb{X}_i)|^{2j} \mid \bb{X}_i, \bb{X}_k]} \Big. \\[-3mm]
&\hspace{33mm}\times \Big. \sqrt{\EE[|\hat{\pi}_k(\bb{X}_{1:n}) - \pi(\bb{X}_k)|^{2j} \mid \bb{X}_i, \bb{X}_k]} \, \EE[\kappa_{\bb{s},b}(\bb{Y}_i) \mid \bb{X}_i] \, \EE[\kappa_{\bb{s},b}(\bb{Y}_k) \mid \bb{X}_k]\Big].
\end{aligned}
\]
Combined with \eqref{eq:Geenens} and the upper bound on the $L^2$ norm of the Dirichlet kernel in Lemma~\ref{lem:kappa.L2.norm} under Assumption~\ref{ass:6}, with $f_{\bb{Y}_1 \mid \bb{X}_1}$ as the target density, we have
\begin{equation}\label{eq:bound.M2}
\EE[M_{n,j}^2] \ll n^{-1} n^{-4j/(p+4)} b^{-d/2} + n^{-4j/(p+4)} \ll n^{-4j/(p+4)},
\end{equation}
where the last bound is due to $n b^{d/2}\to \infty$ in Assumption~\ref{ass:1}. By the Cauchy-Schwarz inequality, it follows that, for all $j,j'\in \N$,
\[
|\Cov(M_{n,j},M_{n,j'})|
\leq \sqrt{\Var(M_{n,j})} \sqrt{\Var(M_{n,j'})}
\leq \sqrt{\EE[M_{n,j}^2]} \sqrt{\EE[M_{n,j'}^2]} \ll n^{-2(j+j')/(p+4)}.
\]
A similar argument yields, for all $j\in \N$,
\[
|\Cov(\widetilde{f}_{n,b}(\bb{s}),M_{n,j})| \ll n^{-2j/(p+4)}.
\]
Thus, putting the last three equations back into \eqref{eq:hat.f.expansion.2}, we get
\begin{equation}\label{eq:star}
\Var(\hat{f}_{n,b}(\bb{s}))
= \Var(\widetilde{f}_{n,b}(\bb{s})) - 2 \Cov(\widetilde{f}_{n,b}(\bb{s}), M_{n,1}) + \Var(M_{n,1}) + \OO(n^{-4/(p+4)}).
\end{equation}

Next, our goal is to decompose $M_{n,1}$. Letting $\hat{g}_n(\bb{x})$ be defined as in \eqref{eq:g.hat}, we have
\begin{equation}\label{eq:M1.U1.U2}
\begin{aligned}
M_{n,1}
&= n^{-1} \sum_{i=1}^n \frac{\delta_i}{\pi(\bb{X}_i)^2} \big(\hat{\pi}_i(\bb{X}_{1:n}) - \pi(\bb{X}_i)\big) \kappa_{\bb{s},b}(\bb{Y}_i) \\
&= n^{-2} \sum_{i=1}^n \sum_{j=1}^n \frac{\delta_i (\delta_j - \pi(\bb{X}_i))}{\pi(\bb{X}_i)^2} \frac{K_h^*(\bb{X}_i - \bb{X}_j)}{\hat{g}_n(\bb{X}_i)} \kappa_{\bb{s},b}(\bb{Y}_i) \\
&\equiv U_{n,1} + U_{n,2},
\end{aligned}
\end{equation}
where
\[
U_{n,1} = n^{-2} \sum_{i=1}^n \sum_{j=1}^n \frac{(\delta_i - \pi(\bb{X}_i))(\delta_j - \pi(\bb{X}_i))}{\pi(\bb{X}_i)^2} \frac{K_h^*(\bb{X}_i - \bb{X}_j)}{\hat{g}_n(\bb{X}_i)} \kappa_{\bb{s},b}(\bb{Y}_i),
\]
and
\[
U_{n,2} = n^{-2} \sum_{i=1}^n \sum_{j=1}^n \frac{(\delta_j - \pi(\bb{X}_i))}{\pi(\bb{X}_i)} \frac{K_h^*(\bb{X}_i - \bb{X}_j)}{\hat{g}_n(\bb{X}_i)} \kappa_{\bb{s},b}(\bb{Y}_i).
\]

First, we bound $\EE[U_{n,2}^2]$. Consider the recentering,
\begin{equation}\label{eq:recentering}
\delta_j - \pi(\bb{X}_i) = (\delta_j - \pi(\bb{X}_j)) + (\pi(\bb{X}_j) - \pi(\bb{X}_i)),
\end{equation}
which gives us the following decomposition:
\begin{equation}\label{eq:U2.V1.V2}
\begin{aligned}
U_{n,2}
&= n^{-2} \sum_{i=1}^n \sum_{j=1}^n \frac{(\delta_j - \pi(\bb{X}_j))}{\pi(\bb{X}_i)} \frac{K_h^*(\bb{X}_i - \bb{X}_j)}{\hat{g}_n(\bb{X}_i)} \kappa_{\bb{s},b}(\bb{Y}_i) \\
&\quad+ n^{-2} \sum_{i=1}^n \sum_{j=1}^n \frac{(\pi(\bb{X}_j) - \pi(\bb{X}_i))}{\pi(\bb{X}_i)} \frac{K_h^*(\bb{X}_i - \bb{X}_j)}{\hat{g}_n(\bb{X}_i)} \kappa_{\bb{s},b}(\bb{Y}_i) \\
&\equiv V_{n,1} + V_{n,2}.
\end{aligned}
\end{equation}

Now, we want to bound $\EE[V_{n,2}^2]$. Rewrite $V_{n,2}$ as
\[
V_{n,2} = n^{-1} \sum_{i=1}^n \frac{\kappa_{\bb{s},b}(\bb{Y}_i)}{\pi(\bb{X}_i)} D_{n,i},
\]
with
\[
D_{n,i} = \frac{n^{-1} \sum_{j=1}^n (\pi(\bb{X}_j) - \pi(\bb{X}_i)) K_h^*(\bb{X}_i - \bb{X}_j)}{\hat{g}_n(\bb{X}_i)}.
\]
Under Assumptions~\ref{ass:B.1}--\ref{ass:B.3}, for any point $\bb{x}\in \R^p$ such that $g(\bb{x})\in (0,\infty)$, a second-order Taylor expansion yields
\begin{equation}\label{eq:Vn2.mean.kernel}
\left|\EE\left[(\pi(\bb{X}_1) - \pi(\bb{x})) K_h^*(\bb{x} - \bb{X}_1)\right]\right| \ll h^2 .
\end{equation}
(The term of order $h$ is exactly $0$ by the symmetry of $K^*$.) Moreover, since $0 \leq \pi(\cdot) \leq 1$, $K^*$ is bounded by Assumption~\ref{ass:B.3}, and $g$ has bounded support by Assumption~\ref{ass:4} and is bounded by Assumption~\ref{ass:B.2}, we also have
\[
\begin{aligned}
\EE\left[(\pi(\bb{X}_1) - \pi(\bb{x}))^2 K_h^*(\bb{x} - \bb{X}_1)^2\right]
\leq \EE\left[K_h^*(\bb{x} - \bb{X}_1)^2\right]
&= \int_{\R^p} h^{-2p} K^*(h^{-1} (\bb{x} - \bb{t}))^2 g(\bb{t}) \rd \bb{t} \\
&= h^{-p} \int_{\R^p} K^*(\bb{z})^2 g(\bb{x} - h \bb{z}) \rd \bb{z} \\
&\ll h^{-p}.
\end{aligned}
\]
It follows from the last two equations, together with the independence of the $\bb{X}_i$'s, that, for all $i\in [n]$,
\[
\begin{aligned}
&\EE\left[\left\{n^{-1} \sum_{j=1}^n (\pi(\bb{X}_j) - \pi(\bb{X}_i)) K_h^*(\bb{X}_i - \bb{X}_j)\right\}^2 \mid \bb{X}_i\right] \\
&\quad\ll n^{-2} \sum_{j=1}^n \EE\left[(\pi(\bb{X}_j) - \pi(\bb{X}_i))^2 K_h^*(\bb{X}_i - \bb{X}_j)^2 \mid \bb{X}_i\right] \\
&\qquad+ \left\{n^{-1} \sum_{j=1}^n \left|\EE\left[(\pi(\bb{X}_j) - \pi(\bb{X}_i)) K_h^*(\bb{X}_i - \bb{X}_j) \mid \bb{X}_i\right]\right|\right\}^2 \\
&\quad\ll n^{-1} h^{-p} + h^4 \\
&\quad\ll n^{-4/(p+4)},
\end{aligned}
\]
where the last line is a consequence of our choice $h\sim \kappa n^{-1/(p+4)}$. Since $g$ is bounded below by $g_{\min}$ by Assumption~\ref{ass:4}, we deduce
\[
\EE[D_{n,i}^2] \ll \frac{1}{g_{\mathrm{min}}^2} \EE\left[\left\{n^{-1} \sum_{j=1}^n (\pi(\bb{X}_j) - \pi(\bb{X}_i)) K_h^*(\bb{X}_i - \bb{X}_j)\right\}^2\right] \ll n^{-4/(p+4)}.
\]
Finally, conditioning on $\bb{X}_{1:n}$ and using Assumption~\ref{ass:5}, we get
\[
\begin{aligned}
\EE[V_{n,2}^2]
&\ll n^{-2} \sum_{i=1}^n \EE\left[\EE[\kappa_{\bb{s},b}(\bb{Y}_i)^2 \mid \bb{X}_i] \, D_{n,i}^2\right] \\
&\quad+ n^{-2} \sum_{i=1}^n \sum_{\substack{j=1 \\ j\neq i}}^n \EE\big[\EE[\kappa_{\bb{s},b}(\bb{Y}_i) \mid \bb{X}_i] \EE[\kappa_{\bb{s},b}(\bb{Y}_j) \mid \bb{X}_j] \, |D_{n,i}| |D_{n,j}|\big].
\end{aligned}
\]
Now, using $\EE[\kappa_{\bb{s},b}(\bb{Y}_1)^2 \mid \bb{X}_1] \ll b^{-d/2}$ from Lemma~\ref{lem:kappa.L2.norm} (under Assumption~\ref{ass:6}) and the boundedness of $f_{\bb{Y}_1 \mid \bb{X}_1}$ in Assumption~\ref{ass:6}, followed by the Cauchy-Schwarz inequality, we have
\begin{equation}\label{eq:Vn2.bound}
\begin{aligned}
\EE[V_{n,2}^2]
&\ll n^{-2} b^{-d/2} \sum_{i=1}^n \EE[D_{n,i}^2] + n^{-2} \sum_{\substack{i,j=1 \\ j\neq i}}^n \sqrt{\EE[D_{n,i}^2]} \sqrt{\EE[D_{n,j}^2]} \\
&\ll n^{-1} b^{-d/2} n^{-4/(p+4)} + n^{-4/(p+4)} \\
&\ll n^{-4/(p+4)}.
\end{aligned}
\end{equation}

For $V_{n,1}$ in \eqref{eq:U2.V1.V2}, define, for every $\bb{x}\in \R^p$ such that $g(\bb{x})\in (0,\infty)$,
\[
\hat{q}_n(\bb{x}) = \frac{\sum_{i=1}^n \kappa_{\bb{s},b}(\bb{Y}_i) K_h^*(\bb{x} - \bb{X}_i)}{\sum_{i=1}^n K_h^*(\bb{x} - \bb{X}_i)},
\]
which is the Nadaraya--Watson estimator of the regression function $q(\bb{x}) = \EE[\kappa_{\bb{s},b}(\bb{Y}_1) \mid \bb{X}_1 = \bb{x}]$. We can rewrite $V_{n,1}$ as
\begin{equation}\label{eq:V1.R1.R2}
\begin{aligned}
V_{n,1}
&= n^{-1} \sum_{j=1}^n (\delta_j - \pi(\bb{X}_j)) \, n^{-1} \sum_{i=1}^n \frac{\kappa_{\bb{s},b}(\bb{Y}_i) K_h^*(\bb{X}_j - \bb{X}_i)}{\pi(\bb{X}_i) \hat{g}_n(\bb{X}_i)} \\
&= n^{-1} \sum_{j=1}^n \frac{\delta_j - \pi(\bb{X}_j)}{\pi(\bb{X}_j)} \hat{q}_n(\bb{X}_j) + R_{n,1} \\
&= n^{-1} \sum_{j=1}^n \frac{\delta_j - \pi(\bb{X}_j)}{\pi(\bb{X}_j)} q(\bb{X}_j) + R_{n,1} + R_{n,2},
\end{aligned}
\end{equation}
where
\[
R_{n,1} = n^{-1} \sum_{j=1}^n (\delta_j - \pi(\bb{X}_j)) S_{n,j}, \qquad R_{n,2} = n^{-1} \sum_{j=1}^n \frac{\delta_j - \pi(\bb{X}_j)}{\pi(\bb{X}_j)} \big\{\hat{q}_n(\bb{X}_j) - q(\bb{X}_j)\big\},
\]
with
\[
S_{n,j} = n^{-1} \sum_{i=1}^n K_h^*(\bb{X}_j - \bb{X}_i) \kappa_{\bb{s},b}(\bb{Y}_i) \left\{\frac{1}{\pi(\bb{X}_i)\hat{g}_n(\bb{X}_i)} - \frac{1}{\pi(\bb{X}_j)\hat{g}_n(\bb{X}_j)}\right\}.
\]

We now bound $\EE[R_{n,1}^2]$. By Assumptions~\ref{ass:4}--\ref{ass:5} and \ref{ass:B.2}, we have the first-order Taylor expansion
\[
\frac{1}{\pi(\bb{X}_i) g(\bb{X}_i)} - \frac{1}{\pi(\bb{X}_j) g(\bb{X}_j)}
= \nabla\left(\frac{1}{\pi g}\right)(\bb{X}_j)^{\top}(\bb{X}_i - \bb{X}_j) + \OO(\|\bb{X}_i - \bb{X}_j\|_2^2),
\]
with $\|\nabla(1/(\pi g))(\bb{x})\|_{\infty}$ uniformly bounded in $\bb{x}$. Similarly to \eqref{eq:Vn2.mean.kernel}, using the boundedness of $f_{\bb{Y}_1 \mid \bb{X}_1}$ in Assumption~\ref{ass:6}, and the symmetry and moment conditions of $K^*$ in Assumption~\ref{ass:B.3} leads to
\[
\begin{aligned}
\EE\left[n^{-1} \sum_{i=1}^n K_h^*(\bb{X}_j - \bb{X}_i) \kappa_{\bb{s},b}(\bb{Y}_i) \left\{\frac{1}{\pi(\bb{X}_i) g(\bb{X}_i)} - \frac{1}{\pi(\bb{X}_j) g(\bb{X}_j)}\right\} \mid \bb{X}_{1:n}\right]
&\ll h^2 \, \EE[\kappa_{\bb{s},b}(\bb{Y}_1) \mid \bb{X}_1] \\
&\ll h^2.
\end{aligned}
\]
Conditionally on $\bb{X}_{1:n}$, the summands are independent, so the corresponding variance satisfies
\[
\Var\left(n^{-1} \sum_{i=1}^n K_h^*(\bb{X}_j - \bb{X}_i) \kappa_{\bb{s},b}(\bb{Y}_i) \left\{\frac{1}{\pi(\bb{X}_i) g(\bb{X}_i)} - \frac{1}{\pi(\bb{X}_j) g(\bb{X}_j)}\right\} \mid \bb{X}_{1:n}\right) \ll n^{-1} h^{-p} b^{-d/2}.
\]
By Assumptions~\ref{ass:B.1}--\ref{ass:B.3}, the uniform strong consistency of $\hat{g}_n$, and Assumption~\ref{ass:4}, it follows from the last two equations that, for all $j\in [n]$,
\[
\EE\left[S_{n,j}^2 \mid \bb{X}_{1:n}\right]
\ll n^{-1} h^{-p} b^{-d/2} + h^4.
\]
Since
\[
\EE[(\delta_j - \pi(\bb{X}_j))^2 \mid \bb{X}_{1:n}, \bb{Y}_{1:n}] \leq 1, \quad
\EE[(\delta_j - \pi(\bb{X}_j)) (\delta_k - \pi(\bb{X}_k)) \mid \bb{X}_{1:n}, \bb{Y}_{1:n}] = 0,
\]
for all $j\neq k$, we have
\begin{equation}\label{eq:R2.as.R1}
\begin{aligned}
\EE[R_{n,1}^2 \mid \bb{X}_{1:n}, \bb{Y}_{1:n}]
&= n^{-2} \sum_{j=1}^n \EE[(\delta_j - \pi(\bb{X}_j))^2 \mid \bb{X}_{1:n}, \bb{Y}_{1:n}] S_{n,j}^2 \\
&\quad+ n^{-2} \sum_{j=1}^n \sum_{\substack{k=1 \\ k\neq j}}^n \EE[(\delta_j - \pi(\bb{X}_j)) (\delta_k - \pi(\bb{X}_k)) \mid \bb{X}_{1:n}, \bb{Y}_{1:n}] S_{n,j} S_{n,k} \\
&\leq n^{-2} \sum_{j=1}^n S_{n,j}^2,
\end{aligned}
\end{equation}
so that
\begin{equation}\label{eq:Rn1.bound}
\begin{aligned}
\EE[R_{n,1}^2]
\leq \EE\left[n^{-2} \sum_{j=1}^n \EE\left[S_{n,j}^2 \mid \bb{X}_{1:n}\right]\right]
\ll n^{-1} (n^{-1} h^{-p} b^{-d/2} + h^4)
\ll n^{-4/(p+4)} n^{-4/(d+4)},
\end{aligned}
\end{equation}

We now bound $\EE[R_{n,2}^2]$. Proceeding as in \eqref{eq:R2.as.R1}, we have
\begin{equation}\label{eq:L2.q.hat.q}
\EE\left[R_{n,2}^2 \mid \bb{X}_{1:n}, \bb{Y}_{1:n}\right]
\leq n^{-2} \sum_{j=1}^n \frac{1}{\pi(\bb{X}_j)^2} \big\{\hat{q}_n(\bb{X}_j) - q(\bb{X}_j)\big\}^2.
\end{equation}
Under Assumptions~\ref{ass:B.1}--\ref{ass:B.3} and \ref{ass:C.1}, we have, using the local bound on the Dirichlet kernel in Lemma~\ref{lem:local.bound} and standard arguments for Nadaraya--Watson regression,
\begin{equation}\label{eq:qhat.mse}
\EE\Big[\big\{\hat{q}_n(\bb{x}) - q(\bb{x})\big\}^2\Big] \ll n^{-1} h^{-p} b^{-d/2} + h^4 \ll h^{-p} n^{-4/(d+4)} + h^4.
\end{equation}
Taking expectations in \eqref{eq:L2.q.hat.q}, and using \eqref{eq:qhat.mse} in conjunction with Assumption~\ref{ass:5} then yields
\begin{equation}\label{eq:Rn2.bound}
\EE[R_{n,2}^2]
\ll n^{-1} (h^{-p} n^{-4/(d+4)} + h^4)
\ll n^{-4/(p+4)} n^{-4/(d+4)}.
\end{equation}

Now, we go back to \eqref{eq:M1.U1.U2} to bound $\EE[U_{n,1}^2]$. The same recentering argument \eqref{eq:recentering} that we used for $U_{n,2}$ gives us
\begin{equation}\label{eq:Un1.decomp}
\begin{aligned}
U_{n,1}
&= n^{-2} \sum_{i=1}^n \sum_{j=1}^n \frac{(\delta_i - \pi(\bb{X}_i))(\delta_j - \pi(\bb{X}_j))}{\pi(\bb{X}_i)^2} \frac{K_h^*(\bb{X}_i - \bb{X}_j)}{\hat{g}_n(\bb{X}_i)} \kappa_{\bb{s},b}(\bb{Y}_i) \\
&\quad+ n^{-2} \sum_{i=1}^n \sum_{j=1}^n \frac{(\delta_i - \pi(\bb{X}_i))(\pi(\bb{X}_j) - \pi(\bb{X}_i))}{\pi(\bb{X}_i)^2} \frac{K_h^*(\bb{X}_i - \bb{X}_j)}{\hat{g}_n(\bb{X}_i)} \kappa_{\bb{s},b}(\bb{Y}_i) \\
&\equiv W_{n,1} + W_{n,2},
\end{aligned}
\end{equation}
with
\begin{equation}\label{eq:Wn2.bound}
\EE[W_{n,2}^2] \ll n^{-4/(p+4)}.
\end{equation}
To control $W_{n,1}$, note that as in \eqref{eq:cov.delta}, for indices $i,j,k,\ell\in \{1,\ldots,n\}$ that are all different,
\[
\begin{aligned}
\EE[(\delta_i - \pi(\bb{X}_i))(\delta_j - \pi(\bb{X}_j))(\delta_k - \pi(\bb{X}_k))(\delta_{\ell} - \pi(\bb{X}_{\ell})) \mid \bb{X}_{1:n}] &= 0, \\
\EE[(\delta_i - \pi(\bb{X}_i))^2(\delta_j - \pi(\bb{X}_j))(\delta_k - \pi(\bb{X}_k)) \mid \bb{X}_{1:n}] &= 0, \\
\EE[(\delta_i - \pi(\bb{X}_i))^3(\delta_j - \pi(\bb{X}_j)) \mid \bb{X}_{1:n}] &= 0.
\end{aligned}
\]
Therefore, when expanding the square of the double sum inside $\EE[W_{n,1}^2]$ into a quadruple sum, say $\sum_{i,j,k,\ell=1}^n$, the only $4$-tuples $(i,j,k,\ell)$ that survive, after conditioning on $\bb{X}_{1:n}$ and evaluating the conditional expectations, are the ones where all the indices are equal ($i=j=k=\ell$), or where there are two pairs of equal indices. Using Assumptions~\ref{ass:B.1}--\ref{ass:B.3}, the upper bound on the $L^2$ norm of the Dirichlet kernel in Lemma~\ref{lem:kappa.L2.norm} (under Assumption~\ref{ass:6}), and the boundedness of $f_{\bb{Y}_1 \mid \bb{X}_1}$ in Assumption~\ref{ass:6}, we obtain
\begin{equation}\label{eq:Wn1.bound}
\begin{aligned}
\EE[W_{n,1}^2]
&\ll \frac{n^{-4}}{\pi_{\mathrm{min}}^4 g_{\mathrm{min}}^2} \left\{\sum_{\substack{i,j,k,\ell=1 \\ i=j=k=\ell}}^n \EE\big[K_h^*(\bb{0}_p)^2 \EE[\kappa_{\bb{s},b}(\bb{Y}_i)^2 \mid \bb{X}_i]\big] \right. \\
&\hspace{23mm}\left.+ \sum_{\substack{i,j,k,\ell=1 \\ i\neq j,i=k,j=\ell}}^n \EE\big[K_h^*(\bb{X}_i - \bb{X}_j)^2 \EE[\kappa_{\bb{s},b}(\bb{Y}_i)^2 \mid \bb{X}_i]\big] \right. \\
&\hspace{23mm}\left.+ \sum_{\substack{i,j,k,\ell=1 \\ i\neq k,i=j,k=\ell}}^n \EE\big[K_h^*(\bb{0}_p)^2 \EE[\kappa_{\bb{s},b}(\bb{Y}_i) \mid \bb{X}_i] \EE[\kappa_{\bb{s},b}(\bb{Y}_k) \mid \bb{X}_k]\big] \right. \\[-3mm]
&\hspace{23mm}\left.+ \sum_{\substack{i,j,k,\ell=1 \\ i\neq k,i=\ell,j=k}}^n \EE\big[K_h^*(\bb{X}_i - \bb{X}_j)^2 \EE[\kappa_{\bb{s},b}(\bb{Y}_i) \mid \bb{X}_i] \EE[\kappa_{\bb{s},b}(\bb{Y}_k) \mid \bb{X}_k]\big]\right\} \\
&\ll \frac{n^{-4}}{\pi_{\mathrm{min}}^4 g_{\mathrm{min}}^2} \Big\{n h^{-2p} \, K^*(\bb{0}_p)^2 \, b^{-d/2} + n^2 \, \EE\big[K_h^*(\bb{X}_1 - \bb{X}_2)^2\big] \, b^{-d/2} \Big. \\[-1mm]
&\hspace{58mm}\Big.+ n^2 h^{-2p} \, K^*(\bb{0}_p)^2 + n^2 \, \EE\big[K_h^*(\bb{X}_1 - \bb{X}_2)^2\big]\Big\} \\
&\ll n^{-4} (n h^{-2p} b^{-d/2} + n^2 h^{-p} b^{-d/2} + n^2 h^{-2p} + n^2 h^{-p}) \\
&\ll n^{-4/(p+4)} n^{-4/(d+4)} + n^{-8/(p+4)},
\end{aligned}
\end{equation}
where the last bound is due to our choices $b\sim c n^{-2/(d+4)}$ and $h\sim \kappa n^{-1/(p+4)}$. Putting \eqref{eq:Wn1.bound} and \eqref{eq:Wn2.bound} back into \eqref{eq:Un1.decomp} yields
\begin{equation}\label{eq:Un1.bound}
\begin{aligned}
\EE[U_{n,1}^2]
&\ll \EE[W_{n,1}^2] + \EE[W_{n,2}^2] \\
&\ll (n^{-4/(p+4)} n^{-4/(d+4)} + n^{-8/(p+4)}) + n^{-4/(p+4)} \\
&\ll n^{-4/(p+4)}.
\end{aligned}
\end{equation}

Combining \eqref{eq:M1.U1.U2}, \eqref{eq:U2.V1.V2}, and \eqref{eq:V1.R1.R2} gives us the representation:
\[
M_{n,1}
= n^{-1} \sum_{i=1}^n \frac{(\delta_i - \pi(\bb{X}_i))}{\pi(\bb{X}_i)} \EE[\kappa_{\bb{s},b}(\bb{Y}_i) \mid \bb{X}_i] + R_n, \qquad
R_n = U_{n,1} + V_{n,2} + R_{n,1} + R_{n,2}.
\]
By \eqref{eq:Un1.bound}, \eqref{eq:Vn2.bound}, \eqref{eq:Rn1.bound}, and \eqref{eq:Rn2.bound}, we have
\begin{equation}\label{eq:Rn.bound.feasible}
\begin{aligned}
\EE[R_n^2]
&\ll \EE[U_{n,1}^2] + \EE[V_{n,2}^2] + \EE[R_{n,1}^2] + \EE[R_{n,2}^2] \\
&\ll n^{-4/(p+4)} + n^{-4/(p+4)} + n^{-4/(p+4)} n^{-4/(d+4)} + n^{-4/(p+4)} n^{-4/(d+4)} \\
&\ll n^{-4/(p+4)}.
\end{aligned}
\end{equation}

To conclude the proof of the proposition, write
\begin{equation}\label{eq:M.B.R}
M_{n,1} = B_n + R_n, \qquad
B_n = n^{-1} \sum_{i=1}^n \frac{\delta_i - \pi(\bb{X}_i)}{\pi(\bb{X}_i)} \EE[\kappa_{\bb{s},b}(\bb{Y}_i) \mid \bb{X}_i].
\end{equation}
It remains to show that we have, up to some error terms to be determined explicitly below,
\begin{align}
\Var(M_{n,1}) &\approx \Var(B_n), \label{eq:to.prove.1} \\
\Cov(\widetilde{f}_{n,b}(\bb{s}), M_{n,1}) &\approx \Var(B_n), \label{eq:to.prove.2}
\end{align}
so that \eqref{eq:star} becomes
\begin{equation}
\Var(\hat{f}_{n,b}(\bb{s})) \approx \Var(\widetilde{f}_{n,b}(\bb{s})) - \Var(B_n),
\end{equation}
which is the claim of the theorem.

To this end, recall that $\EE[\delta_i - \pi(\bb{X}_i) \mid \bb{X}_i] = 0$ and $\Var(\delta_i \mid \bb{X}_i) = \pi(\bb{X}_i) (1 - \pi(\bb{X}_i))$, so we have $\EE[B_n] = 0$, and by the independence of the pairs $(\bb{X}_i,\bb{Y}_i)$ and the law of total variance,
\begin{equation}\label{eq:Var.Bn.feasible}
\begin{aligned}
\Var(B_n)
&= n^{-2} \sum_{i=1}^n \left\{\EE\left[\Var\left(\frac{\delta_i - \pi(\bb{X}_i)}{\pi(\bb{X}_i)} \EE[\kappa_{\bb{s},b}(\bb{Y}_i) \mid \bb{X}_i] \mid \bb{X}_i\right)\right] + 0\right\} \\
&= n^{-2} \sum_{i=1}^n \EE\left[\frac{\Var(\delta_i \mid \bb{X}_i)}{\pi(\bb{X}_i)^2} \{\EE[\kappa_{\bb{s},b}(\bb{Y}_i) \mid \bb{X}_i]\}^2\right] \\
&= n^{-2} \sum_{i=1}^n \EE\left[\frac{1 - \pi(\bb{X}_i)}{\pi(\bb{X}_i)} \{\EE[\kappa_{\bb{s},b}(\bb{Y}_i) \mid \bb{X}_i]\}^2\right].
\end{aligned}
\end{equation}
By the representation \eqref{eq:M.B.R} and Cauchy--Schwarz, note that
\begin{equation}\label{eq:CS}
\begin{aligned}
\left|\Var(M_{n,1}) - \Var(B_n)\right|
&\leq 2 \, |\Cov(B_n,R_n)| + \Var(R_n) \\
&\leq 2 \sqrt{\Var(B_n)} \sqrt{\EE[R_n^2]} + \EE[R_n^2].
\end{aligned}
\end{equation}
By Assumptions~\ref{ass:5} and \ref{ass:6}, we have $\Var(B_n) \ll n^{-1}$. Combined with \eqref{eq:CS} and \eqref{eq:Rn.bound.feasible}, this yields
\begin{equation}\label{eq:end.0}
\Var(M_{n,1}) = \Var(B_n) + \OO(n^{-1/2} n^{-2/(p+4)}) + \OO(n^{-4/(p+4)}),
\end{equation}
which proves \eqref{eq:to.prove.1} with explicit errors.

Next, using $M_{n,1} = B_n + R_n$ again, we have
\begin{equation}\label{eq:end.1}
\Cov(\widetilde{f}_{n,b}(\bb{s}), M_{n,1})
= \Cov(\widetilde{f}_{n,b}(\bb{s}), B_n)
+ \OO\left(\sqrt{\Var(\widetilde{f}_{n,b}(\bb{s}))} \sqrt{\EE[R_n^2]}\right).
\end{equation}
By Proposition~\ref{prop:variance.pseudo} and \eqref{eq:Rn.bound.feasible}, the error term is bounded by
\begin{equation}\label{eq:end.2}
\sqrt{\Var(\widetilde{f}_{n,b}(\bb{s}))} \sqrt{\EE[R_n^2]}
\ll_{\bb{s}} \sqrt{n^{-1} b^{-d/2}} \sqrt{n^{-4/(p+4)}}
\ll n^{-2/(d+4)} n^{-2/(p+4)}.
\end{equation}
Now, since $\EE[\delta_i - \pi(\bb{X}_i) \mid \bb{X}_i] = 0$ and $\delta_i$ is conditionally independent of $\bb{Y}_i$ given $\bb{X}_i$, we have
\[
\begin{aligned}
\Cov(\widetilde{f}_{n,b}(\bb{s}), B_n)
&= n^{-2} \sum_{i=1}^n \Cov\left(\frac{\delta_i}{\pi(\bb{X}_i)} \kappa_{\bb{s},b}(\bb{Y}_i), \frac{\delta_i - \pi(\bb{X}_i)}{\pi(\bb{X}_i)} \EE[\kappa_{\bb{s},b}(\bb{Y}_i) \mid \bb{X}_i]\right) \\
&= n^{-2} \sum_{i=1}^n \EE\left[\frac{\delta_i (\delta_i - \pi(\bb{X}_i))}{\pi(\bb{X}_i)^2} \kappa_{\bb{s},b}(\bb{Y}_i) \EE[\kappa_{\bb{s},b}(\bb{Y}_i) \mid \bb{X}_i]\right] \\
&= n^{-2} \sum_{i=1}^n \EE\left[\EE\left[\frac{\delta_i (\delta_i - \pi(\bb{X}_i))}{\pi(\bb{X}_i)^2} \mid \bb{X}_i\right] \{\EE[\kappa_{\bb{s},b}(\bb{Y}_i) \mid \bb{X}_i]\}^2\right].
\end{aligned}
\]
Furthermore, $\delta_i^2 = \delta_i$, so we obtain
\[
\EE\left[\frac{\delta_i (\delta_i - \pi(\bb{X}_i))}{\pi(\bb{X}_i)^2} \mid \bb{X}_i\right]
= \EE\left[\frac{\delta_i}{\pi(\bb{X}_i)^2} - \frac{\delta_i}{\pi(\bb{X}_i)} \mid \bb{X}_i\right]
= \frac{1}{\pi(\bb{X}_i)} - 1
= \frac{1 - \pi(\bb{X}_i)}{\pi(\bb{X}_i)}.
\]
Therefore, by \eqref{eq:Var.Bn.feasible},
\begin{equation}\label{eq:end.3}
\Cov(\widetilde{f}_{n,b}(\bb{s}), B_n)
= n^{-2} \sum_{i=1}^n \EE\left[\frac{1 - \pi(\bb{X}_i)}{\pi(\bb{X}_i)} \{\EE[\kappa_{\bb{s},b}(\bb{Y}_i) \mid \bb{X}_i]\}^2\right]
= \Var(B_n).
\end{equation}
Putting \eqref{eq:end.1} and \eqref{eq:end.2} into \eqref{eq:end.3} yields
\begin{equation}\label{eq:end.4}
\Cov(\widetilde{f}_{n,b}(\bb{s}), M_{n,1}) = \Var(B_n) + \OO(n^{-2/(d+4)} n^{-2/(p+4)}),
\end{equation}
which proves \eqref{eq:to.prove.2} with explicit errors.

Finally, putting \eqref{eq:end.0} and \eqref{eq:end.4} back into \eqref{eq:star} yields
\[
\begin{aligned}
\Var(\hat{f}_{n,b}(\bb{s}))
&= \Var(\widetilde{f}_{n,b}(\bb{s})) - \Var(B_n) + \OO(n^{-1/2} n^{-2/(p+4)}) + \OO(n^{-4/(p+4)}) + \OO(n^{-2/(d+4)} n^{-2/(p+4)}) \\
&= \Var(\widetilde{f}_{n,b}(\bb{s})) - \Var(B_n) + \OO(n^{-4/(p+4)}) + \OO(n^{-2/(p+4)} n^{-2/(d+4)}).
\end{aligned}
\]
This completes the proof of the proposition.

\subsection{Proof of Theorem~\ref{thm:CLT.feasible}}

Let $\bb{s}\in \mathrm{Int}(\mathcal{S}_d)$ be given such that $f(\bb{s})\in (0,\infty)$. We choose $b\sim c n^{-2/(d+4)}$ and $h\sim \kappa n^{-1/(p+4)}$ for some $c,\kappa\in (0,\infty)$. Start from the decomposition
\[
\begin{aligned}
n^{1/2} b^{d/4} \frac{\hat{f}_{n,b}(\bb{s})-f(\bb{s})-b\phi(\bb{s})}{\sqrt{\psi(\bb{s}) f(\bb{s}) (1 + \zeta(\bb{s}))}}
&= n^{1/2} b^{d/4} \frac{\widetilde f_{n,b}(\bb{s})-f(\bb{s})-b\phi(\bb{s})}{\sqrt{\psi(\bb{s}) f(\bb{s}) (1 + \zeta(\bb{s}))}} \\
&\quad+ n^{1/2} b^{d/4} \frac{\hat{f}_{n,b}(\bb{s})-\widetilde f_{n,b}(\bb{s})}{\sqrt{\psi(\bb{s}) f(\bb{s}) (1 + \zeta(\bb{s}))}}.
\end{aligned}
\]
The first term on the right-hand side converges in law to $\mathcal{N}(0,1)$ by Theorem~\ref{thm:CLT.pseudo} with our choice of smoothing parameter $b\sim c n^{-2/(d+4)}$. To conclude, it is sufficient to prove the $L^2$ convergence:
\[
n^{1/2} b^{d/4} (\hat{f}_{n,b}(\bb{s}) - \widetilde f_{n,b}(\bb{s})) \stackrel{L^2}{\longrightarrow} 0, \quad n\to \infty.
\]

To compare $\hat{f}_{n,b}$ and $\widetilde f_{n,b}$, recall the identity \eqref{eq:hat.f.expansion.2}:
\[
\hat{f}_{n,b}(\bb{s}) - \widetilde f_{n,b}(\bb{s}) = \sum_{j=1}^{\infty} (-1)^j \, M_{n,j}, \qquad M_{n,j} = n^{-1} \sum_{i=1}^n \frac{\delta_i}{\pi(\bb{X}_i)^{j+1}} \big(\hat\pi_i(\bb{X}_{1:n})-\pi(\bb{X}_i)\big)^j \, \kappa_{\bb{s},b}(\bb{Y}_i).
\]
By the summable bound $\smash{\EE[M_{n,j}^2] \ll n^{-4j/(p+4)}}$ obtained in \eqref{eq:bound.M2} in the proof of Proposition~\ref{prop:variance.feasible} (with our choices of $b\sim c n^{-2/(d+4)}$ and $h\sim \kappa n^{-1/(p+4)}$), and the Cauchy-Schwarz inequality, we have
\[
\sum_{j=1}^{\infty} \EE[M_{n,j}^2] \ll \sum_{j=1}^{\infty} n^{-4j/(p+4)} \ll n^{-4/(p+4)},
\]
and
\[
\sum_{j=1}^{\infty} \sum_{\substack{j'=1 \\ j'\neq j}}^{\infty} |\EE[M_{n,j} M_{n,j'}]| \leq \sum_{j=1}^{\infty} \sum_{\substack{j'=1 \\ j'\neq j}}^{\infty} \sqrt{\EE[M_{n,j}^2]} \sqrt{\EE[M_{n,j'}^2]} \ll \sum_{j=1}^{\infty} \sum_{\substack{j'=1 \\ j'\neq j}}^{\infty} n^{-2(j+j')/(p+4)} \ll n^{-6/(p+4)}.
\]
It follows that if $p < d$, then
\[
\EE\left[\left\{n^{1/2} b^{d/4} (\hat{f}_{n,b}(\bb{s}) - \widetilde f_{n,b}(\bb{s}))\right\}^2\right] \ll n^{4/(d+4)} n^{-4/(p+4)} \to 0, \quad n\to \infty,
\]
completing the proof.

\section{Technical lemmas}\label{sec:tech.lemmas}

The first lemma studies the asymptotics of the $L^2$ norm of the Dirichlet kernel.

\begin{lemma}\label{lem:kappa.L2.norm}
Let $\bb{s}\in \mathrm{Int}(\mathcal{S}_d)$ be given. We have, as $b\to 0$,
\[
\left\{\int_{\mathcal{S}_d} \kappa_{\bb{s},b}(\bb{x})^2 \rd \bb{x}\right\}^{1/2} = b^{-d/4} \psi^{1/2}(\bb{s}) \{1 + \OO_{\bb{s}}(b)\},
\]
where $\psi$ is defined in \eqref{eq:psi.zeta}.
\end{lemma}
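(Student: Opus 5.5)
The plan is to compute $\int_{\mathcal{S}_d}\kappa_{\bb{s},b}^2$ exactly in terms of Gamma functions and then expand with Stirling's formula. Write $\alpha_i = s_i/b+1$ for $i\in[d]$, $\beta=(1-\|\bb{s}\|_1)/b+1$, and $A=\|\bb{\alpha}\|_1+\beta = 1/b+d+1$. Squaring the Dirichlet density gives, up to its normalizing constant, the unnormalized $\mathrm{Dirichlet}(2\bb{\alpha}-\bb{1},2\beta-1)$ density. Integrating it with the Dirichlet normalization therefore yields the exact identity
\[
\int_{\mathcal{S}_d}\kappa_{\bb{s},b}(\bb{x})^2\,\rd\bb{x}
=\frac{\Gamma(A)^2}{\Gamma(2A-d-1)}\cdot\frac{\Gamma(2\beta-1)}{\Gamma(\beta)^2}\prod_{i=1}^d\frac{\Gamma(2\alpha_i-1)}{\Gamma(\alpha_i)^2}.
\]
All parameters $2\alpha_i-1$ and $2\beta-1$ are positive, so this is legitimate.

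Next, each ratio of the form $\Gamma(2a-1)/\Gamma(a)^2$ is simplified with the Legendre duplication formula. Writing $\Gamma(2a-1)=\Gamma(2a)/(2a-1)$ and $\Gamma(2a)=2^{2a-1}\Gamma(a)\Gamma(a+1/2)/\sqrt{\pi}$ gives
\[
\frac{\Gamma(2a-1)}{\Gamma(a)^2}=\frac{2^{2a-1}}{\sqrt{\pi}\,(2a-1)}\cdot\frac{\Gamma(a+1/2)}{\Gamma(a)}.
\]
Stirling's formula gives $\Gamma(a+1/2)/\Gamma(a)=\sqrt{a}\,(1+\OO(a^{-1}))$, hence
\[
\frac{\Gamma(2a-1)}{\Gamma(a)^2}=\frac{2^{2a-2}}{\sqrt{\pi a}}\,(1+\OO(a^{-1})).
\]
The same computation, with $2A-d-1$ in place of $2a-1$, handles the factor $\Gamma(A)^2/\Gamma(2A-d-1)$. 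There the shift by $d+1$ only produces a factor $(2A)^{d}$ times $(1+\OO(A^{-1}))$, via $\Gamma(2A)/\Gamma(2A-d-1)=(2A)^{d+1}(1+\OO(A^{-1}))$.

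Collecting the factors, the powers of $2$ cancel up to a constant: the numerator exponents sum to $\sum_i(2\alpha_i-2)+(2\beta-2)=2A-2(d+1)$, which is to be matched against the $2^{2A-\ldots}$ coming from the $A$-factor. What remains is
\[
\int_{\mathcal{S}_d}\kappa_{\bb{s},b}^2 = \frac{(\text{const})\,A^{d+1/2}}{\pi^{d/2}\,\beta^{1/2}\prod_{i=1}^d\alpha_i^{1/2}}\,(1+\OO_{\bb{s}}(b)).
\]
Substituting $A=b^{-1}(1+\OO(b))$, $\alpha_i=b^{-1}s_i(1+\OO_{\bb{s}}(b))$ and $\beta=b^{-1}(1-\|\bb{s}\|_1)(1+\OO_{\bb{s}}(b))$ produces
\[
b^{-d/2}\{(4\pi)^d(1-\|\bb{s}\|_1)\textstyle\prod_i s_i\}^{-1/2}(1+\OO_{\bb{s}}(b))=b^{-d/2}\psi(\bb{s})(1+\OO_{\bb{s}}(b)).
\]
Taking square roots, and using $\sqrt{1+\OO(b)}=1+\OO(b)$, gives the claim.

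The main obstacle is the bookkeeping of the constants: getting the power of $2$ and of $\pi$ exactly right so that the constant is $(4\pi)^{-d/2}$, and making sure every Stirling error is genuinely $\OO_{\bb{s}}(b)$. For the latter I will use that $\bb{s}$ is a fixed interior point, so $\min(s_1,\ldots,s_d,1-\|\bb{s}\|_1)>0$. Hence all of $\alpha_i,\beta,A$ are $\asymp_{\bb{s}} b^{-1}$, and each relative error $\OO(a^{-1})$ is $\OO_{\bb{s}}(b)$; there are only finitely many ($d+2$) such factors. As a sanity check on the constant, I will verify the case $d=1$ against the known beta-kernel result $\int\kappa^2\sim b^{-1/2}/(2\sqrt{\pi s(1-s)})$.
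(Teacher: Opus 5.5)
Your proposal is correct. The exact identity $\int_{\mathcal{S}_d}\kappa_{\bb{s},b}^2=\frac{\Gamma(A)^2}{\Gamma(2A-d-1)}\frac{\Gamma(2\beta-1)}{\Gamma(\beta)^2}\prod_{i}\frac{\Gamma(2\alpha_i-1)}{\Gamma(\alpha_i)^2}$, combined with the duplication and Stirling expansions, gives $2^{-d}\pi^{-d/2}A^{d+1/2}(\beta\prod_i\alpha_i)^{-1/2}(1+\OO_{\bb{s}}(b))=b^{-d/2}\psi(\bb{s})(1+\OO_{\bb{s}}(b))$, so the constant is indeed $(4\pi)^{-d/2}$. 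The paper only cites Lemma~1 of \citet{DKO2026}, which traces back to \citet{MR4319409} and rests on the same Gamma-function and Stirling computation, so your argument is essentially the paper's route written out in a self-contained way.
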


\begin{proof}[Proof of Lemma~\ref{lem:kappa.L2.norm}]
See, e.g., Lemma~1 of \citet{DKO2026}.
\end{proof}

The second lemma gives a uniform upper bound on the Dirichlet kernel $\bb{x}\mapsto \kappa_{\bb{s},b}(\bb{x})$ in $\mathcal{S}_d$.

\begin{lemma}\label{lem:local.bound}
Let $\bb{s}\in \mathcal{S}_d$ be given. We have, as $b\to 0$,
\[
\max_{\bb{x}\in \mathcal{S}_d} \kappa_{\bb{s},b}(\bb{x}) \ll b^{-d/2} \psi(\bb{s}),
\]
where $\psi$ is defined in \eqref{eq:psi.zeta}.
\end{lemma}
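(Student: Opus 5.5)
The plan is to bound the Dirichlet density $\kappa_{\bb{s},b}$ at its mode and then evaluate the normalizing constant with Stirling's formula. Write $\alpha_i = s_i/b + 1$ for $i\in[d]$ and $\beta = (1-\|\bb{s}\|_1)/b + 1$, and set $s_{d+1} = 1-\|\bb{s}\|_1$. Since all parameters are at least $1$, the map $\bb{x}\mapsto (1-\|\bb{x}\|_1)^{\beta-1}\prod_i x_i^{\alpha_i-1}$ is maximized over $\mathcal{S}_d$ at $x_i = (\alpha_i-1)/(\|\bb{\alpha}\|_1+\beta-d-1) = s_i$. At that point its value is $\prod_{i=1}^{d+1} s_i^{s_i/b}$, using the convention $0^0=1$. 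Therefore
\[
\max_{\bb{x}\in\mathcal{S}_d}\kappa_{\bb{s},b}(\bb{x}) = \frac{\Gamma(1/b + d + 1)}{\prod_{i=1}^{d+1}\Gamma(s_i/b+1)} \prod_{i=1}^{d+1} s_i^{s_i/b}.
\]

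I will first treat $\bb{s}\in\mathrm{Int}(\mathcal{S}_d)$, the case used in the proof of Theorem~\ref{thm:CLT.pseudo}. I apply Stirling's formula $\Gamma(z+1) = \sqrt{2\pi z}\,(z/e)^z(1+\OO(1/z))$ to each factor, together with $\Gamma(1/b+d+1) = \Gamma(1/b+1)\prod_{k=1}^{d}(1/b+k) \asymp b^{-d}\Gamma(1/b+1)$. The exponential factors $e^{-1/b}$ and $e^{\sum_i s_i/b}$ cancel because $\sum_{i=1}^{d+1} s_i = 1$. The power factors $(1/b)^{1/b}$ and $\prod_i (s_i/b)^{s_i/b}$ leave exactly $\prod_i s_i^{-s_i/b}$, and this cancels the mode value. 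What remains is
\[
\frac{b^{-d}\sqrt{2\pi/b}}{\prod_{i=1}^{d+1}\sqrt{2\pi s_i/b}} = b^{-d/2}(2\pi)^{-d/2}\Big(\prod_{i=1}^{d+1}s_i\Big)^{-1/2}.
\]
This equals $2^{d/2}\,b^{-d/2}\psi(\bb{s})$, which gives the bound with an absolute constant.

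The main obstacle is uniformity when some $s_i$ is small, since Stirling's formula with relative error $\OO(b/s_i)$ is then not accurate. I would handle this with the two-sided bound $\Gamma(z+1) \geq c\sqrt{z+1}\,(z/e)^z$, valid for all $z\geq0$ with a universal $c>0$; for $z=s_i/b$ near $0$ it reduces to $\Gamma(z+1)\geq c'$. Using this lower bound in each denominator factor gives
\[
\max_{\bb{x}\in\mathcal{S}_d}\kappa_{\bb{s},b}(\bb{x}) \ll b^{-d/2}\prod_{i=1}^{d+1}(s_i + b)^{-1/2}.
\]
This is at most a constant times $b^{-d/2}\psi(\bb{s})$ whenever $\psi(\bb{s})$ is finite. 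On the boundary $\psi=\infty$, so the claimed inequality holds trivially there.

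If "$\ll$" is read with the implicit constant allowed to depend on $\bb{s}$, as in the use $\ll_{\bb{s}}$ in the proofs above, then the interior Stirling computation alone already suffices.
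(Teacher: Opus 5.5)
Your proof is correct. The maximum of $\kappa_{\bb{s},b}$ is attained at the mode $\bb{x}=\bb{s}$, and Stirling's formula gives $\kappa_{\bb{s},b}(\bb{s})\sim 2^{d/2}b^{-d/2}\psi(\bb{s})$ for interior $\bb{s}$. Your lower bound $\Gamma(z+1)\geq c\sqrt{z+1}\,(z/e)^z$ legitimately upgrades this to the uniform bound $b^{-d/2}\prod_{i=1}^{d+1}(s_i+b)^{-1/2}$, and the boundary case is trivial since $\psi=\infty$ there. The paper does not prove this lemma itself but cites Lemma~2 of \citet{MR4319409}, and your mode-plus-Stirling computation is the standard argument for a bound of this kind.
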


\begin{proof}[Proof of Lemma~\ref{lem:local.bound}]
See Lemma~2 of \citet{MR4319409}.
\end{proof}

\appendix

\section{List of acronyms}\label{app:acronyms}

\begin{tabular}{llll}
&alr    &\hspace{20mm} &additive log-ratio \\
&BMI    &\hspace{20mm} &body mass index \\
&CBC    &\hspace{20mm} &complete blood count \\
&iid    &\hspace{20mm} &independent and identically distributed \\
&ilr    &\hspace{20mm} &isometric log-ratio \\
&IPW    &\hspace{20mm} &inverse probability weighted \\
&IQR    &\hspace{20mm} &interquartile range \\
&ISE    &\hspace{20mm} &integrated squared error \\
&KDE    &\hspace{20mm} &kernel density estimator \\
&LSCV   &\hspace{20mm} &least-squares cross-validation \\
&MAR    &\hspace{20mm} &missing at random \\
&MSE    &\hspace{20mm} &mean squared error \\
&NHANES &\hspace{20mm} &National Health and Nutrition Examination Survey \\
&SD     &\hspace{20mm} &standard deviation
\end{tabular}

\section{Reproducibility}\label{app:code}

The \textsf{R} code that generated the figures, the simulation study results, and the real-data application is available online in the GitHub repository of \citet{DaayebOuimet2026github}.

\section*{Funding}

Fr\'ed\'eric Ouimet is supported by the Natural Sciences and Engineering Research Council of Canada (NSERC) through Discovery Grant RGPIN-2026-04471 and Discovery Launch Supplement DGECR-2026-00449.



\section*{References}
\addcontentsline{toc}{chapter}{References}

\setlength{\bibsep}{0pt plus 0.3ex}

\bibliographystyle{plainnat}
\bibliography{bib}

\end{document}